\title{\textbf{\Large Limit order books, diffusion approximations and reflected SPDEs: from microscopic to macroscopic models}}
\author{Ben Hambly\footnote{ \href{mailto:ben.hambly@maths.ox.ac.uk}{ben.hambly@maths.ox.ac.uk}}\hspace{2mm}, \hspace{1mm} Jasdeep Kalsi\footnote{\href{mailto:jamespnewbury@gmail.com}{jasdeep.kalsi@maths.ox.ac.uk}} \hspace{2mm} and James Newbury \footnote{\href{mailto:}{james.newbury@merton.oxon.org}}\\\\
Mathematical Institute, University of Oxford\\\\}
\date{\today}
\newtheorem{thm}{Theorem}[section]
\newtheorem{cor}[thm]{Corollary}
\newtheorem{defn}[thm]{Definition}
\newtheorem{eg}[thm]{Example}
\newtheorem{prop}[thm]{Proposition}
\newtheorem{rem}[thm]{Remark}
\newlength{\bibitemsep}\setlength{\bibitemsep}{.2\baselineskip plus .05\baselineskip minus .05\baselineskip}
\newlength{\bibparskip}\setlength{\bibparskip}{0pt}
\let\oldthebibliography\thebibliography
\renewcommand\thebibliography[1]{%
  \oldthebibliography{#1}%
  \setlength{\parskip}{\bibitemsep}%
  \setlength{\itemsep}{\bibparskip}%
}
\numberwithin{equation}{section}
\newcommand{\xRightarrow}[2][]{\ext@arrow 0359\Rightarrowfill@{#1}{#2}}
\begin{document}

\maketitle
\begin{abstract}
Motivated by a zero-intelligence approach, the aim of this paper is to connect the microscopic (discrete price and volume), mesoscopic (discrete price 
and continuous volume) and macroscopic (continuous price and volume) frameworks for the modelling of limit order books, with a view to providing a 
natural probabilistic description of their behaviour in a high to ultra high-frequency setting. Starting with a microscopic framework, we first examine the limiting 
behaviour of the order book process when order arrival and cancellation rates are sent to infinity and when volumes are considered to be of infinitesimal 
size. We then consider the transition between this mesoscopic model and a macroscopic model for the limit order book, obtained by letting the tick size tend to zero. 
The macroscopic limit can then be described using reflected SPDEs which typically arise in stochastic interface models. We then use financial data to 
discuss a possible calibration procedure for the model and illustrate numerically how it can reproduce observed behaviour of prices. This could then 
be used as a market simulator for short-term price prediction or for testing optimal execution strategies.
\end{abstract}
%\textbf{Key words:} \textit{limit order book, zero-intelligence, high-frequency trading, weak convergence, diffusion approximation, system of reflected SDEs, reflected SPDE, order flow imbalance, static/dynamic price settings, microscopic/mesoscopic/macroscopic systems.}

%\begin{center}
%\includegraphics[scale=0.65]{figpres1.eps} 
%\end{center}

\section{Introduction}\label{Introduction}

The rising prevalence of order-driven markets in recent years has generated a significant interest in 
the modelling of limit order books, for an overview see the survey paper \cite{GP}. 
In such markets, three specific types of orders can be submitted. 
Firstly, limit orders are orders to buy or sell a designated number of shares at a specified 
price or better. Secondly, market orders are orders to immediately buy or sell a certain number of 
shares at the best available price. Finally, cancellation orders enable a market participant to 
cancel an existing limit order. Whilst market orders are instantly matched against the best available 
limit orders of the opposite quote, the collection of unexecuted and uncancelled limit orders is 
recorded in the limit order book (LOB), according to price and time priority. In limit order book 
terminology, the bid refers to the price of the best limit buy order, whereas the ask designates 
the price of the best sell order. The average of the bid and the ask is referred to as the mid. 
Two other quantities of interest are the spread, which corresponds to the difference between the ask 
and the bid, and the tick size, which is the smallest price increment in the market. 

There is now a large amount of high-quality financial time series of orders enabling one to conduct 
statistical analyses of various limit order book features and to guide the development of models. 
A good model for the order book should describe the evolution of prices and capture some of 
the stylised facts that are observed in financial time series. It can then be used for examining strategies 
for order placement and for the optimal execution of large orders. 

Limit order book models have been developed by two independent schools of 
thought. The first one, initiated by economists, has been based on a `perfect-rationality' approach, where 
market participants all employ optimal strategies to place limit orders.
The second one, led by econophysicists and mathematicians, has been associated with a `zero-intelligence' 
framework, that is limit order arrivals can be viewed as a purely random process and no strategic
order placement is taken into account.

Within the realm of perfect-rationality, where order flow is considered as static, the central 
issue concerns the strategic trading decisions of agents in which they maximise 
their individual utility. Notable models in the perfect-rationality literature include those due to 
Mendelson~\cite{M}, who analysed the statistical behaviour of the market from a clearing house
perspective, Kyle~\cite{AK}, where the question of insider trading with sequential auctions is 
addressed, Ro\c{s}u~\cite{R}, who introduced the notion of optimal choice between market orders 
and limit orders, and Almgren and Chriss~\cite{AC}, where a model for the optimal execution of large 
orders is developed.

In the zero-intelligence approach the order flow is treated as dynamic and 
the focus is shifted to the random nature of order arrivals. One of the first models dealing with 
this was developed by Kruk~\cite{K}, where he established a functional limit theorem for the order 
flow in a continuous double-auction setting. More recently, there has been a significant interest 
in modelling the book as a multiclass queueing system, \cite{AJ},~\cite{BC},~\cite{CS},~\cite{MT},~\cite{HL}. 
In order to deal with a market where orders are submitted at 
high frequency, Cont and Larrard \cite{CL2} considered a heavy traffic approximation of the 
order book process from a queueing theory perspective. The order book is reduced to the 
best bid and ask queues: once the bid (or ask) 
queue has been depleted, it takes a new value drawn from a stationary distribution representing 
the depth of the order book after a price change. 

Over the years, there have also been a number of attempts to 
establish scaling limits for the full order book.  One of the first papers to explore this direction is \cite{BoCe}, where
the order book is modelled as a two-species interacting 
particle system and a hydrodynamic limit is obtained for the associated empirical process. This 
particle system approach is also used by \cite{DG}, where their limit is actually 
an ODE with a constant price.
% and by Cont in~\cite{Co}, whose free boundary limit is in the same 
%spirit as the pioneering work of Lasry and Lions~\cite{LL} in mean field games. 
The work of \cite{LRS} obtains a limit for one side of the order book as a measure-valued process. In~\cite{BHQ} 
and~\cite{HP} Horst et al. establish functional limit theorems for two-sided order books and obtain 
PDE or SPDE limits depending on the initial scaling procedure. Finally, Zheng~\cite{Zheng} 
(using the results of Kim et al.~\cite{KSZ} on stochastic Stefan problems) and M\"{u}ller~\cite{Mu} 
and \cite{KRM} developed models for the order book as a stochastic free boundary problem.

\subsection{From micro to macro models}

Motivated by a zero-intelligence approach, the aim of this paper is to bridge the gap between \textbf{microscopic} 
(discrete volume and price), \textbf{mesoscopic} (continuous volume and discrete price) and \textbf{macroscopic} 
(continuous volume and price) models of limit order books. The financial context of our study is the following: we 
consider an order-driven market where orders and cancellations are submitted at very high frequency. Starting with 
a discrete-space model describing the microscopic evolution of the order book, we prove that by sending order arrival 
and cancellation rates to infinity and by rescaling order volumes, the behaviour of the book can be described in terms 
of a system of coupled stochastic differential equations. This is what we call the mesoscopic limiting process. Next, by sending 
the tick size to zero, we derive a macroscopic SPDE limit from the mesoscopic process.

Even though we send the tick size to zero we wish to capture the fact that in a high frequency trading environment 
the price changes are comparatively rare in the evolution of the order book. Thus we will consider our model for the book 
as generating price changes at a much lower frequency, so there is a natural separation of time scales. Our price changes 
will be a macroscopic tick movement occurring as a result of imbalances created in the book by the order flow. 
Our main mathematical result is Theorem~\ref{mesodynamic} showing that the queueing system converges weakly to 
a reflected SPDE for the dynamics of the order book along with a discrete price process evolving in a realistic way.

An outline of the paper is as follows. In the next section we will develop our microscopic model. We allow quite a degree of 
flexibility in the arrival rates and cancellation rates of orders and show in Theorem~\ref{microstatic}, that by letting the 
volume size of orders go to 0 as their rate of arrival goes to infinity that the microscopic system has a scaling limit which 
is a coupled system of stochastic differential equations. The initial result is for the static order book and we then show how 
to incorporate price changes which are functions of the two sides of the order book. In Section 3 we let the tick size go to 
zero and show that this system converges weakly to a reflected SPDE. In our main result (Theorem~\ref{mesodynamic}) 
we also incorporate price changes to get a full model for the order book and the discrete price dynamics it generates.
In Section~4 we illustrate with some examples how our general framework can incorporate some natural models. 
In Section~5 we develop the numerical application for our framework. By considering data from LOBSTER we show how 
to determine some of the parameters in a simple version of the model and how it will produce realistic order book profiles 
and prices series. The proofs of the Theorems are then given in the Appendix.

\section{Diffusion Approximations: From Microscopic to Mesoscopic Models}

We begin this section by introducing a simple microscopic order book model, where order arrivals and cancellations are driven by Poisson processes. 
One could also use more complicated point processes as the basis for the model, for instance Hawkes processes. We choose Poisson processes 
both to simplify the analysis, and since such dynamics can be considered an approximation to a large class of point processes which are natural in this context. Other papers which assume Poisson driven order book dynamics include \cite{AJ}, \cite{BHQ}, \cite{CS}, \cite{DG}, \cite{LRS} and \cite{MT}. 
The rates for order arrivals, market orders and cancellations will be allowed to depend on the current mid price, the price relative to the mid and the 
number of orders currently on the book at the point in question. 

We will first consider a model for the evolution of the book in between price changes, 
and will then add price changes by introducing stopping times which depend on how the book has evolved. This will allow us to easily maintain a 
separation of time-scales between the evolution of the order book profiles and the corresponding price process when passing to mesoscopic and 
macroscopic limits, where we will scale time and space for our models. Maintaining this separation is sensible since, typically, price movements 
occur on a significantly slower time-scale to order book events. For example, examining order book data for the SPDR Trust Series I from June 21 2012 
between 11:00am and 12:00pm, we see that there were 840549 order book events and approximately 2453 price changes.

Once our microscopic model has been described, our goal is then to establish a diffusion approximation for the model. This is done initially for the static model, where price changes are not included, and then for the dynamic case. 

%\begin{rem}
%
%Our microscopic order book model is driven by Poisson processes.
%
%\end{rem}

\subsection{The Discrete Order Book Process in a Static Setting}

In this section we describe the dynamics of our microscopic order book model when it is static i.e. we describe its behaviour in between price changes. The index $n$ here will be used later in order to take a diffusive scaling of the model. 

We will work on a relative price grid in that, at any given time, the $i^{\textrm{th}}$ price point of the bid/ask side of the book refers to the price which is $i$ ticks away from the best bid/ask respectively. The grid is given by $\left\{0,1,2,...,N \right\}$ for some $N \in \mathbb{N}$ . For every $n \in \mathbb{N}$, we consider two $N-1$- dimensional processes, $Z_n^b=(Z_n^{b,1},...Z_n^{b,N-1})$ and $Z_n^a=(Z_n^{a,1},...Z_n^{a,N-1})$, each taking values in $\mathbb{Z}^{N-1}$ and representing the limit order volumes currently on the bid and ask sides respectively of the static discrete order book process. The mid price $m \in \mathbb{R}$ is taken to be fixed here, and by convention we think of the spread as being constantly equal to two ticks. For each $i \in \left\{1,2,...,N-1 \right\}$, $Z_n^{b,i}$ then represents the number of outstanding orders to buy at price $m-i$, whilst $Z_n^{a,i}$ represents the number of outstanding limit orders to sell at price $m+i$. Order and cancellation sizes are assumed to be 1 in the static setting, although the results here can easily be adapted to the case where order sizes are assumed only to be bounded. We choose the rates at which different orders arrive in our model such that they possess the following three features.

\begin{enumerate}
\item At each price level, there is a common high frequency rate for limit orders and cancellation/market orders. We allow for these rates to be dependent on the relative price (with respect to the mid), the current position of the mid and the number of offers currently at that price. These terms are intended to capture the effects of high frequency trading.
\item Residual imbalance between limit orders and cancellations/market orders at different price levels gives lower frequency terms. These are once again allowed to be dependent on price, the current midprice and the number of offers at that price. 
\item Orders undergo a lower frequency random walk. This is intended to capture the effects of traders repositioning their offers in the book, as many of the cancellations that occur will be quickly followed by a limit order at an adjacent queue. We assume that each order moves to a neighbouring queue at a certain rate. This will have a smoothing effect on the profile of the order book.

\end{enumerate}
Altogether, this motivates the following description for the dynamics of the bid side of the order book in our model.  To simplify the notation here, we define  $e_i$ to be the usual basis functions for $\mathbb{R}^{N-1}$ for $i=1,2,...,N-1$, and we use the convention $e_0=e_N=0$.

\begin{enumerate}[(i)]
\item For $i \in \left\{1,...,N-1 \right\}$, $Z_n^b \rightarrow Z_n^b + e_i$ at exponential rate 
\begin{equation*}\begin{split}
\frac{1}{2} & \sigma_{b,m,n}^2 (i, Z_n^{b,i})\left( 1+ \mathbbm{1}_{\left\{Z_n^{b,i}=0\right\}} \right) + f_{b,m,n}\left(i, Z_n^{b,i} \right).
\end{split}
\end{equation*}
\item For $i \in \left\{1,...,N-1 \right\}$, $Z_n^b \rightarrow Z_n^b - e_i$ at exponential rate 
\begin{equation*}
\begin{split}
\frac{1}{2} &  \sigma_{b,m,n}^2 (i, Z_n^{b,i}) \mathbbm{1}_{\left\{Z_n^{b,i} \geq 1 \right\}} + g_{b,m,n}\left(i, Z_n^{b,i} \right)\mathbbm{1}_{\left\{Z_n^{b,i} \geq 1 \right\}}.
\end{split}
\end{equation*}
\item  For $i \in \left\{1,...,N-1 \right\}$, $Z_n^b \rightarrow Z_n^b + e_{i-1}-  e_i$ at exponential rate 
\begin{equation*}
\alpha_{b,n}Z_n^{b,i}
\end{equation*}
\item  For $i \in \left\{1,...,N-1 \right\}$, $Z_n^b \rightarrow Z_n^b + e_{i+1}-  e_i$ at exponential rate 
\begin{equation*}
\alpha_{b,n}Z_n^{b,i}
\end{equation*}
\end{enumerate}
Similarly, the dynamics of the ask side of the book are given by:
\begin{enumerate}[(i)]
\item For $i \in \left\{1,...,N-1 \right\}$, $Z_n^a \rightarrow Z_n^a + e_i$ at exponential rate 
\begin{equation*}\begin{split}
\frac{1}{2} & \sigma_{a,m,n}^2 (i, Z_n^{a,i})\left( 1+ \mathbbm{1}_{\left\{Z_n^{a,i}=0\right\}} \right) + f_{a,m,n}\left(i, Z_n^{a,i} \right).
\end{split}
\end{equation*}
\item For $i \in \left\{1,...,N-1 \right\}$, $Z_n^a \rightarrow Z_n^a - e_i$ at exponential rate 
\begin{equation*}
\begin{split}
\frac{1}{2} &  \sigma_{a,m,n}^2 (i, Z_n^{a,i}) \mathbbm{1}_{\left\{Z_n^{a,i} \geq 1 \right\}} + g_{a,m,n}\left(i, Z_n^{a,i} \right)\mathbbm{1}_{\left\{Z_n^{a,i} \geq 1 \right\}}.
\end{split}
\end{equation*}
\item  For $i \in \left\{1,...,N-1 \right\}$, $Z_n^a \rightarrow Z_n^a+e_{i-1}-  e_i$ at exponential rate 
\begin{equation*}
\alpha_{a,n}Z_n^{a,i}
\end{equation*}
\item  For $i \in \left\{1,...,N-1 \right\}$, $Z_n^a \rightarrow Z_n^a +e_{i+1}-  e_i$ at exponential rate 
\begin{equation*}
\alpha_{a,n}Z_n^{a,i}
\end{equation*}
\end{enumerate}

In the above we have that: 
\begin{enumerate}[(a)]
\item For $k \in \{b,a\}$, every $n$ and every $m \in \mathbb{R}$, $\sigma_{k,m,n}$ is a map from $\left\{1,2,...,N-1\right\} \times \mathbb{N} \rightarrow \mathbb{R}^+$.
\item For $k \in \{b,a\}$, every $n$ and every $m \in \mathbb{R}$, $f_{k,m,n}$ and $g_{k,m,n}$ are maps from $\left\{1,2,...,N-1\right\} \times \mathbb{N} \rightarrow \mathbb{R}^+.$ 
\end{enumerate}

\begin{rem} {\rm We remark here that market orders have the same impact on the profile of the book as cancellations at the best price levels. Market orders are therefore accounted for in these static dynamics.} 
\end{rem}

\begin{rem}{\rm 
Our model here only accounts for placement of small orders, which we have taken without loss of generality to be of size one. In Section~4 we give an example of how to include larger orders on a longer timescale in our dynamic model.}
\end{rem}

\subsection{Heavy Traffic Diffusion Approximation in a Static Setting}

We now switch our attention to the heavy traffic approximation of the  suitably rescaled static microscopic order book process. Time is accelerated by a factor of $n$ and volumes are divided by $\sqrt{n}$. Therefore, we are considering the limits of the processes $$\tilde{Z}_n^b(t):=\frac{Z_n^b(nt)}{\sqrt{n}} \; \; \; \; \; \; \textrm{and} \; \; \; \; \; \; \tilde{Z}_n^a(t):=\frac{Z_n^a(nt)}{\sqrt{n}}.$$
These processes therefore take values in $\frac{1}{\sqrt{n}} \mathbb{N}^{N-1}$ and the limiting process will take values in $[0, \infty)^{N-1}$. In order to obtain convergence, we need that various quantities in our microscopic model converge suitably. We will assume that
\begin{enumerate}[(i)]
\item For $k \in \{b,a\}$, $m \in \mathbb{R}$, $i \in \left\{1,2,...,N-1 \right\}$, $u \in \mathbb{N}$ and $n \geq 1$ $$\sigma_{k,m,n}(i,u)= \sigma_{k,m}\left(i,\frac{u}{\sqrt{n}} \right).$$
\item For $k \in \{b,a\}$, $m \in \mathbb{R}$, $i \in \left\{1,2,...,N-1 \right\}$, $u \in \mathbb{N}$ and $n \geq 1$ $$f_{k,m,n}(i,u)= \frac{1}{\sqrt{n}}f_{k,m}\left(i,\frac{u}{\sqrt{n}} \right),$$ $$g_{k,m,n}(i,u)= \frac{1}{\sqrt{n}}g_{k,m}\left(i,\frac{u}{\sqrt{n}} \right).$$ 
\item For $k \in \{b,a\}$ and $n \geq 1$, $\alpha_{k,n}= \frac{1}{n} \alpha_k > 0$.
\item For $k \in \{b,a\}$, $\frac{Z_n^k(0)}{\sqrt{n}} \implies X^k(0)$ in law in $[0, \infty)^{N-1}$ as $n \rightarrow \infty$.
\end{enumerate} 
Here, the functions $\sigma_{k,m}$, $f_{k,m}$ and  $g_{k,m}$ are all measurable from $\left\{1,2,...,N-1 \right\} \times \mathbb{R}^+ \rightarrow \mathbb{R}^+$. For technical reasons, we further assume that these functions are Lipschitz continuous in the second argument. Note that this implies boundedness on compact sets.
\begin{thm}
\label{microstatic}
The $\frac{1}{\sqrt{n}}\mathbb{N} \times \frac{1}{\sqrt{n}}\mathbb{N}$- valued process $(\tilde{Z}_n^b, \tilde{Z}_n^a)$ converges weakly in \\$\mathscr{M}\left(\mathbb{D}([0, \infty) ; \mathbb{R}^{N-1}) \times \mathbb{D}([0, \infty) ; \mathbb{R}^{N-1})\right)$ as $n \rightarrow \infty$ to the unique $[0, \infty)^{N-1} \times [0, \infty)^{N-1}$-valued strong Markov diffusion process $(X^b,X^a)$ which satisfies the following system of reflected SDEs:
$$\textrm{d}X_t^{b,i}= \alpha_b(X_t^{b,i+1}+X_t^{b,i-1}-2X_t^{b,i}) \textrm{dt}+h_{b,m}(i,X_t^{b,i}) \textrm{dt} + \sigma_{b,m}(i,X_t^{b,i}) \textrm{d}W_t^{b,i}+ \textrm{d}\eta^{b,i}_t,$$
$$\textrm{d}X_t^{a,i}= \alpha_a(X_t^{a,i+1}+X_t^{a,i-1}-2X_t^{a,i}) \textrm{dt}+h_{a,m}(i,X_t^{a,i}) \textrm{dt} + \sigma_{a,m}(i,X_t^{a,i}) \textrm{d}W_t^{a,i}+ \textrm{d}\eta^{a,i}_t,$$
for $i=1,...,N-1$ with the pinning conditions that $X^{k,0}=X^{k,N}=0$, where $W^{k,i}$ are independent Brownian motions. The $\eta^{k,i}$ are reflection measures which maintain positivity of the $X^{k,i}$.
\end{thm}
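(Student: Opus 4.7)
The plan is to follow the standard heavy-traffic diffusion-approximation scheme adapted to a reflected state space: semimartingale decomposition of the prelimit, tightness, identification of the limit via a martingale problem, and pathwise uniqueness of the limiting reflected SDE to upgrade subsequential weak convergence to full convergence. First, splitting the up-jump rate in (i) into an interior piece $\tfrac12\sigma_{k,m,n}^2+f_{k,m,n}$ and a boundary piece $\tfrac12\sigma_{k,m,n}^2\mathbbm{1}_{\{Z_n^{k,i}=0\}}$, and applying the scaling assumptions (i)--(iii), a direct generator computation for the process accelerated by $n$ and rescaled by $\sqrt n$ yields
\begin{equation*}
\tilde Z_n^{k,i}(t) = \tilde Z_n^{k,i}(0) + \int_0^t\bigl[\alpha_k\bigl(\tilde Z_n^{k,i+1}+\tilde Z_n^{k,i-1}-2\tilde Z_n^{k,i}\bigr)+h_{k,m}\bigl(i,\tilde Z_n^{k,i}\bigr)\bigr]ds + M_n^{k,i}(t) + L_n^{k,i}(t),
\end{equation*}
where $h_{k,m}=f_{k,m}-g_{k,m}$, the martingale $M_n^{k,i}$ has predictable quadratic variation $\int_0^t\sigma_{k,m}^2(i,\tilde Z_n^{k,i}(s))\,ds$ plus an $O(1/\sqrt n)$ remainder from the $f$, $g$ and random-walk jumps, and $L_n^{k,i}$ is the non-decreasing compensator of the extra indicator in (i), which only charges $\{s:\tilde Z_n^{k,i}(s)=0\}$. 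Cross-quadratic variations between different components vanish at rate $O(1/\sqrt n)$ since only the random-walk jumps couple neighbouring sites, and this will give the independence of the limiting Brownian motions.

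Second, I would establish tightness of $(\tilde Z_n^b,\tilde Z_n^a)$ in $\mathbb D([0,\infty);\mathbb R^{N-1})^2$. A Gronwall argument on $\mathbb E(\tilde Z_n^{k,i}(t))^2$, using the Lipschitz growth of $\sigma_{k,m}$, $f_{k,m}$ and $g_{k,m}$ together with the telescoping of the discrete Laplacian across $i$, yields uniform moment bounds on compact time intervals, and the semimartingale decomposition above then also bounds $\mathbb E\,L_n^{k,i}(T)$. Aldous' criterion applied to the absolutely-continuous drift and to the martingale $M_n^{k,i}$ gives tightness of $(\tilde Z_n^{k,i},M_n^{k,i})$, and tightness of $L_n^{k,i}$ follows since it is monotone with uniformly $L^1$-bounded terminal value.

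Third, I would identify the limit of any weakly convergent subsequence. The martingale central limit theorem together with continuity of $\sigma_{k,m}$ gives $M_n^{k,i}\Rightarrow\int_0^{\cdot}\sigma_{k,m}(i,X^{k,i}_s)\,dW^{k,i}_s$ for mutually independent Brownian motions $W^{k,i}$, while the continuous-mapping theorem combined with Lipschitz continuity of the coefficients identifies the drift limit with $\alpha_k(X^{k,i+1}+X^{k,i-1}-2X^{k,i})+h_{k,m}(i,X^{k,i})$. The delicate point is to show that the weak limit $\eta^{k,i}$ of $L_n^{k,i}$ is a genuine reflection measure, non-decreasing and supported on $\{s:X^{k,i}_s=0\}$: since in the prelimit $L_n^{k,i}$ only increases while $\tilde Z_n^{k,i}=0$, the mass it puts on $\{s:\tilde Z_n^{k,i}(s)\geq\varepsilon\}$ vanishes for every $\varepsilon>0$, and an excursion/oscillation argument for one-dimensional reflected diffusions together with a.s.\ continuity of $X^{k,i}$ shows that $\eta^{k,i}$ is supported on the zero set of $X^{k,i}$. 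I expect this identification of the reflection measure to be the main obstacle, since the prelimit boundary compensator is somewhat singular and one must ensure no mass leaks away from the zero set in the limit.

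Finally, uniqueness of the limiting reflected SDE system under the Lipschitz hypotheses is classical: the one-dimensional Skorokhod map at zero is $1$-Lipschitz in the supremum norm, so combined coordinate-wise with Lipschitz drift and diffusion, a standard Picard contraction argument yields pathwise uniqueness and strong existence of an $[0,\infty)^{N-1}\times[0,\infty)^{N-1}$-valued solution, which automatically enjoys the strong Markov property. Uniqueness of the limit then upgrades the subsequential weak convergence to weak convergence of the full sequence.
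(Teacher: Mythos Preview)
Your approach is correct but differs substantially from the paper's. You proceed via semimartingale decomposition, tightness, and identification of the limit through the martingale CLT together with an explicit analysis of the boundary compensator $L_n^{k,i}$. The paper instead proves convergence of generators and appeals to Corollary 4.8.7 of Ethier--Kurtz: it computes $A_nF$ for the rescaled jump process and $AF$ for the reflected diffusion, and shows $\sup_{y\in n^{-1/2}\mathbb N^{N-1}}|A_nF(y)-AF(y)|\to 0$ for $F$ in the core $C(A)=\{F\in C_c^\infty([0,\infty)^{N-1}):\partial F/\partial x_k|_{x_k=0}=0\}$. The crucial observation is that the extra boundary term in $A_n$, namely $\sqrt n\,\Delta^r_{n,k}F(y)\,\sigma_{b,m}^2(k,0)\,\mathbbm{1}_{\{y_k=0\}}$, converges to $\tfrac12\sigma_{b,m}^2(k,0)\,\partial^2F/\partial y_k^2$ precisely because the Neumann condition kills the first-order Taylor term; thus the reflection is encoded entirely in the domain of $A$ rather than tracked as a separate local-time term. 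Existence and pathwise uniqueness of the limiting reflected SDE are then imported from Tanaka.

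The trade-off is clear. The generator route completely sidesteps what you correctly identify as the main obstacle in your scheme---showing that the weak limit of $L_n^{k,i}$ is supported on the zero set of $X^{k,i}$---because test functions with vanishing normal derivative simply do not see the reflection term. This makes the paper's argument shorter and avoids any excursion or oscillation estimates. Your semimartingale approach, on the other hand, is more robust: it would adapt more readily to situations where a convenient core for the limiting generator is hard to exhibit, or where one wants quantitative control on the reflection measure itself (for instance in the later meso-to-macro step). Both are valid here, but the paper's choice exploits the particularly clean structure of normal reflection in an orthant.
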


\subsection{Dynamic Discrete Order Book Process}

We now describe the mechanism for price movements in the model to give our microscopic dynamic model. Price changes in both directions will be assumed to occur at positive rates which depend on the state of the book at any given time (including the current position of the mid). Our motivating example is the case where these rates are dependent on the imbalance of the number of bid limit orders compared to ask limit orders currently on the book near the mid. Here, relatively more offers to buy near the mid make a price increase more likely and relatively more offers to sell near the mid make a price decrease more likely. 

In order to formalise this, we introduce functions $\theta_{u,m}^n$ and $\theta_{d,m}^n$ for $n \geq 1$ and $m \in \mathbb{R}$, which map $\mathbb{N}^{N-1} \times \mathbb{N}^{N-1} \rightarrow \mathbb{R}_{>0}$. These will determine the rate of upward and downward price movements respectively as a function of the profiles of the bid and ask sides of the book. We also define for $n \geq 1$ functions $R^n$. These map $\mathbb{N}^{N-1} \times \mathbb{N}^{N-1} \times \{u,d\}  \rightarrow \mathscr{M}(\mathbb{N}^{N-1} \times \mathbb{N}^{N-1})$ and will determine the distribution of the new profiles of the bid and ask sides of the book following price changes as a function of the profiles at the time of the price change and the direction of the price change. We fix some $\epsilon >0$, which determines the size of price changes. We additionally introduce i.i.d. rate one exponential random variables $(Y_{n,u}^i)_{i=1}^{\infty}$ and $(Y_{n,d}^i)_{i=1}^{\infty}$ which will be used in the construction. These are independent of each other and of the other driving Poisson processes in the model. With this is place, we can start to construct our dynamic process. Let $Z_{n,1}^b(0), Z_{n,1}^a(0) \in \mathbb{N}^{N-1}$ be the initial profiles for the bid and ask sides of the book respectively, and let $m_n^1$ be the initial mid for the $n^{\textrm{th}}$ microscopic order book. We denote by $Z_{n,1}^b, Z_{n,1}^a$ the processes evolving according to our dynamics for the static microscopic order book with mid $m^1_n$ and initial profiles $Z_{n,1}^b(0), Z_{n,1}^a(0)$. We define the following stopping times.

\begin{equation*}
\tau_{n,u}^1:= \inf \left\{t \geq 0 \; \bigg| \; \int_0^t \theta_{u,m^1_n}^n(Z_{n,1}^b(s),Z_{n,1}^a(s))  \; \textrm{d}s \geq Y_{n,u}^1 \right\},
\end{equation*} 
\begin{equation*}
\tau_{n,d}^1:= \inf \left\{t \geq 0 \; \bigg| \; \int_0^t \theta_{d,m^1_n}^n(Z_{n,1}^b(s),Z_{n,1}^a(s))  \; \textrm{d}s \geq Y_{n,d}^1 \right\}.
\end{equation*} 
Therefore, $\tau_{n,u}^1$ and $\tau_{n,d}^1$ are exponential waiting times, whose arrival rates at time $t$ are given by $\theta_{u,m^1_n}^n(Z_{n,1}^b(s),Z_{n,1}^a(s))$ and $\theta_{d,m^1_n}^n(Z_{n,1}^b(s),Z_{n,1}^a(s))$ respectively. We define $\tau_n^1:= \tau_{n,u}^1 \wedge \tau_{n,d}^1.$ The time $\tau_n^1$ triggers a price change. If $\tau_n^1$= $\tau_{n,u}^1$, we have an upward price change, and set $m^2_n=m^1_n +\epsilon$. Similarly, if $\tau_n^1$= $\tau_{n,d}^1$, we have an downward price change, and set $m^2_n=m^1_n -\epsilon$. We then let $Z_{n,2}^b, Z_{n,2}^a$ be new processes which follow the dynamics of the bid and ask sides of the static microscopic order book with mid $m_n^2$. The initial profile has the law of $R^n(Z_{n,1}^b(\tau^1_n), Z_{n,1}^a(\tau^1_n),u)$ if the price change was upward and $R^n(Z_{n,1}^b(\tau^1_n), Z_{n,1}^a(\tau^1_n),d)$ if the price change was downward. The processes $Z_{n,2}^b$, $Z_{n,2}^a$ are taken to be conditionally independent of the past of the dynamic order book given $(Z_{n,1}^b(\tau_n^1), Z_{n,1}^a(\tau_n^1))$, the direction of the price change, and $m_n^2$. We can iterate this procedure to define further stopping times, price points and processes $Z_{n,i}^b,Z_{n,i}^a$, describing the dynamics after the $i^{\textrm{th}}$ price change. Having defined $(Z_{n,i}^b)_{i=1}^{M}$, $(Z_{n,i}^a)_{i=1}^{M}$, $(\tau^i_{n})_{i=1}^{M-1}$ and $(m^i_n)_{i=1}^{M}$, we set 
\begin{equation*}
\tau_{n,u}^{M}:= \inf \left\{t \geq 0 \; \bigg| \; \int_0^t \theta_{u,m^{M}_n}^n(Z_{n,M}^b(s),Z_{n,M}^a(s))  \; \textrm{d}s \geq Y_{n,u}^{M} \right\},
\end{equation*} 
\begin{equation*}
\tau_{n,d}^{M}:= \inf \left\{t \geq 0 \; \bigg| \; \int_0^t \theta_{d,m^{M}_n}^n(Z_{n,M}^b(s),Z_{n,M}^a(s))  \; \textrm{d}s \geq Y_{n,d}^{M} \right\}.
\end{equation*} 
As before, we define $\tau_n^{M}:= \tau_{n,u}^{M} \wedge \tau_{n,d}^{M}$, with the time $\tau_n^M$ triggering a price change. If $\tau_n^{M}$= $\tau_{n,u}^{M}$, we set $m^{M+1}_n=m^M_n +\epsilon$ and if $\tau_n^{M}$= $\tau_{n,d}^{M}$, we set $m^{M+1}_n=m^M_n -\epsilon$. We then let $Z_{n,M+1}^b, Z_{n,M+1}^a$ be new processes which follows the dynamics of the bid and ask sides of the static microscopic order book with the mid now given by $m^{M+1}_n$. The initial profile has the law of $R^n(Z_{n,M}^b(\tau^M_n), Z_{n,M}^a(\tau^M_n),u)$ if the price change was upward and $R^n(Z_{n,M}^b(\tau^M_n), Z_{n,M}^a(\tau^M_n),d)$ if the price change was downward. Once again, $Z_{n,M+1}^b$, $Z_{n,M+1}^a$ are taken to be conditionally independent of the past of the dynamic order book given $(Z_{n,M}^b(\tau_n^{M}), Z_{n,M}^a(\tau_n^M))$, the direction of the previous price change, and the new mid position. Our dynamic microscopic order book is then described by the processes $(\hat{Z}_n^b(t), \hat{Z}_n^a(t), m_n(t))$ describing the evolution of the two sides of the book and the mid through time, where

\begin{equation*}
\hat{Z}_n^b(t):= \sum\limits_{i=1}^{\infty} Z_{n,i}^b\left(t- \sum\limits_{j=1}^{i-1} \tau_n^j \right) \mathbbm{1}_{\left\{ \sum\limits_{j=1}^{i-1} \tau_n^j \leq t < \sum\limits_{j=1}^{i} \tau_n^j \right\}},
\end{equation*}
\begin{equation*}
\hat{Z}_n^a(t):= \sum\limits_{i=1}^{\infty} Z_{n,i}^a\left(t- \sum\limits_{j=1}^{i-1} \tau_n^j \right) \mathbbm{1}_{\left\{ \sum\limits_{j=1}^{i-1} \tau_n^j \leq t < \sum\limits_{j=1}^{i} \tau_n^j \right\}},
\end{equation*}
\begin{equation*}
m_n(t):=\sum\limits_{i=1}^{\infty} m^i_n \mathbbm{1}_{\left\{ \sum\limits_{j=1}^{i-1} \tau_n^j \leq t < \sum\limits_{j=1}^{i} \tau_n^j \right\}}. 
\end{equation*}
  
\subsection{Heavy Traffic Diffusion Approximation in a Dynamic Setting}

We now present the convergence of our dynamic microscopic model to a dynamic mesoscopic model. We should first, of course, define the dynamic mesoscopic model. This is done in essentially the same way as in the microscopic case. We therefore only give an overview here and refer to the previous section for the precise details. The static mesoscopic model determines the behaviour of the dynamic model in between price changes. The functions determining rates of upward price changes and downward price changes as functions of the book profile are now denoted by $\theta_{u,m}$ and $\theta_{d,m}$ respectively. They are maps from $(\mathbb{R}^+)^{N-1} \times (\mathbb{R}^+)^{N-1} \rightarrow \mathbb{R}_{>0}$, which we assume to be uniformly bounded over $m$ and such that there exists $c>0$ with $\theta_{u,m}, \theta_{d,m} \geq c$. We further assume that these maps are continuous. Our regenerative function which determines the distribution of the new profile following a price change is denoted by $R$ and now maps $(\mathbb{R}^+)^{N-1} \times (\mathbb{R}^+)^{N-1} \times \{u,d\} \rightarrow \mathscr{M}((\mathbb{R}^+)^{N-1} \times (\mathbb{R}^+)^{N-1} )$. This function is also assumed to be continuous, where $\mathscr{M}((\mathbb{R}^+)^{N-1}\times (\mathbb{R}^+)^{N-1} )$ is equipped with the topology of weak convergence. For ease of notation, we introduce the maps $P_n: \mathbb{N}^{N-1} \times \mathbb{N}^{N-1} \rightarrow (\mathbb{R}^+)^{N-1} \times (\mathbb{R}^+)^{N-1}$, which simply divide each coordinate by $\sqrt{n}$. The functions here are then approximated by the corresponding functions in the microscopic model in the following ways.

\begin{enumerate}[(i)]
\item For $k \in \{u,d\}$, $v_1,v_2 \in \mathbb{N}^{N-1}$ and $u_1,u_2 \in (\mathbb{R}^+)^{N-1}$, \begin{equation*}\label{microthetaassumption} \left|n \theta^n_{k,m}(v_1,v_2) - \theta_{k,m}\left(u_1,u_2 \right)\right| \leq r(\| P_n((v_1,v_2)) -(u_1,u_2) \|),\end{equation*} where $\lim\limits_{x \rightarrow 0} r(x) =0$. 
\item For $k \in \{u,d \}$, if $(v_1^n,v_2^n)$ is a sequence in $\mathbb{N}^{N-1} \times \mathbb{N}^{N-1}$ such that $P_n((v_1^n,v_2^n)) \rightarrow (u_1,u_2)$, then 
\begin{equation*}\label{microregenassumption} R^n(v_1^n,v_n^2,k) \circ P_n^{-1} \implies R(u_1,u_2,k)\end{equation*} in law in $\mathscr{M}((\mathbb{R}^+)^{N-1} \times (\mathbb{R}^+)^{N-1})$.
\end{enumerate}
With this in place, we define the processes $X_i^b$, $X_i^a$ stopping times $\tau_u^i$, $\tau_d^i$, $\tau^i$ and price sequence $m^i$ analogously to the previous section, with price jumps once again of size $\epsilon$. The underlying collections of exponential random variables which are used in the construction of the stopping times $\tau_u^i$ and $\tau_d^i$ are now denoted $Y_u^i$ and $Y_d^i$. Hence, $X_i$ is the sequence of static mesoscopic models with suitable mids, ($\tau_u^i, \tau_d^i)$ are stopping times determining the times in between price changes as well as the direction of these price changes, and $m^i$ is the sequence of mids. Our dynamic mesoscopic model is then given by $(\hat{X}^b(t),\hat{X}^a(t),m(t))$, where
\begin{equation*}\label{mesodynb}
\hat{X}^b(t):= \sum\limits_{i=1}^{\infty} X_i^b \left(t- \sum\limits_{j=1}^{i-1} \tau^j \right) \mathbbm{1}_{\left\{ \sum\limits_{j=1}^{i-1} \tau^j \leq t < \sum\limits_{j=1}^{i} \tau^j \right\}},
\end{equation*}
\begin{equation*}\label{mesodyna}
\hat{X}^a(t):= \sum\limits_{i=1}^{\infty} X_i^a \left(t- \sum\limits_{j=1}^{i-1} \tau^j \right) \mathbbm{1}_{\left\{ \sum\limits_{j=1}^{i-1} \tau^j \leq t < \sum\limits_{j=1}^{i} \tau^j \right\}},
\end{equation*}
\begin{equation*}
m(t):=\sum\limits_{i=1}^{\infty} m^i \mathbbm{1}_{\left\{ \sum\limits_{j=1}^{i-1} \tau^j \leq t < \sum\limits_{j=1}^{i} \tau^j \right\}}. 
\end{equation*}  

The following is our convergence result for approximating the dynamic mesoscopic model with our sequence of rescaled dynamic microscopic models. The mesoscopic model has the benefit of being more computationally efficient for large scale systems than the microscopic model, whilst maintaining a discrete spatial structure. 

\begin{thm}\label{microdynamic}
Suppose that $$\left(\frac{Z_{n,1}^a(0)}{\sqrt{n}},\frac{Z_{n,1}^b(0)}{\sqrt{n}}\right) \implies (X_1^a(0),X_1^b(0))$$ weakly in $\mathscr{M}\left((\mathbb{R}^+)^{N-1} \times (\mathbb{R}^+)^{N-1}\right)$. Let $(\hat{Z}_n^a(t),\hat{Z}_n^b(t),m_n(t))$ be dynamic microscopic models with initial data $$\left(\frac{Z_{n,1}^a(0)}{\sqrt{n}},\frac{Z_{n,1}^b(0)}{\sqrt{n}}, m^1 \right),$$ and let $(\hat{X}^a(t),\hat{X}^b(t),m(t))$ be the dynamic mesoscopic model with initial data $(X_1^a(0),X_1^b(0), m^1)$. Then $$\left(\frac{\hat{Z}_{n}^a(nt)}{\sqrt{n}},\frac{\hat{Z}_{n}^b(nt)}{\sqrt{n}}, m_n(nt) \right) \implies (\hat{X}^a(t),\hat{X}^b(t),m(t))$$ weakly in $\mathscr{M}\left(\mathbb{D}([0,\infty), \mathbb{R}^{N-1})\times \mathbb{D}([0,\infty), \mathbb{R}^{N-1})\times \mathbb{D}([0,\infty),\mathbb{R})\right)$.
\end{thm}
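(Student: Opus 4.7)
My strategy is to prove the theorem by induction on the number of price changes completed by a fixed horizon, applying Theorem~\ref{microstatic} within each static epoch and then propagating convergence through the state-dependent stopping times and the regeneration kernel $R$. Since the driving exponentials $Y_{u/d}^i$ in the mesoscopic model and $Y_{n,u/d}^i$ in the microscopic model are i.i.d.\ rate-one in both constructions, I first couple them pathwise so that they agree, which reduces the problem to showing continuity of the remaining functional ingredients on a Skorohod representation probability space.

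For the base epoch, Theorem~\ref{microstatic} applied with mid $m^1$ gives $(\tilde Z_{n,1}^b,\tilde Z_{n,1}^a)\Rightarrow (X_1^b,X_1^a)$ where $\tilde Z_{n,1}^k(t)=Z_{n,1}^k(nt)/\sqrt n$; Skorohod representation makes this locally uniform almost surely, since the limiting reflected diffusion is pathwise continuous. After the change of variable $s=nu$, the upward stopping time $\tau_{n,u}^1$ is characterised by
$$\int_0^{\tau_{n,u}^1/n} n\,\theta_{u,m^1}^n\bigl(Z_{n,1}^b(nu),Z_{n,1}^a(nu)\bigr)\,du = Y_{u}^1.$$
Assumption~(i), together with the a.s.\ uniform convergence of the rescaled paths, implies uniform convergence on compacts of the integrand to $\theta_{u,m^1}(X_1^b,X_1^a)$. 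Since $\theta_{u,m^1}\ge c>0$, the integrated rate is strictly increasing and continuous with a continuous inverse, so the first-passage map is continuous at the limit trajectory and $\tau_{n,u}^1/n\to\tau_u^1$ a.s.; the same argument handles the downward clock. As $\tau_u^1\ne\tau_d^1$ almost surely, the direction of the first price change stabilises along the coupling, and pathwise continuity of the limit yields $(\tilde Z_{n,1}^b,\tilde Z_{n,1}^a)(\tau_n^1/n)\to(X_1^b,X_1^a)(\tau^1)$ a.s.

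For the inductive step, assumption~(ii) on the regeneration kernel yields
$$R^n\bigl(Z_{n,1}^b(\tau_n^1),Z_{n,1}^a(\tau_n^1),k\bigr)\circ P_n^{-1}\Rightarrow R\bigl(X_1^b(\tau^1),X_1^a(\tau^1),k\bigr)$$
for the observed direction $k\in\{u,d\}$; a further Skorohod representation upgrades this to a.s.\ convergence of the restarted initial condition, and Theorem~\ref{microstatic} then applies on the next epoch with mid $m_n^2=m^1\pm\epsilon=m^2$. Iterating up to a fixed number $M$ of completed price changes yields joint a.s.\ convergence of the static epochs, of the interevent times $\tau_n^j/n\to\tau^j$, and of the cumulative switch times. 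Because $\theta_{u,m},\theta_{d,m}$ are uniformly bounded above, there are a.s.\ only finitely many price changes in $[0,T]$, so truncating at a sufficiently large $M$ introduces only vanishing error, and the concatenation formulas assemble $(\hat Z_n^b(n\cdot)/\sqrt n,\hat Z_n^a(n\cdot)/\sqrt n,m_n(n\cdot))\Rightarrow(\hat X^b,\hat X^a,m)$.

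The principal obstacle is controlling the Skorohod $J_1$ topology at the price-change times: the mid $m_n(n\cdot)$ and the limit $m(\cdot)$ are pure-jump, and $\hat X^{b,a}$ may also be discontinuous there due to the regeneration. One must show simultaneously that the jump locations converge, which is supplied by the continuous-mapping theorem for first hitting times and relies crucially on the lower bound $\theta\ge c>0$, and that the pre- and post-jump values converge, which requires pathwise continuity of the reflected-diffusion epochs together with continuity of $R$. Once these pieces are assembled on the common Skorohod representation space, concatenation is a continuous operation and the full weak convergence follows.
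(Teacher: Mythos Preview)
Your proposal is correct and follows essentially the same route as the paper: induction on the number of price changes, applying Theorem~\ref{microstatic} within each epoch, Skorohod representation to upgrade to a.s.\ convergence, continuity of the first-passage map (using $\theta\ge c>0$) to pass the rescaled stopping times to the limit, assumption~(ii) to propagate through the regeneration kernel, and finally continuity of concatenation in the $J_1$ topology. The paper organises the argument slightly differently---it first proves weak convergence of the full infinite vector $\bigl((Z_{n,i}^b(n\cdot)/\sqrt n)_i,(Z_{n,i}^a(n\cdot)/\sqrt n)_i,(\tau_n^i/n)_i,(m_n^i)_i\bigr)$ and then applies a single concatenation-continuity lemma, whereas you truncate at $M$ and argue the tail is negligible via the upper bound on $\theta$---but these are equivalent, and your identification of the $J_1$ obstacle and its resolution matches the paper's treatment.
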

\begin{proof}
See Section A.2 in the appendix.
\end{proof}

\section{Reflected SPDEs: from mesoscopic to macroscopic models} 

In this section we rescale our previously obtained dynamic mesoscopic system and bridge the gap between mesoscopic and macroscopic models of limit order books. We show convergence to a dynamic macroscopic model limit by letting the tick size tend to zero and rescaling suitably. This connects our description of the order book to SPDE systems, which have been considered as order book models in the literature. See, for example, Zheng~\cite{Zheng}, M\"{u}ller~\cite{Mu} and \cite{KRM}. As with the dynamic result in the previous section, the proof of this relies heavily on convergence of the static models, which we present first. We are able to obtain convergence in the static setting by applying Theorem 2.1 in T. Zhang \cite{Z}.

\begin{rem}{\rm 
Note that, although we take the tick size to zero for the bid and ask sides of the book, the price process moves according to macroscopic price jumps. This both simplifies the analysis, avoiding the need to consider a model with a continuously moving boundary, and allows us to maintain a natural separation of time-scales for the order book evolution and price changes.} 
\end{rem}

\begin{rem}

{ \rm When simulating the model, any numerical scheme for the SPDE limit requires a projection which would essentially return us to an SDE framework. Nonetheless, it is interesting to note that the Poisson driven framework introduced in Section 2 has a natural SPDE analogue. The work in this section also completes our objective of connecting existing orderbook models in the literature at the particle, SDE and SPDE level, by noting that these correspond to one another if we rescale time and space appropriately.
}

\end{rem}
  
\subsection{SPDE limit in a static setting}

We begin this section by describing the sequence of rescaled static mesoscopic models which we will consider. For every $N \geq 1$ we let $X_N^b(t), X_N^a(t)$ satisfy the dynamics of the bid and ask sides of the static mesoscopic model on the price grid $\left\{0,1,2,...,N \right\}$ with mid $m \in \mathbb{R}$. These models are now indexed by $N$, which was suppressed in the previous section. We wish to emphasise this here since we will be taking $N$ to infinity. We will rescale space and time appropriately and map the coordinates of  $X_N^b(t)$ and $X_N^a(t)$ to equally spaced points on $[0,1]$. Define here the function $Q_N: \mathbb{R}^{N-1} \rightarrow C_0((0,1))$, such that
\begin{enumerate}[(i)]
\item For $i=1,2,...,N-1$, $Q_N(x)(\frac{i}{N})=\frac{x_i}{\sqrt{N}}$.
\item $Q_N(x)(0) = Q_N(x)(1) = 0$.
\item For $i=0,1,2,...,N-1$, $Q_N(x)$ is linear between the points $[\frac{i}{N},\frac{i+1}{N}]$.
\end{enumerate}
The aim will be to take the limit of $(Q_N(X_N^b(N^2t)),Q_N(X_N^a(N^2t)))$. We should have that the parameters for our SDE systems are consistent in some way. Given the rescaling, they are chosen such that for $k \in \{b,a\}$, $m \in \mathbb{R}$, $N \geq 1$, $i =0,1,...,N-1$ and $u \in (\mathbb{R}^+)^{N-1}$:

\begin{enumerate}[(i)]
\item $h^N_{k,m}(i,u):= N^{-\frac{3}{2}}h_{k,m}(\frac{i}{N},\frac{u}{\sqrt{N}})$,
\item $\sigma^N_{k,m}(i,u) := \sigma_{k,m}(\frac{i}{N}, \frac{u}{\sqrt{N}})$,
\end{enumerate}
where $h_{k,m}$, $\sigma_{k,m}$ are measurable maps from $[0,1] \times [0,\infty) \rightarrow \mathbb{R}$. We further assume the Lipschitz condition:
\begin{equation*}
|h_{k,m}(x,u)-h_{k,m}(y,v)|+ |\sigma_{k,m}(x,u)-\sigma_{k,m}(y,v)| \leq C(|x-y|+|u-v|).
\end{equation*}

We will show that our rescaled mesoscopic models converge to a reflected SPDE. Before doing so, we first give the definition of a solution to a reflected SPDE. This is the same as in T. Zhang \cite{Z}.

\begin{defn}\label{weaksolnref}

We say that the pair $(u, \eta)$ is a solution the SPDE with reflection
\begin{equation}\label{refSPDE}
\frac{\partial u}{\partial t}= \alpha \Delta u + h(x, u(t,x))+ \sigma(x,u(t,x)) \frac{\partial^2 W}{\partial x \partial t} + \eta(t,x)
\end{equation}
with Dirichlet conditions $u(t,0)=u(t,1)=0$ and initial data $u(0,x)= u_0 \in C_0((0,1))^+$ if
\begin{enumerate}[(i)]
\item u is a continuous adapted random field on $\mathbb{R}^+ \times [0,1]$ such that $u \geq 0$ almost surely.
\item $\eta$ is a random measure on $\mathbb{R}^+ \times (0,1)$ such that: 
\begin{enumerate}
\item For every $t \geq 0$, $\eta(\left\{t\right\} \times (0,1))=0$,
\item For every $t \geq 0$, $\int_0^t \int_0^1 x(1-x) \; \eta(\textrm{ds,dx}) < \infty$,
\item $\eta$ is adapted in the sense that for any measurable mapping $\psi$:
\begin{equation*}
\int_0^t \int_0^1 \psi(s,x) \; \eta(\textrm{ds,dx}) \; \;  \textrm{is }  \mathscr{F}_t-\textrm{measurable}.
\end{equation*}
\end{enumerate}
\item For every $t \geq 0$ and every $\phi \in C^2([0,1])$ with $\phi(0)=\phi(1)=0$,
\begin{equation*}
\begin{split}
\int_0^1 u(t,x) \phi(x) \textrm{d}x = &  \int_0^1 u(0,x)\phi(x) \textrm{d}x +\alpha \int_0^t \int_0^1 u(s,x)\phi^{\prime \prime}(x) \textrm{d}x \textrm{d}s \\ & + \int_0^t \int_0^1 h(x,u(s,x))\phi(x) \textrm{d}x \textrm{d}s + \int_0^t \int_0^1 \phi(x) \sigma(x,u(s,x)) W(\textrm{d}s,\textrm{d}x) \\ & + \int_0^t \int_0^1 \phi(x) \; \eta(\textrm{d}s,\textrm{d}x)
\end{split}
\end{equation*}
almost surely.
\item $\int_0^{\infty} \int_0^1 u(t,x) \;  \eta(\textrm{d}t,\textrm{d}x)=0$.
\end{enumerate}
\end{defn}
The intuition for this equation is that the reflection measure is analogous to the local time for a one dimensional diffusion. The solution follows the dynamics of a standard SPDE (without reflection), except when the solution meets the $x$-axis, where the profile is minimally pushed up by the reflection measure and kept positive.  \\

Existence of strong solutions of to equations of the form (\ref{refSPDE}), under our conditions on the coefficients, is proved by C.Donati-Martin and E. Pardoux in \cite{DP}. This means that, given a white noise process and the filtration that it generates, we can construct a solution which is adapted. Uniqueness was then proved by T. Xu and T. Zhang in \cite{XZ}.\\

We now pass to the macroscopic limit in the static setting. The following result can be shown by adapting Theorem 2.1 in T.Zhang \cite{Z}. We equip the space $C([0,\infty) ; C_0((0,1)))$ with the topology of uniform convergence on compact sets in the time variable, so $f_n \rightarrow f$ in $C([0,\infty) ; C_0((0,1)))$ if and only if $f_n \rightarrow f$ in $C([0,T] ; C_0((0,1)))$ for every $T>0$.  

\begin{thm}\label{mesostatic}
Suppose that $(Q_N(X_N^b(0)),Q_N(X_N^a(0))) \implies (v_0^b, v_0^a)$ in law in $C_0((0,1))^+ \times C_0((0,1))^+$. Then $(Q_N(X_N^b(N^2t)), Q_N(X_N^a(N^2t)) \implies (v^b,v^a)$ in law in $C([0,\infty); C_0((0,1)))$, where $(v^b,v^a)$ is the unique solution to the pair of reflected stochastic heat equations
\begin{equation*}
\frac{\partial v^b}{\partial t}= \alpha_b \Delta v^b + h_{b,m}(x,v^b)+ \sigma_{b,m}(x,v^b)\frac{\partial^2 W^b}{\partial x \partial t} + \eta^b(\textrm{d}t,\textrm{d}x),
\end{equation*}
\begin{equation*}
\frac{\partial v^a}{\partial t}= \alpha_a \Delta v^a + h_{a,m}(x,v^a)+ \sigma_{a,m}(x,v^a)\frac{\partial^2 W^a}{\partial x \partial t} + \eta^a(\textrm{d}t,\textrm{d}x),
\end{equation*}
with pinning conditions $v^b(t,0)=v^b(t,1)=v^a(t,0)=v^a(t,1)=0$ and initial data $v^b(0,x)=v_0^b(x)$ and $v^a(0,x)=v_0^a(x)$, where the space-time white noises $W^b$ and $W^a$ are independent. 
\end{thm}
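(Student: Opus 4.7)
The plan is to recognise the rescaled interpolant $Q_N(X_N^b(N^2 t))$ as a Funaki-type finite-difference discretisation of the reflected stochastic heat equation and then apply Theorem~2.1 of T.~Zhang~\cite{Z} after checking that the setup matches his. Since the bid and ask sides of the static mesoscopic model are driven by independent Brownian motions with no coupling in the SDE coefficients, it is enough to prove convergence of each marginal separately; joint weak convergence on the product space $C([0,\infty);C_0((0,1)))^2$ then follows because independent marginals converging weakly to independent limits are automatically jointly convergent.

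For the bid marginal, set $U_N(t,x_i):=X_N^{b,i}(N^2 t)/\sqrt{N}$ on the grid $x_i=i/N$. After the Brownian time-change $\tilde W_t^{b,i}:=W_{N^2t}^{b,i}/N$ and the reflection rescaling $\tilde\eta_t^{b,i}:=\eta_{N^2t}^{b,i}/\sqrt{N}$, the chosen parameter scalings $h_{b,m}^N(i,u)=N^{-3/2}h_{b,m}(i/N,u/\sqrt{N})$ and $\sigma_{b,m}^N(i,u)=\sigma_{b,m}(i/N,u/\sqrt{N})$ yield
\begin{equation*}
dU_N(t,x_i)=\alpha_b N^2\bigl(U_N(t,x_{i+1})+U_N(t,x_{i-1})-2U_N(t,x_i)\bigr)\,dt+h_{b,m}(x_i,U_N(t,x_i))\,dt+\sqrt{N}\,\sigma_{b,m}(x_i,U_N(t,x_i))\,d\tilde W_t^{b,i}+d\tilde\eta_t^{b,i},
\end{equation*}
with the Dirichlet conditions $U_N(t,0)=U_N(t,1)=0$ inherited from the pinning in Theorem~\ref{microstatic}. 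The prefactor $N^2=(\Delta x)^{-2}$ is the consistent finite-difference Laplacian, and the factor $\sqrt{N}=(\Delta x)^{-1/2}$ on the martingale is precisely what allows one to assemble $\{\tilde W^{b,i}\}_i$ into cell-integrated increments of a space-time white noise $W^b$ on $[0,\infty)\times(0,1)$. After the piecewise linear interpolation $Q_N$ this is exactly the discrete approximation scheme treated in~\cite{Z}.

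To apply Zhang's theorem I would verify: (i) Lipschitz continuity of $h_{b,m}$ and $\sigma_{b,m}$ in both arguments, which is assumed; (ii) non-negativity of the discrete approximation with a genuine reflection measure, which holds because the underlying mesoscopic SDE already takes values in $[0,\infty)^{N-1}$ by Theorem~\ref{microstatic}; (iii) weak convergence of the initial profiles in $C_0((0,1))^+$, which is our standing hypothesis. Uniqueness in law of the limit is then provided by the existence result of Donati-Martin--Pardoux~\cite{DP} combined with the uniqueness result of Xu--Zhang~\cite{XZ}, so full (not merely subsequential) convergence follows once the limit has been identified.

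The main obstacle, and the reason this is an \emph{adaptation} of~\cite{Z} rather than a verbatim citation, is tightness of the interpolants in $C([0,T];C_0((0,1)))$ for every $T>0$ together with identification of the weak limit in the sense of Definition~\ref{weaksolnref}. Tightness requires uniform-in-$N$ H\"older moment bounds on time and space increments of $U_N$, which I would obtain by writing $U_N$ through the discrete Dirichlet heat semigroup, applying Burkholder--Davis--Gundy to the martingale part, and controlling the reflection contribution by a Skorokhod-map estimate against the discrete Green's function. Passage to the limit in part~(iii) of Definition~\ref{weaksolnref} is then standard via a continuous-mapping argument for the test-function pairings, and the complementarity condition~(iv) passes to the limit from its discrete analogue $U_N(t,x_i)\,d\tilde\eta_t^{b,i}\equiv 0$ using the continuity of the limit. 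The same argument on the ask side, run with the independent noise $W^a$, yields the joint limit $(v^b,v^a)$.
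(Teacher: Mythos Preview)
Your reduction is exactly the paper's: decouple bid and ask by independence, rescale $v_t^{N,b,i}:=X_N^{b,i}(N^2t)/\sqrt{N}$ with $\tilde W_t^{N,b,i}:=N^{-1}W_{N^2t}^{N,b,i}$, and observe that the resulting system is precisely the lattice scheme of Zhang~\cite{Z}. The algebra you wrote for the drift, diffusion and Laplacian prefactors matches the paper's equation~(\ref{reflectedSDEs}) line for line.

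Where you diverge is in what you think the ``adaptation'' consists of. You propose to rebuild tightness in $C([0,T];C_0((0,1)))$ via discrete-semigroup moment bounds and then identify the limit through the weak formulation. The paper does none of this, because it does not need to: Zhang's Theorem~2.1 already delivers convergence in $L^p(\Omega;C([0,T]\times[0,1]))$ for every $p\geq 1$ and every $T>0$ when the initial data are deterministic and converge in sup norm. Strong $L^p$ convergence of the interpolants trivially implies weak convergence in $C([0,\infty);C_0((0,1)))$, so there is nothing further to prove about tightness or identification. The only genuine gap between Zhang's hypotheses and the present theorem is that the initial data here are random and converge only in law. The paper closes that gap in a short corollary: Skorohod-represent the initial data so that $u_0^N\to u_0$ almost surely in $C_0((0,1))$, apply Zhang's deterministic result pathwise to get $g_N(u_0^N)\to g(u_0)$ almost surely for $g_N(v):=\mathbb{E}[f(u^N)\mid u^N(0,\cdot)=F_N(v)]$ and bounded continuous $f$, and conclude by dominated convergence. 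Your route would work but is substantially more laborious than necessary.
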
  

\begin{proof}
See Section A.3 in the appendix
\end{proof}

We call the limiting process $(v^b,v^a)$ our static macroscopic order book process with mid $m$.

\subsection{SPDE in a dynamic setting}

Let $(\hat{X}_N^b(t),\hat{X}_N^a(t),m_N(t))$ be our $N^{\textrm{th}}$ dynamic mesoscopic model, with the parameters defined as in the previous section. The price change rates are now denoted by $\theta^N_{b,m}$ and $\theta^N_{a,m}$, and the regenerative functions  by $R^N$. We aim to prove convergence to a dynamic macroscopic model by taking the limit of the sequence $(Q_N(\hat{X}^b_N(N^2t)),Q_N(\hat{X}^a_N(N^2t)), m_N(N^2t))$. 

We describe our dynamic macroscopic order book model. The principles are the same as for the dynamic microscopic and mesoscopic models, in that we assume the dynamics follow the static model in between price changes, with these occurring at state driven rates. As in previous sections, the stopping times determining the time in between the $(i-1)^{\textrm{th}}$ and $i^{\textrm{th}}$ price change, as well as the direction of the price change, are denoted $\tau^i_{u}$ and $\tau^i_{d}$, with $\tau^i := \tau^i_{u} \wedge \tau^i_{d}$. When an upward/downward price change is triggered, the price process $m(t)$ increases/decreases by $\epsilon$. The bid/ask profiles which give the evolution of the profile in between the $(i-1)^{\textrm{th}}$ and $i^{\textrm{th}}$ price change are denoted by $u_i^b$ and $u_i^a$ respectively. We introduce $\theta_{u,m}$ and $\theta_{d,m}$ for $m \in \mathbb{R}$, which are uniformly bounded over $m$ and continuous from $C_0((0,1)) \rightarrow \mathbb{R}_{>0}$, with there existing $c > 0$ such that $\theta_{u,m}, \theta_{d,m} \geq c$. These functions determine the rates of upward and downward price changes respectively as functions of the profile of the book when the mid is at $m$. We also introduce the regenerative function for our macroscopic model, $R$, which maps $C_0((0,1)) \times C_0((0,1)) \times \{u,d\}  \rightarrow \mathscr{M}(C_0((0,1) \times C_0((0,1)))$. This tells us the new profiles for the bid and ask sides of the book following a price change, as a function of the profiles at the time of the price change and the direction of the price change. The dynamic macroscopic model is then given by $(\hat{u}^b(t),\hat{u}^a(t),m(t))$, where
\begin{equation*}\label{mesodynbspde}
\hat{u}^b(t):= \sum\limits_{i=1}^{\infty} u_i^b \left(t- \sum\limits_{j=1}^{i-1} \tau^j \right) \mathbbm{1}_{\left\{ \sum\limits_{j=1}^{i-1} \tau^j \leq t < \sum\limits_{j=1}^{i} \tau^j \right\}},
\end{equation*}
\begin{equation*}\label{mesodynaspde}
\hat{u}^a(t):= \sum\limits_{i=1}^{\infty} u_i^a \left(t- \sum\limits_{j=1}^{i-1} \tau^j \right) \mathbbm{1}_{\left\{ \sum\limits_{j=1}^{i-1} \tau^j \leq t < \sum\limits_{j=1}^{i} \tau^j \right\}},
\end{equation*}
\begin{equation*}
m(t):=\sum\limits_{i=1}^{\infty} m^i \mathbbm{1}_{\left\{ \sum\limits_{j=1}^{i-1} \tau^j \leq t < \sum\limits_{j=1}^{i} \tau^j \right\}}. 
\end{equation*}  
For ease of notation, we introduce here the map $\tilde{Q}_N: \mathbb{R}^{N-1} \times \mathbb{R}^{N-1} \rightarrow C_0((0,1)) \times C_0((0,1))$, which simply applies $Q_N$ to each coordinate. We connect the dynamic aspects of the mesoscopic and macroscopic models by assuming that
\begin{enumerate}[(i)]
\item For $k \in \{b,a\}$, $m \in \mathbb{R}$, $(X^1,X^2) \in \mathbb{R}^{N-1} \times \mathbb{R}^{N-1}$ and $(u_1,u_2) \in C_0((0,1)) \times C_0((0,1))$ 
\begin{equation*}\label{mesothetaassumption} \left| N^2 \theta^N_{k,m}((X^1,X^2))- \theta_{k,m}((u_1,u_2)) \right| \leq r(\|\tilde{Q}_N((X_1,X_2))-(u_1,u_2)\|),\end{equation*}
where $\lim\limits_{x \rightarrow 0} r(x)=0$.
\item For $m \in \mathbb{R}$, if $(X^1,X^2) \in \mathbb{N}^{N-1} \times \mathbb{N}^{N-1}$ and $(u_1,u_2) \in C_0((0,1)) \times C_0((0,1))$ are such that \begin{equation*}\label{mesoregenassumption} \|\tilde{Q}_N((X_1,X_2)) - (u_1,u_2)\| \rightarrow 0,\end{equation*}
then, for $k \in \{b,a\}$, $$R^N(X^1,X^2,k)\circ \tilde{Q}_N^{-1} \implies R(u_1,u_2,k)$$ in law in $C_0((0,1)) \times C_0((0,1))$.
\end{enumerate}

\begin{rem}{\rm 
For $X \in \mathbb{N}^{N-1}$ and $u \in C_0((0,1))$, $$\|Q_N(X)-u\| \rightarrow 0$$ is equivalent to the condition that $$\sup\limits_{i=1,2,...,N-1} \left|X^i- u\left(i/N\right)\right| \rightarrow 0.$$}
\end{rem}

\begin{thm}\label{mesodynamic}
Suppose that $(Q_N(X_N^b(0)),Q_N(X_N^a(0))) \implies (u^b(0),u^a(0))$ in law in $C_0((0,1)) \times C_0((0,1)) $. Let $(\hat{X}_N^b(t), \hat{X}_N^a(t),m_N(t))$ be our dynamic mesoscopic model with initial data $$(X_N^b(0), X_N^a(0),m(0)),$$ and let $(\hat{u}^b(t), \hat{u}^a(t),m(t))$ be our dynamic macroscopic model with initial data $$(u^b(0),u^a(0),m(0)).$$ Then we have that 
\begin{equation*}
(Q_N(\hat{X}^b_N(N^2t)),Q_N(\hat{X}^a_N(N^2t)), m_N(N^2t)) \implies (\hat{u}^b(t), \hat{u}^a(t), m(t))
\end{equation*}
in law in the space $\mathbb{D}([0,\infty); C_0((0,1))) \times \mathbb{D}([0,\infty); C_0((0,1))) \times  \mathbb{D}\left([0,\infty); \mathbb{R} \right).$
\end{thm}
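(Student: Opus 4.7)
The plan is to prove this by induction on the number of price changes, exploiting the fact that the dynamic model is built by concatenating static pieces separated by regenerative jumps. The two main inputs will be the static convergence result Theorem~\ref{mesostatic} and the continuity-type assumptions (i)--(ii) on $\theta^N_{k,m}, R^N$. As a first move, I would invoke the Skorokhod representation theorem to pass from weak convergence of initial data to almost-sure convergence on a common probability space; working pathwise simplifies everything that follows, since we can then couple the driving white noises and exponential variables $Y^i_u, Y^i_d$ across $N$ and the limit.

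For the base step on the interval $[0,\tau_N^1)$, Theorem~\ref{mesostatic} immediately gives convergence of $(Q_N(X_{N,1}^b(N^2 \cdot)), Q_N(X_{N,1}^a(N^2 \cdot)))$ to $(u_1^b, u_1^a)$ in $C([0,\infty); C_0((0,1))^2)$. Assumption (i) on the price-change rates, together with the uniform lower bound $c>0$ and the boundedness of the static profiles on compacts, implies that for $k\in\{u,d\}$
\begin{equation*}
t \;\mapsto\; \int_0^t N^2 \theta^N_{k,m_N^1}\bigl(X_{N,1}^b(N^2 s),X_{N,1}^a(N^2 s)\bigr)\,\mathrm{d}s
\end{equation*}
converges uniformly on compacts to $t\mapsto \int_0^t \theta_{k,m^1}(u_1^b(s),u_1^a(s))\,\mathrm{d}s$. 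Since this limiting rate is strictly positive, the first hitting times of the coupled exponentials $Y_u^1,Y_d^1$ converge, giving $N^{-2}\tau_{N,u}^1 \to \tau_u^1$, $N^{-2}\tau_{N,d}^1\to \tau_d^1$ and hence $N^{-2}\tau_N^1 \to \tau^1$ almost surely. Because $\tau^1_u \neq \tau^1_d$ a.s., the direction of the first price change also matches in the limit, so $m_N^2 \to m^2$.

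For the regeneration step, continuity in time of the limiting macroscopic profile yields $(Q_N(X_{N,1}^b(\tau_N^1)), Q_N(X_{N,1}^a(\tau_N^1))) \to (u_1^b(\tau^1), u_1^a(\tau^1))$ in $C_0((0,1))^2$. Assumption (ii) then gives $R^N(\,\cdot\,,k)\circ\tilde{Q}_N^{-1} \Rightarrow R(\,\cdot\,,k)$ in law, and a further Skorokhod representation lets us realize the regenerated profiles $(X_{N,2}^b(0), X_{N,2}^a(0))$ and their limit $(u_2^b(0),u_2^a(0))$ on the same probability space with almost-sure convergence. Conditional independence from the past (given the value at $\tau_N^1$, the direction, and the new mid) matches the same property of the limit, so the inductive hypothesis for the next block is exactly of the form required to apply Theorem~\ref{mesostatic} again. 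Iterating, we obtain joint almost-sure convergence of $((X_{N,i}^b,X_{N,i}^a),\tau_N^i,m_N^i)$ to $((u_i^b,u_i^a),\tau^i,m^i)$ for each $i$, and since $\tau^i \geq Y^i/\|\theta_{\cdot,\cdot}\|_\infty$ with $Y^i$ i.i.d.\ exponentials, the price-change times accumulate only at infinity, so only finitely many lie below any fixed horizon $T$.

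The final step is to assemble the convergence of the concatenated processes in the Skorokhod space $\mathbb{D}([0,\infty); C_0((0,1))^2) \times \mathbb{D}([0,\infty); \mathbb{R})$. Here the subtlety is that the limiting $(\hat u^b, \hat u^a, m)$ is discontinuous precisely at the deterministic-by-coupling times $T^j := \sum_{i \leq j}\tau^i$, and these are the only jumps; since $T^j_N := N^{-2}\sum_{i \leq j}\tau_N^i \to T^j$ almost surely, a standard time change $\lambda_N$ aligning the first $k$ jumps works on any $[0,T]$, establishing $J_1$-convergence. The step I expect to be most delicate is the one already flagged above: the joint convergence of the profile, the random stopping time, and the profile evaluated at that stopping time, since it requires controlling oscillations of the profile near a random time. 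This is handled by the fact that the limiting profile is continuous in time (the reflected SPDE solution is a.s.\ in $C([0,\infty); C_0((0,1)))$), so $\tau^i$ is a.s.\ a continuity point, and uniform-on-compacts convergence of the pre-jump profile transfers to convergence at the random stopping time.
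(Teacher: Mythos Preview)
Your proposal is correct and follows essentially the same route as the paper: induction on price changes using the static convergence (Theorem~\ref{mesostatic}), convergence of the rate integrals via assumption (i) to get convergence of the stopping times, assumption (ii) for the regeneration, and then continuity of the concatenation map into Skorokhod space to assemble the dynamic process. The paper packages these steps as separate propositions (analogues of Propositions~\ref{microtheta}, \ref{microregen}, \ref{microstopping}, \ref{microconcat} and \ref{microprice}) and an inductive vector convergence result (Theorem~\ref{mesodynind}), but the logical content is the same as what you describe.

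One organizational point worth tightening: you speak of applying the Skorokhod representation for the initial data, then ``a further Skorokhod representation'' after each regeneration, and then conclude ``joint almost-sure convergence'' of the whole sequence $((X_{N,i}^b,X_{N,i}^a),\tau_N^i,m_N^i)_i$. Repeated Skorokhod embeddings live on different probability spaces and do not automatically combine to give joint a.s.\ convergence of the full sequence. The paper handles this by first proving, by induction on $M$, \emph{weak} convergence of the finite vectors
\[
\bigl((Q_N(X_{N,i}^b(N^2\cdot)),Q_N(X_{N,i}^a(N^2\cdot)))_{i\le M},(\tau_{N}^i/N^2)_{i\le M-1},(m_N^i)_{i\le M}\bigr),
\]
using a single Skorokhod representation within each inductive step together with a conditioning argument to bring in the $(M{+}1)$-st block; only after the full vector convergence is established does it apply Skorokhod once to feed into the concatenation lemma. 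Your ingredients are all present, but arranging them this way avoids the ambiguity.
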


\section{Examples}

In this section we illustrate the flexibility of our set-up by discussing some ideas for different aspects of the model. 

\begin{eg}
\emph{Rate functions can be chosen such that $$\theta_{u,m}(u^1,u^2)= \gamma F\left(\int_0^{\epsilon} \left(u^1(x)-u^2(x)\right) \; \textrm{dx} \right) + \delta$$ and  $$\theta_{d,m}(u^1,u^2)=  \gamma F\left(\int_0^{\epsilon} \left(u^2(x)-u^1(x)\right) \; \textrm{dx} \right) + \delta$$
where $F$ is a non-negative continuous function, and $\gamma, \delta>0$. The rate at which price movements occur then has two components which are natural from a modelling standpoint. The first is a function of the local imbalance (the difference between the number of offers to buy and the number of offers to sell in a region close to the mid). The second is a fixed rate, intended to represent price movements due to exogenous factors. We will see in Section 5 how one might fit these parameters to data.}
\end{eg} 

\begin{eg}

\emph{We can incorporate large orders into our model. So far we have only directly considered small order sizes in our models, taking these wlog to be of size 1 in our microscopic models. We haven't, however, mentioned larger order sizes which would appear as ``jumps" in the macroscopic model profile in the limit. These can easily be incorporated by being assumed to appear (as one would expect) on a slower time scale to small orders. Large market orders which cause price changes are already accounted for in the existing set-up. We can also easily include extra stopping times $\tau_c^i$ into our models, which allow us to create a jump in the profile of the book without changing the price. This can model large cancellations or large limit orders for the book which do not trigger price changes. The rate at which these occur can then be given by some third rate function $\theta_{c,m}$, with the new profile again given by $R$ via the definition of $R(u^1,u^1,c)$, where we extend the definition of $R$ so that it now maps from $C_0((0,1)) \times C_0((0,1)) \times \{u,d,c\} \rightarrow \mathscr{M}(C_0((0,1)) \times C_0((0,1)))$. We note that in this model, the ``static" models in between stopping times do not refer to the evolution of the book in between price changes, but rather the evolution of the book in between both price changes and large orders.}

\end{eg}

\begin{eg}
\emph{As a particular case of the drift and volatility functions, $h_{k,m}$ and $\sigma_{k,m}$ we can take 
\begin{enumerate}
\item $h_{k,m}(x,u)= h_1(x)+ h_2(x)u$, and
\item $\sigma_{k,m}(x,u)= \sigma_1(x)+ \sigma_2(x)u$.
\end{enumerate}
The multiplicative terms $h_2(x)u$ and $\sigma_2(x)u$ can be thought of as self-exciting components for the order rates, whereby more orders on the book leads to faster trading. The remaining terms represent orders placed independently of the current order book profile.}
\end{eg}

\begin{eg}
\emph{We note that our regeneration functions, which determine the profiles of the two sides of the order book following price changes, allow us to choose profiles which both depend on the state of the book when the price change occurs, and are random. Therefore, natural deterministic choices, such as suitably removing orders from the previous profiles when the price changes, are permitted. We are also able to choose random profiles, such as sampling from invariant measures, or some combination of these two mechanisms.}
\end{eg}

\section{Numerical Investigation}

The aim of this final section is to demonstrate that even simple versions of the model can reproduce features of the price series and order books seen in
financial data. We describe the particular parameter choices for the model which will be used in this section, before giving an overview of the numerical 
scheme and introducing the LOBSTER dataset. Following this, we briefly explain the parameter estimation procedure used, before presenting numerical 
illustrations and results by plugging in the estimated parameters into the simulation algorithms. We would like to emphasise here that the intention 
behind this section is to demonstrate that sensible order book simulations can readily be obtained from the model - we do not claim that the numerical 
scheme implemented, or the methods used to fit the parameters, are optimal.

\subsection{The Model}

We will work with a special case of the models which fall within the framework described in the earlier sections. The two sides of the book, $u^b$ and $u^a$, will evolve in between price changes according to the reflected SPDEs
\begin{equation*}
%%\frac{\partial u^k}{\partial x}= \Delta u + f(x)+ \sigma(x) \frac{\partial^2 W}{\partial x \partial t}^k + \eta^k.
%%added \alpha
\frac{\partial u^k}{\partial x}= \alpha \Delta u^k + f(x)+ \sigma(x) \frac{\partial^2 W^k}{\partial x \partial t} + \eta^k,
\end{equation*}
for $k \in  \{b, a \}$. The implicit assumption here is that the order arrival rates, which give rise to the drift and volatility terms $f$ and $\sigma$ in our SPDE limit, depend on the distance from the mid only, and there is no dependence on the number of orders which are currently on the book at that price. We have also imposed that the coefficients $f$ and $\sigma$ here do not depend on $k$. This is due to the symmetry in the order arrival rates for the bid and ask sides of the book, which can be seen in the data.  

We will now describe the form of the rate functions which will be used in this section, $\theta_u(u)$ and $\theta_d(u)$, which determine the rates at which the price moves up and down respectively. In a particular case of Example 1 of Section 4, the rate at time $t$ for an upward price jump will be given by 
\begin{equation}\label{rate1}
\theta_u(u^b(t,\cdot),u^a(t,\cdot))=\gamma \max \left(\int_0^{\epsilon} \left( u^b(t,x) - u^a(t,x) \right) \textrm{d}x, 0
 \right) + \delta
\end{equation}
and the rate of a downward price movement will similarly be given by
\begin{equation}\label{rate2}
\theta_d(u^b(t,\cdot),u^a(t,\cdot))=\gamma \max \left(\int_0^{\epsilon} \left( u^a(t,x) - u^b(t,x) \right) \textrm{d}x, 0
 \right) + \delta.
\end{equation}
The first terms in these rates represents the contribution of order imbalance close to the mid as a driving factor for price movement, whilst the second terms here represent price movements due to additional exogenous factors. 

Our regeneration functions, which determine the profiles of the two sides of the book following price changes, will simply shift the profiles of the two sides of the book by the size of a price jump, in the relevant direction. That is, $R$ is given by 
\begin{equation}\label{reinitialise} \begin{split}
R\left(u^b, u^a, u \right)=
\begin{cases}
(0, u^a(x+\epsilon))  \hspace*{4mm}\text{for}\hspace*{2mm}x\in[0,\epsilon],\\ 
(u^b(x-\epsilon), u^a(x+\epsilon))  \hspace*{4mm}\text{for}\hspace*{2mm}x\in[\epsilon,1-\epsilon],\\
(u^b(x-\epsilon), 0)  \hspace*{4mm}\text{for}\hspace*{2mm}x\in[1-\epsilon,1],\\
\end{cases} \\
R\left(u^b ,u^a, d \right)=
\begin{cases}
(u^b(x+\epsilon),0)  \hspace*{4mm}\text{for}\hspace*{2mm}x\in[0,\epsilon],\\ 
(u^b(x+\epsilon), u^a(x-\epsilon))  \hspace*{4mm}\text{for}\hspace*{2mm}x\in[\epsilon,1-\epsilon],\\
(0,u^a(x-\epsilon))  \hspace*{4mm}\text{for}\hspace*{2mm}x\in[1-\epsilon,1],\\
\end{cases}
\end{split}
\end{equation}

We note here that, strictly speaking, some of these new profiles do not satisfy the Dirichlet conditions imposed earlier in the analysis. However, the equations can be shown to have solutions when started from a general positive continuous initial profile, with the solution in $C_0((0,1))^+$ at all positive times. In addition, the regeneration function as described here has a natural interpretation simply as the best bid/ask queue having been depleted to trigger the price change. 

\subsection{The Numerical Scheme}

We use a forward time-stepping scheme in order to simulate our equations in between price movements. The equation is discretised into $M$ time steps and $N$ space steps, and we define $t_i:= iT/M$, $x_i:= i/N$. We denote the simulated height of the bid/ask sides of the book at time $t_j$ and position $x_i$ by $u^b(t_j,x_i)$ and $u^a(t_j,x_i)$ respectively. The simulated price process at time $t_j$ is denoted by $p(t_j)$. Given our simulated solution up to the $j^{\textrm{th}}$ time step, we begin our approximation of the following time-step by first determining whether there should be a price movement at this time-step. Note that we should take the price jump size, $\epsilon$, to be equal to a multiple of the space step in that $\epsilon=k/N$ for an integer $k$. In our simulations, we will take $k=1$. We start by defining $\pi^+[t_j]$ and $\pi^-[t_j]$
%%$p^+[t_j]$ and $p^-[t_j]$ 
%% confusing notation when price is $p$
which give the probabilities of upward of downward price jumps in that time-step. These are approximations of the probabilities of price jumps in that time period given by the rates $\theta_u(u^b(t,\cdot),u^a(t,\cdot))$ and $\theta_d(u^b(t,\cdot),u^a(t,\cdot))$ as in (\ref{rate1}) and (\ref{rate2}). Writing these explicitly, we have that 
%%I have made these fit with the earlier equations
\begin{equation*}
\pi^+[t_j]= \left( \max(\frac{\gamma}{2N} \left(u^b(t_j,x_1)- u^a(t_j,x_1) \right),0) +\delta \right) \times T/M.
\end{equation*}
and 
\begin{equation*}
\pi^-[t_j]= \left(\max(\frac{\gamma}{2N} \left(u^a(t_j,x_1)- u^b(t_j,x_1) \right), 0) + \delta \right) \times T/M.
\end{equation*}
%% surely you mean a uniform rv
%%We simulate a normal random variable on $[0,1]$, 
We simulate a uniform random variable on $[0,1]$, 
which we denote by $Y[t_j]$. If $Y[t_j] < \pi^+[t_j]$ we take this as an indication that the price has moved up in that time-step. 
Similarly, if $\pi^+[t_j] \leq Y[t_j]<\pi^+[t_j]+\pi^-[t_j]$, 
%% I dont understand this. By construction $\pi^+=-\pi^-$ so this doesnt make sense.
%% Also where is $\delta$ in this expression - I guess it should the positive part of each + \delta; hence my earlier modification 
we take this as an indication that there has been a downward price change at this time-step. If we are in neither of these cases, the simulated order book does not change price during this time-step. If there has been a price increase, we update our price process by setting $p[t_{j+1}]=p[t_j] + \epsilon$, and adjust the profiles $u^b$ and $u^a$ by discrete approximations to (\ref{reinitialise}). That is, we simply set
\begin{equation*}
u^b(t_j,x_0)= u^a(t_j,x_{N})=0,
\end{equation*}
and define 
\begin{equation*}
u^b(t_j,x_i)=u^b(t_j,x_{i-1})
\end{equation*}
for $i \in \{1,..,N\}$, and 
\begin{equation*}
u^a(t_j,x_i)=u^a(t_j,x_{i+1})
\end{equation*}
for $i \in \{0,...,N-1 \}$. We analogously update $u^b$, $u^a$ and $p$ in the event of a downward price movement. If there is no price change at this time-step, we do not update $u^b$, $u^a$ at this point, and set $p(t_{j+1})=p(t_j)$. At the end of this price updating procedure, we then simulate the profiles of $u^b$ and $u^a$ at the next time-step by setting, for $i \in \{1, 2,...,N-1 \}$
\begin{equation*}\label{num1}\begin{split}
u^b(t_{j+1},x_i):= & \max \left\{ u^b(t_j,x_i)+ \frac{T N^2}{M}\left( u^b(t_j,x_{i+1})+ u^b(t_j,x_{i-1})-2u^b(t_j,x_i) \right) \right.  \\ & \left. + \frac{T}{M}f(x_i) + \frac{\sqrt{TN}}{\sqrt{M}}\sigma(x_i)Z_{i,j}^b,0   \right\},
\end{split}
\end{equation*}
and similarly 
\begin{equation*}\label{num2}\begin{split}
u^a(t_{j+1},x_i):= & \max \left\{ u^a(t_j,x_i)+ \frac{TN^2}{M}\left( u^a(t_j,x_{i+1})+ u^a(t_j,x_{i-1})-2u^a(t_j,x_i) \right) \right.  \\ & \left. + \frac{T}{M}f(x_i) + \frac{\sqrt{TN}}{\sqrt{M}}\sigma(x_i)Z_{i,j}^a ,0  \right\}.
\end{split}
\end{equation*}
The $Z_{i,j}^k$ here are simply simulated unit normal random variables, and appear due to the discretisation of the space-time white noise component of our equations. We take the maximum with zero at each time-step in order to capture the influence of the reflection measures in the equations. This step completes our forward time-step, and we return the profiles $u^b(t_{j+1},\cdot)$, $u^a(t_{j+1},\cdot)$ and the price $p(t_{j+1})$ as our simulated order book at time $t_{j+1}$.

\begin{rem}

{\rm By simulating the problem in this way, we can also think of our numerical scheme as a simulation of a mesoscopic model consisting of 50 coupled SDEs. Our work from Section 3 shows that, for sufficiently many queues, the SDE model is essentially an SPDE model. We choose to present our numerics here as a discretisation of the macroscopic SPDE model.}

\end{rem}

%% What about convergence of the numerical scheme? Is this essentially Zhang?

\subsection{Dataset Description}

The data we have at our disposal originates from the LOBSTER (Limit Order Book System, The Efficient Reconstructor) database project initiated by the Humboldt University of Berlin, which gives access to reconstructed limit order book data for all NASDAQ traded stocks between June 2007 up to the present day. For each trading day of a given ticker, LOBSTER generates two distinct files. On the one hand, a \textit{message} file, which lists indicators for the different kinds of events which cause an update of the book (limit order arrivals and cancellations, executions or market orders, trading halts) within a prespecified price range. On the other hand, an \textit{order book} file, which displays the evolution of the book up to a chosen number of \textit{occupied} price levels (which can go up to 200, depending on the selected ticker). Order book events are timestamped according to seconds after midnight, and the decimal precision available ranges from milliseconds to nanoseconds. Our sample consists of data from the SPDR Trust Series I, and covers the 50 best levels on each side of the book on June 21 2012 between 11:00:00.000 and 12:00:00.000 EST, with tick size size being $1$ cent for the dataset.

\subsection{Parameter Fitting}

In this section, we aim to obtain input parameters for our model based on the data described in the previous section. We would like to emphasise here that we do not claim that our fitting methods in this section are particularly robust - we simply aim to fit the parameters in a reasonable way so that we can demonstrate that our model can provide sensible simulations of the evolution of the order book. In addition, we do not fit the smoothing parameter, $\alpha$ and instead choose this value such that the profiles obtained are on the correct scale. In order to account for the reflection measure, we will only use the data of limit orders placed in price levels which are currently occupied. This is done since we should only take into consideration the order arrivals when the order volumes are away from zero, so as not to bias our drift estimates with components which could be attributed to the reflection measure. 

Throughout this section, we will denote by $X^b_{j,i}$ the order volume in the $i^{\textrm{th}}$ price point below the best bid at the $j^{\textrm{th}}$ time-step for our dataset. We similarly denote by $X^a_{j,i}$ the order volume in the $i^{\textrm{th}}$ price point above the best ask at the $j^{\textrm{th}}$ time-step for our dataset. 

The scaling used will measure order volumes in units of $10^4$ orders, and the $50$ queues of the order book will map to positions $i/51$. This simply results in the SPDEs for each side of the book each being on a spatial interval of length 1. Since each queue of our dataset represents a price change of size $1$ cent, a spatial interval of size $1/51$ corresponds to $1$ cent under our scaling. Time units for the SPDEs will be measured in minutes. In this scale, our choice of smoothing parameter, $\alpha$ will be given by $\alpha=0.01$.

\subsection{Volatility and Drift Estimates}

We begin this section by fitting the volatility and drift parameters for our SPDE. Let $\hat{\sigma}(i/51)$ denote our estimate for $\sigma(i/51)$, the volatility at 
spatial position $i/51$, corresponding to $i$ price points away from the best bid/ask. We calculate $\hat{\sigma}$ by equating
\begin{equation*}
51 \times 60 \times \hat{\sigma}^2(i/51) = \frac{1}{2} \sum \left[ \textrm{Order Size} \times 10^{-4} \right]^2,
\end{equation*}
where the sum is taken over all orders on the bid/ask side of the book of all types which are $i$ price points away from the best bid/ask price respectively, with the omission of limit orders placed in unoccupied queues. This simply matches the quadratic variation of different spatial intervals in the model with the corresponding value from the dataset. The factors of $51$ and $3600$ appear here due to our space and time scaling respectively, whilst the $1/2$ is 
present simply as we are summing data from both sides of the book, rather than just one side. The factor of $10^{-4}$ appears as we are working with 
units of $10^4$ orders. 

We now fit the drift term. Let $d^b_i$ be the total number of bid limit orders placed in the hour long period in queues which are $i$ price points away from the 
best bid, once again disregarding those orders placed in unoccupied queues. We similarly define $d^a_i$, with $c^b_i$ and $c^a_i$ the corresponding values for market/cancellation orders at the different relative price points for the bid and ask sides of the book respectively. The net order flow over the hour long period at a price point which is $i$ ticks away from the best bid/ask queue is then obtained from the dataset as  
\begin{equation*}
\frac{1}{2}\left( d^a_i+d^b_i -c^a_i -c^b_i \right) \times 10^{-4}.
\end{equation*}
Denote by $\tilde{f}(i/51)$ our estimate for the drift term at spatial position $i$ and recall that we are working in time units of minutes. We estimate the drift parameter by equating
\begin{equation}\label{driftfit}
60 \times (\hat{f}(i/51) + 0.01\hat{\Delta}_iu)  = \frac{1}{2}\left( d^a_i+d^b_i -c^a_i -c^b_i \right) \times 10^{-4}.
\end{equation}
The term $0.01\hat{\Delta}_iu$ here represents the contribution of the laplacian term in our equation to the net order flow per minute at the $i^{\textrm{th}}$ level. $\hat{\Delta}_iu$ is obtained by calculating the average \emph{Dirichlet} laplacian at the bid/ask sides of the book at price points which are $i$ ticks away from the best bid/ask prices. By Dirichlet here, we mean that the laplacian is calculated at the first and last queues using the convention that the $0^{\textrm{th}}$ and $51^{\textrm{th}}$ queues are empty.

Figure 1 displays the estimated drift and volatility functions obtained from the data by the techniques described above. We note that the volatility is at its 
largest close to the mid, representing that there was significantly more activity at these price points over the trading period, as one would expect.

\begin{figure}[H]
        \centering
        \begin{subfigure}[b]{0.45\textwidth}
            \centering
            \includegraphics[width=\textwidth]{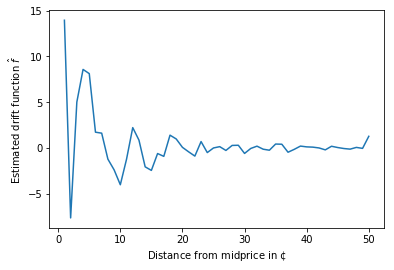}
            \caption[]%
            {{}}    
        \end{subfigure}
        \quad
        \begin{subfigure}[b]{0.45\textwidth}   
            \centering 
            \includegraphics[width=\textwidth]{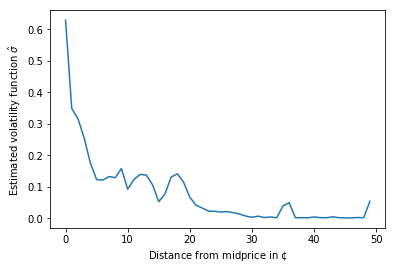}
            \caption[]%
            {{}}    
        \end{subfigure}
        \caption[]
        {\footnotesize Estimated Drift (a), and volatility (b).} 
\end{figure}

\subsubsection{Estimates for Rates of Price Changes}

It is only left to produce estimates for $\gamma$, the rate at which the model changes price due to imbalance of the bid and ask queues, and $\delta$, the rate at which the model changes price due to exogenous factors. Let $P(t)$ denote the price process of our dataset, measured in cents (recall that the tick size for the data is one cent). Our estimate of $\gamma$, $\hat{\gamma}$, is chosen such that it satisfies the equation
\begin{equation}\label{P_eqn}
P(1)-P(0)= \hat{\gamma} \times 60 \times I.
\end{equation}
The value $I$ here is the average local imbalance of the data over the entire period, given by
\begin{equation*}
I= \frac{1}{J} \sum\limits_{j=1}^J \left[ \int_0^{1/51} \left(\tilde{X}^b_j(x)- \tilde{X}^a_j(x) \right) \; \textrm{d}x \right],
\end{equation*}
where $J$ is the number of timesteps for our dataset, and $\tilde{X}^b_j(x)$ and $\tilde{X}^a_j(x)$ are obtained from $X^b_{j,i}$ and $X^a_{j,i}$ by setting $\tilde{X}^b_j(i/51)=10^{-4}X^b_{j,i}$ and $\tilde{X}^a_{j}(i/51)=10^{-4}X^a_{j,i}$, and then linearly interpolating in between these points. Writing $I$ in terms of $X^b$ and $X^a$, we have
\begin{equation*}
I= \frac{1}{2 \times 51 \times 10^4 \times J} \sum\limits_{j=1}^J \left( X^b_{j,1}-X^a_{j,1} \right).
\end{equation*}
The factor of $60$ appears in (\ref{P_eqn}) since we are working in units of minutes, and we have used an hour of data.
The rate for price movements in either direction due to exogenous factors, $\hat{\delta}$, is then fitted so that the quadratic variation of the price process 
from the dataset matches with the expected quadratic variation had the price moved due to our price changing mechanism with parameters $\hat{\gamma}$ 
and $\hat{\delta}$. The expected number of price changes due to the local imbalance component is given by $60 \hat{\gamma} \tilde{I}$, where
\begin{equation*}
\tilde{I}=\frac{1}{2 \times 51 \times 10^{4} \times J} \sum\limits_{j=1}^J \left| X^b_{j,1}-X^a_{j,1} \right|.
\end{equation*}
For a given $\delta$ we would expect an extra $2 \times 60 \times \delta$ price changes over the time period. Our parameter $\hat{\delta}$ is therefore chosen such that it solves
\begin{equation*}
120 \hat{\delta}= \left[ \sum\limits_{j=1}^J \left(P(j/J)-P((j-1)/J)\right)^2 \right] - 60\hat{\gamma}\tilde{I}.
\end{equation*}
Implementing the procedures described, we obtain the following values for $\hat{\delta}$ and $\hat{\gamma}$. 
\begin{table}[H]
\centering 
\begin{tabular}{ccccccc}
\hline\hline
$\hat{\gamma}$ & $\hat{\delta}$\\ \hline
2720 & 12.76 \\
\hline
\end{tabular}
\label{tab:hresult}
\caption{Estimated Price Change Rates.} 
\end{table}

\subsection{Simulation Results}

In this section, we will present the results of our numerical simulation of the order book under our model. In addition to the parameters which were fitted in the previous section, we fix the following additional parameters for the simulation.

\begin{table}[H]
\centering 
\begin{tabular}{ccccccc}
\hline\hline
$\alpha$ & $T$ & $N$ & $M$  \\ \hline
0.01 & 60 & 50 & 1500000 \\
\hline
\end{tabular}
\label{tab:hresult2}
\caption{Additional Model Parameters.} 
\end{table}

Recall that $\alpha$ refers to the smoothing parameter, $T$ the time period in minutes for which we run the simulation and $N$, $M$ the number of space and time steps used respectively.

We will now present graphs illustrating the outcome of our simulations. In order to emphasise the strength of the fit, we present graphs of our simulations next to the corresponding graphs from the dataset.

\begin{figure}[H]
        \centering
        \begin{subfigure}[b]{0.45\textwidth}
            \centering
            \includegraphics[width=\textwidth]{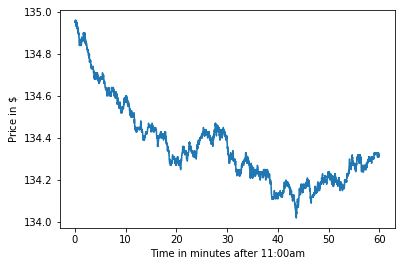}
            \caption[]%
            {{}}    
        \end{subfigure}
        \quad
        \begin{subfigure}[b]{0.45\textwidth}   
            \centering 
            \includegraphics[width=\textwidth]{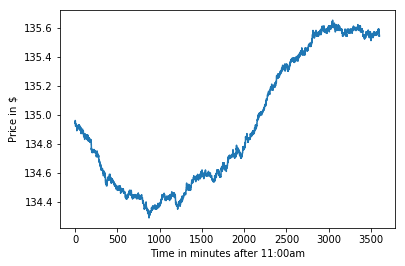}
            \caption[]%
            {{}}    
        \end{subfigure}
        \caption[]
        {\footnotesize Price process of SPDR1  (a), simulated price process from fitted model (b).} 
\end{figure}
\begin{figure}[H]
        \centering
        \begin{subfigure}[b]{0.45\textwidth}
            \centering
            \includegraphics[width=\textwidth]{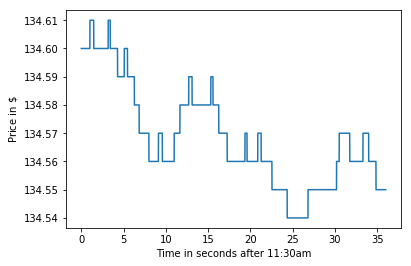}
            \caption[]%
            {{}}    
        \end{subfigure}
        \quad
        \begin{subfigure}[b]{0.45\textwidth}   
            \centering 
            \includegraphics[width=\textwidth]{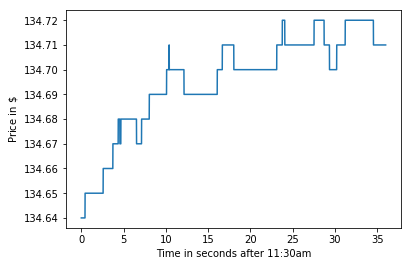}
            \caption[]%
            {{}}    
        \end{subfigure}
        \caption[]
        {\footnotesize Price process of SPDR1  (a), simulated price process from fitted model (b).} 
\end{figure}
We note that the quadratic variation of the price process from the dataset over the hour long time period is $0.2417$ (where the y-axis is measured in dollars), and the quadratic variation of the simulated path is $0.2495$. When fitting, we attributed $0.1531$ of the quadratic variation of the price from our dataset to exogenous price movements, with the remaining considered to be due to local imbalance. Based on this fitting, $0.1531$ of the quadratic variation for our simulated price process can also be attributed to exogenous movements, and once again the remainder here is due to local 
imbalance. That is, local imbalance contributed to $0.0964$ of the quadratic variation of the simulated price process, compared to $0.0886$ for the dataset. Therefore, price movements due to local imbalance were produced at a realistic rate by the simulation. This is compatible with the fact that the average order book profile over the hour long period appears to match well with that which was observed in the original data, 
which would mean that in particular we might expect the local imbalance for the real data and the simulated process to be of similar magnitude 
on average. The following figures show the average profiles of the ask side of the book for the real and simulated processes, together with snapshots 
of these profiles.

\begin{figure}[H]
        \centering
        \begin{subfigure}[b]{0.45\textwidth}
            \centering
            \includegraphics[width=\textwidth]{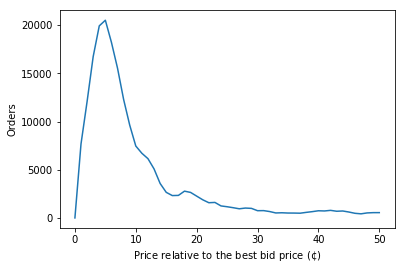}
            \caption[]%
            {{}}    
        \end{subfigure}
        \quad
        \begin{subfigure}[b]{0.45\textwidth}   
            \centering 
            \includegraphics[width=\textwidth]{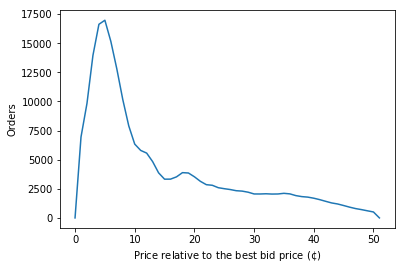}
            \caption[]%
            {{}}    
        \end{subfigure}
        \caption[]
        {\footnotesize Time averaged order book profiles for the ask side of SPDR1 (a) and our simulation (b).} 
\end{figure}
\begin{figure}[H]\label{fig:book-profile}
        \centering
        \begin{subfigure}[b]{0.45\textwidth}
            \centering
            \includegraphics[width=\textwidth]{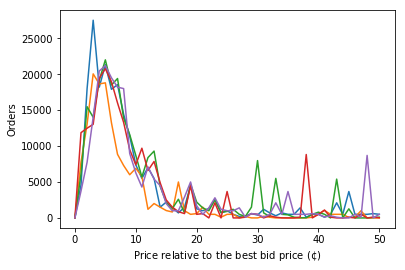}
            \caption[]%
            {{}}    
        \end{subfigure}
        \quad
        \begin{subfigure}[b]{0.45\textwidth}   
            \centering 
            \includegraphics[width=\textwidth]{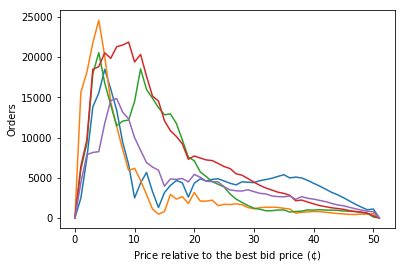}
            \caption[]%
            {{}}    
        \end{subfigure}
        \caption[]
        {\footnotesize Static snapshots of order book profiles at 12 minute intervals for the ask side of SPDR1 (a) and our simulation (b).} 
\end{figure}

We note that the average profile from our simulation is on the correct scale, and, like the average profile from the data, has most mass concentrated close to the mid. The snapshots from the simulation also demonstrate this. 

\appendix

\section{Appendix}

\subsection{Proof of Theorem \ref{microstatic} }

\begin{proof}
Since the particle/SDE systems decouple into two independent systems, it is sufficient to prove convergence of $\tilde{Z}^b_n$ to $X^b$ in $\mathscr{M}(\mathbb{D}([0,\infty); \mathbb{R}^{N-1}))$. It follows from the dynamics of $Z^b_n$ that the rescaled process, $\tilde{Z}_n^b(t)$, has dynamics given by:
\begin{enumerate}[(i)]
\item For $i \in \left\{1,...,N-1 \right\}$, $\tilde{Z}_n^b \rightarrow \tilde{Z}_n^b + \frac{e_i}{\sqrt{n}}$ at exponential rate 
\begin{equation*}\begin{split}
\frac{n}{2} & \sigma_{b,m}^2 (i, \tilde{Z}_n^{b,i})\left( 1+ \mathbbm{1}_{\left\{\tilde{Z}_n^{b,i}=0\right\}} \right) + \sqrt{n}f_{b,m}\left(i, \tilde{Z}_n^{b,i} \right).
\end{split}
\end{equation*}
\item For $i \in \left\{1,...,N-1 \right\}$, $\tilde{Z}_n^b \rightarrow \tilde{Z}_n^b -\frac{e_i}{\sqrt{n}}$ at exponential rate 
\begin{equation*}
\begin{split}
\frac{n}{2} &  \sigma_{b,m}^2 (i, \tilde{Z}_n^{b,i}) \mathbbm{1}_{\left\{\tilde{Z}_n^{b,i} \geq \frac{1}{\sqrt{n}} \right\}} + \sqrt{n}g_{b,m}\left(i, \tilde{Z}_n^{b,i} \right)\mathbbm{1}_{\left\{ \tilde{Z}_n^{b,i} \geq \frac{1}{\sqrt{n}} \right\}}.
\end{split}
\end{equation*}
\item  For $i \in \left\{1,...,N-1 \right\}$, $\tilde{Z}_n^b \rightarrow \tilde{Z}_n^b+\frac{e_{i-1}}{\sqrt{n}}-  \frac{e_i}{\sqrt{n}}$ at exponential rate 
\begin{equation*}
\alpha_{b}\sqrt{n} \tilde{Z}_n^{b,i}
\end{equation*}
\item  For $i \in \left\{1,...,N-1 \right\}$, $\tilde{Z}_n^b \rightarrow \tilde{Z}_n^b+\frac{e_{i+1}}{\sqrt{n}}-  \frac{e_i}{\sqrt{n}}$ at exponential rate 
\begin{equation*}
\alpha_{b}\sqrt{n} \tilde{Z}_n^{b,i}
\end{equation*}
\end{enumerate}
We note that the indicator functions above will ensure that our processes converge to \emph{reflected} SDEs. In order to prove convergence of these processes, we will argue that their infinitesimal generators converge (see Corollary 4.8.7, \cite{EK}). We start by calculating the infinitesimal generators for the processes $\tilde{Z}_n^b$. Define: 
\begin{equation*}\begin{split}
& \Delta^l_{n,k} F(y):= \sqrt{n}\left[F(y)-F\left(y-\frac{e_k}{\sqrt{n}}\right)\right], \\
& \Delta^r_{n,k} F(y):= \sqrt{n}\left[F\left(y+\frac{e_k}{\sqrt{n}}\right)-F\left(y\right)\right], \\
& \Delta^{2}_{n,k}:= n\left[F\left(y+\frac{e_k}{\sqrt{n}}\right)+F\left(y-\frac{e_k}{\sqrt{n}}\right)-2F(y)\right].
\end{split}
\end{equation*}  
For ease of notation, we use the convention that terms are zero whenever they refer to the $0^{\textrm{th}}$ or $N^{\textrm{th}}$ queues. Then we have that, for $F \in C_0([0, \infty)^{N-1}))$, the continuous functions on $[0,\infty)^{N-1}$ which vanish at $\infty$, 
\begin{equation*}\begin{split}
\frac{1}{t}\left( \mathbb{E}\left[F(\tilde{Z}_n(t))|\tilde{Z}_n(0)=y \right] - F(y) \right)= & \sum\limits_{k=1}^{N-1} \Delta_{n,k}^r F(y) \left(\frac{\sqrt{n}}{2}\sigma^2_{b,m} \left(k,y_k\right)\left(1+\mathbbm{1}_{\left\{y_k=0\right\}}\right)\right)\\ &+ \sum\limits_{k=1}^{N-1} \Delta_{n,k}^rF(y)f_{b,m} \left(k,y_k\right) \\ &+\sum\limits_{k=1}^{N-1}  \left(\Delta_{n,k-1}^r F(y-\frac{e_k}{\sqrt{n}})-\Delta_{n,k}^lF(y) \right)y_k  \\ &+ \sum\limits_{k=1}^{N-1} \Delta_{n,k}^l F(y) \left(-\frac{\sqrt{n}}{2}\sigma^2_{b,m} \left(k,y_k\right)\mathbbm{1}_{\left\{y_k \geq \frac{1}{\sqrt{n}}\right\}}\right)\\ &+ \sum\limits_{k=1}^{N-1} \Delta_{n,k}^lF(y)\left(-g_{b,m}\left(k,y_k\right)\mathbbm{1}_{\left\{y_k \geq \frac{1}{\sqrt{n}}\right\}} \right) \\ &+\sum\limits_{k=1}^{N-1} \left(\Delta_{n,k+1}^r F(y- \frac{e_k}{\sqrt{n}})-\Delta_{n,k}^lF(y) \right)y_k  \\ &+ R_{y,t},  
\end{split}
\end{equation*}
where $R_{y,t} \rightarrow 0$ as $t \rightarrow 0$ uniformly in compact sets for $y$. Note that this control on $R_{y,t}$ is a consequence of the local boundedness of $\sigma$, $f$ and $g$, since these conditions ensure that the jump rate of the process after the first jump from $y$ can be bounded. If we further assumed that $F$ were smooth with compact support, so $F \in C_c^{\infty}([0,\infty)^{N-1})$, we obtain that the remainder term $R_{y,t}$ converges uniformly to zero for $y \in (\frac{1}{\sqrt{n}}\mathbb{N})^{N-1}$. Therefore, writing $A_n$ for the generator of $\tilde{Z}_n$ and rearranging gives that for $F \in C_c^{\infty}([0,\infty)^{N-1})$,
\begin{equation*}\begin{split}
A_nF(y)=& \sum\limits_{k=1}^{N-1} \frac{1}{2} \Delta^2_{n,k}F(y) \mathbbm{1}_{\left\{y_k \geq \frac{1}{\sqrt{n}}\right\}} \sigma^2_{b,m}(k,y_k) + \sum\limits_{k=1}^{N-1} \Delta_{n,k}^rF(y) \sqrt{n}\sigma^2_{b,m}(k,y_k) \mathbbm{1}_{\left\{y_k=0\right\}} 
\\ &+ \sum\limits_{k=1}^{N-1} \Delta_{n,k}^rF(y)f_{b,m}(k,y_k) - \sum\limits_{k=1}^{N-1} \Delta_{n,k}^lF(y)g_{b,m}(k,y_k)\mathbbm{1}_{\left\{y_k \geq \frac{1}{\sqrt{n}}\right\}} 
\\ & + \sum\limits_{k=1}^{N-1} \left( \Delta_{n,k-1}^r F(y - \frac{e_k}{\sqrt{n}}) + \Delta_{n,k+1}^r F(y - \frac{e_k}{\sqrt{n}}) -2\Delta_{n,k}^l F(y) \right)y_k.
\end{split}
\end{equation*}
Recall that our candidate limiting process, $X^b$, satisfies the system of reflected SDEs 
\begin{equation*}\label{reflectedsde}
\textrm{d}X_t^{b,i}= \alpha_b(X_t^{b,i+1}+X_t^{b,i-1}-2X_t^{b,i}) \textrm{dt}+h_{b,m}(i,X_t^{b,i})\textrm{dt}+ \sigma_{b,m} (i,X_t^{b,i}) \textrm{d}W_t^{b,i} + \textrm{d}\eta_t^{b,i},
\end{equation*}
for $i=1,2,...,N-1$, with $X^0=X^N=0$ and $h_{b,m}:=f_{b,m}-g_{b,m}$. Note that by Theorem 4.1 in \cite{T}, we have existence of a strong solution and pathwise uniqueness for this system of SDEs. Inspection of the proof also reveals that the solution here, where the diffusion and drift coefficients do not have any explicit time-dependence, is continuous. We can calculate the corresponding generator to be 
\begin{equation*}
AF(x)= \frac{1}{2} \sum\limits_{k=1}^{N-1} \sigma_{b,m}^2(k,x_k) \frac{\partial^2F}{\partial x_k^2} \;  + \; \sum\limits_{k=1}^{N-1} \left[h_{b,m}(k,x_k)+ \alpha_b(x_{k+1}+x_{k-1}-2x_k) \right] \frac{\partial F}{\partial x_k},
\end{equation*} 
acting on the domain 
\begin{equation*}
\mathscr{D}(A)= \left\{ F \in C_0^2([0,\infty)^{N-1}) \; \; \textrm{s.t.} \; \;  \forall k, \; \;  \frac{\partial F}{\partial x_k}\bigg|_{x_k=0}=0 \right\}.
\end{equation*}
This has a core given by 
\begin{equation*}
C(A)= \left\{ F \in C_c^{\infty}([0,\infty)^{N-1}) \; \; \textrm{s.t.} \; \;  \forall k, \; \;  \frac{\partial F}{\partial x_k} \bigg|_{x_k=0}=0 \right\}.
\end{equation*}
In our setting it is enough to prove (see Corollary 4.8.7 in \cite{EK}) that for all $F \in C(A)$,
\begin{equation*}\label{gen}
\sup\limits_{y \in \frac{1}{\sqrt{n}} \mathbb{N}^{N-1}}|A_nF(y)-AF(y)| \rightarrow 0.
\end{equation*}
First suppose that $y_k \geq \frac{1}{\sqrt{n}}$. Then, by Taylor's Theorem we have that 
\begin{equation*}
\frac{1}{2} \sigma_{b,m}^2(k,y_k) \left|\Delta_{n,k}^2F(y)- \frac{\partial^2F}{\partial y_k^2}(y)\right| \leq \frac{1}{6\sqrt{n}} \left\|\frac{\partial^3F}{\partial y_k^3}\right\|_{\infty} \|\sigma_{b,m}^2\|_{F, \infty},
\end{equation*}
where $\|\sigma_{b,m}^2\|_{F, \infty}$ is the supremum of $\sigma_{b,m}^2$ over the support of F.
Similarly 
\begin{equation*}
\left|\Delta_{n,k}^rF(y)- \frac{\partial F}{\partial y_k}(y)\right|f_{b,m}(k,y_k) \leq \frac{1}{2\sqrt{n}} \left\| \frac{\partial^2 F}{\partial y_k^2} \right\|_{\infty} \|f_{b,m}\|_{F, \infty},
\end{equation*}
and
\begin{equation*}
\left|\Delta_{n,k}^lF(y)- \frac{\partial F}{\partial y_k}(y)\right|g_{b,m}(k,y_k) \leq \frac{1}{2\sqrt{n}} \left\| \frac{\partial^2 F}{\partial y_k^2} \right\|_{\infty} \|g_{b,m}\|_{F, \infty}.
\end{equation*}
Now suppose that $y_k =0$. Then, again by Taylor's theorem (extending F by reflection about the axes and using that $\frac{\partial F}{\partial y_k}$ is zero when $y_k =0$) we have that 
\begin{equation*}
\begin{split}
& \sigma_{b,m}^2(k,y_k)  \left| \sqrt{n}\Delta^r_{n,k}F(y)-\frac{1}{2} \frac{\partial^2F}{\partial y_k^2}(y) \right| \leq \frac{1}{6\sqrt{n}}\left\|\frac{\partial^3F}{\partial y_k^3}\right\|_{\infty}\|\sigma_{b,m}^2\|_{F, \infty}, 
\\ & f_{b,m}(k,y_k)\left|\Delta_{n,k}^rF(y)- \frac{\partial F}{\partial y_k}(y)\right| = f_{b,m}(k,y_k) \left|\Delta_{n,k}^rF(y)\right| \leq \frac{1}{2\sqrt{n}} \left\| \frac{\partial^2 F}{\partial y_k^2} \right\|_{\infty} \|f_{b,m}\|_{F, \infty}, 
\\ & g_{b,m}(k,y_k)\left|\Delta_{n,k}^lF(y)\mathbbm{1}_{\left\{y_k \geq \frac{1}{\sqrt{n}}\right\}}- \frac{\partial F}{\partial y_k}(y)\right| = 0.
\end{split}
\end{equation*}
We can argue similarly for the sum
\begin{equation*}
 \sum\limits_{k=1}^{N-1} \left( \Delta_{n,k-1}^r F(y -  \frac{e_k}{\sqrt{n}}) + \Delta_{n,k+1}^r F(y - \frac{e_k}{\sqrt{n}}) -2\Delta_{n,k}^l F(y) \right)y_k,
\end{equation*}
and find that
\begin{equation*}
\sup\limits_{y \in \frac{1}{\sqrt{n}} \mathbb{N}^{N-1}} \left| BF(y)- \sum\limits_{k=1}^{N-1} \left[ \Delta_{n,k-1}^r F(y - \frac{e_k}{\sqrt{n}}) + \Delta_{n,k+1}^r F(y - \frac{e_k}{\sqrt{n}}) -2\Delta_{n,k}^l F(y) \right] y_k \right| \rightarrow 0,
\end{equation*}
where 
\begin{equation*}
BF(y) := \sum\limits_{k=1}^{N-1} \left[ \frac{\partial F}{\partial y_{k-1}} + \frac{\partial F}{\partial y_{k+1}}  -2 \frac{\partial F}{\partial y_k} \right]y_k.
\end{equation*}
Recalling our notational convention that terms referring to the $0^{\textrm{th}}$ and $N^{\textrm{th}}$ queues are zero, we have that
\begin{equation*}
BF(y) = \sum\limits_{k=1}^{N-1} \frac{\partial F}{\partial y_k} \left( y_{k+1}+y_{k-1} -2y_k \right).
\end{equation*}
Putting these cases together gives that Theorem \ref{microstatic} holds.
\end{proof}

\subsection{Proof of Theorem \ref{microdynamic} }

The proof of Theorem \ref{microstatic} will be used as a basis for the proof of Theorem \ref{microdynamic}. We start by proving some continuity type results for certain maps, connecting some of the features of the microscopic models to their mesoscopic counterparts. These will be used in an inductive argument which will allow us to prove that 
\begin{multline*}
\left(\left(Z_{n,i}^b(nt)/\sqrt{n}\right)_{i=1}^{\infty}, \left(Z_{n,i}^a(nt)/\sqrt{n}\right)_{i=1}^{\infty}, (\tau^i_{n}/n)_{i=1}^{\infty}, (m^i_n)_{i=1}^{\infty} \right) \\ \implies ((X_i^b)_{i=1}^{\infty}, (X_i^a)_{i=1}^{\infty}, (\tau^i)_{i=1}^{\infty}, (m^i)_{i=1}^{\infty} )
\end{multline*} in law in $\mathbb{D}([0,\infty), \mathbb{R}^{N-1})^{\mathbb{N}} \times \mathbb{D}([0,\infty), \mathbb{R}^{N-1})^{\mathbb{N}} \times [0,\infty]^{\mathbb{N}} \times \mathbb{R}^{\mathbb{N}}$. Note that, for a metric space $M$, we use the topology of pointwise convergence for $M^{\mathbb{N}}$, which is itself metrizable. We then prove a continuity-type result for the map which sends these processes to their associated dynamic models. Finally, we conclude with an application of the Skorohod representation theorem. 

\begin{prop}\label{microtheta}
Fix some $m \in \mathbb{R}$. Let $w_n: [0,\infty) \rightarrow \mathbb{N}^{N-1} \times \mathbb{N}^{N-1}$ and $w:[0,\infty) \rightarrow \mathbb{R}^{N-1} \times \mathbb{R}^{N-1}$ be such that, for every $T>0$, 
 $$\sup\limits_{t \in [0,T]} \left| \frac{1}{\sqrt{n}}(w_n^1(nt),w_n^2(nt)) -(w^1(t),w^2(t))\right| \rightarrow 0.$$ Then we have that, for every $T>0$, 
$$\sup\limits_{t \in [0,T]} \left|n \theta^n_{u,m}(w_n(nt))- \theta_{u,m}(w(t)) \right|\rightarrow 0,$$ and $$\sup\limits_{t \in [0,T]} \left|n \theta^n_{d,m}(w_n(nt))- \theta_{d,m}(w(t)) \right| \rightarrow 0.$$ 
\end{prop}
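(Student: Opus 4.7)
The plan is to unpack directly the modulus-of-continuity hypothesis (i) from the previous subsection, which asserts that there is a single function $r$ with $\lim_{x \downarrow 0} r(x) = 0$ such that, for every $n$, every $k \in \{u,d\}$, every $v_i \in \mathbb{N}^{N-1}$ and every $u_i \in (\mathbb{R}^+)^{N-1}$,
$$|n\theta^n_{k,m}(v_1,v_2) - \theta_{k,m}(u_1,u_2)| \leq r(\|P_n((v_1,v_2)) - (u_1,u_2)\|),$$
where $P_n$ is the componentwise division-by-$\sqrt{n}$ map. Since $P_n(w_n(nt)) = \tfrac{1}{\sqrt{n}}(w_n^1(nt), w_n^2(nt))$, the standing hypothesis reads
$$\sup_{t \in [0,T]} \|P_n(w_n(nt)) - w(t)\| \to 0 \quad \text{as } n \to \infty.$$

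First I would fix $k \in \{u,d\}$, substitute $v = w_n(nt)$ and $u = w(t)$ into (i), and take the supremum over $t \in [0,T]$, yielding
$$\sup_{t \in [0,T]} |n\theta^n_{k,m}(w_n(nt)) - \theta_{k,m}(w(t))| \leq \sup_{t \in [0,T]} r(\|P_n(w_n(nt)) - w(t)\|).$$
The only care needed is converting the pointwise modulus property $r(x) \to 0$ as $x \to 0$ into a vanishing supremum on the right. To do this, fix $\varepsilon > 0$ and pick $\delta > 0$ such that $r(y) < \varepsilon$ for $0 \leq y < \delta$. The uniform convergence hypothesis then ensures that for all sufficiently large $n$, $\|P_n(w_n(nt)) - w(t)\| < \delta$ simultaneously for every $t \in [0,T]$, so $r(\|P_n(w_n(nt)) - w(t)\|) < \varepsilon$ uniformly in $t$. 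Letting $\varepsilon \downarrow 0$ completes the argument, and the same reasoning with $k=d$ handles the second inequality.

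There is no genuine obstacle here; the proposition is essentially a time-parametrised restatement of assumption (i), whose content is that a uniform modulus of continuity turns uniform convergence of the inputs into uniform convergence of the outputs. The value of isolating the result as a lemma is that it will feed cleanly into the inductive argument comparing the microscopic stopping times $\tau^i_n / n$ with their mesoscopic counterparts $\tau^i$ in the proof of Theorem~\ref{microdynamic}, since that argument will propagate uniform convergence of the rescaled queue processes on compact intervals through the random clocks $\int_0^t n\, \theta^n_{k, m^i_n}(Z_{n,i}^b(ns), Z_{n,i}^a(ns))\, ds$.
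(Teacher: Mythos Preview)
Your proposal is correct and takes essentially the same approach as the paper, which simply states that the result ``is a direct consequence of assumption (i) in Section 2.4.'' You have merely unpacked that direct consequence in full detail, correctly handling the passage from the pointwise modulus $r$ to the uniform-in-$t$ conclusion via an $\varepsilon$--$\delta$ argument.
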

\begin{proof}
This is a direct consequence of assumption (i) in Section 2.4.
\end{proof}

\begin{prop}\label{microregen}
Suppose $(Z_n^1,Z_n^2)$ is a sequence in $\mathscr{M}(\mathbb{N}^{N-1} \times \mathbb{N}^{N-1})$ such that $$P_n((Z_n^1,Z_n^2)) \implies (X^1,X^2)$$ in $\mathscr{M}((\mathbb{R}^+)^{N-1} \times (\mathbb{R}^+)^{N-1})$. Let $\tilde{R}_n: \mathscr{M}(\mathbb{N}^{N-1} \times \mathbb{N}^{N-1}) \times \{u,d\} \rightarrow \mathscr{M}(\mathbb{N}^{N-1} \times \mathbb{N}^{N-1})$ such that for $(\mu, k) \in \mathscr{M}(\mathbb{N}^{N-1} \times \mathbb{N}^{N-1}) \times \{ u,d \}$, 
\begin{equation*}
\tilde{R}_n(\mu, k)(A):= \int_{\mathbb{N}^{N-1} \times \mathbb{N}^{N-1}}  R_n(x_1,x_2,k)(A) \; \mu(\textrm{d}x_1, \textrm{d}x_2). 
\end{equation*}
Similarly, define $\tilde{R}: \mathscr{M}((\mathbb{R}^+)^{N-1} \times (\mathbb{R}^+)^{N-1} ) \times \{u,d\} \rightarrow \mathscr{M}((\mathbb{R}^+)^{N-1} \times (\mathbb{R}^+)^{N-1})$ such that for $(\nu, k) \in \mathscr{M}((\mathbb{R}^+)^{N-1} \times (\mathbb{R}^+)^{N-1} )  \times \{ u,d \}$,
\begin{equation*}
\tilde{R}(\nu, k)(B):= \int_{(\mathbb{R}^+)^{N-1} \times (\mathbb{R}^+)^{N-1}}  R(x_1,x_2,k)(A) \; \nu(\textrm{d}x_1, \textrm{d}x_2).
\end{equation*}
Then, for $k \in \{u,d\}$, $$\tilde{R}_n(Z_n^1,Z_n^2,k)\circ P_n^{-1} \implies \tilde{R}(X^1,X^2,k)$$ in $\mathscr{M}((\mathbb{R}^+)^{N-1} \times (\mathbb{R}^+)^{N-1})$.
\end{prop}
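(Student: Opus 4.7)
The plan is to unfold the definitions, transport everything to the macroscopic space $(\mathbb{R}^+)^{N-1} \times (\mathbb{R}^+)^{N-1}$, and conclude via a continuous mapping argument in which both the measures and the integrands vary with $n$. By the Portmanteau theorem, it suffices to prove that for every bounded continuous $g : (\mathbb{R}^+)^{N-1} \times (\mathbb{R}^+)^{N-1} \to \mathbb{R}$,
\begin{equation*}
\int g \, d\bigl(\tilde{R}_n(Z_n^1,Z_n^2,k) \circ P_n^{-1}\bigr) \longrightarrow \int g \, d\tilde{R}(X^1,X^2,k).
\end{equation*}
Unfolding $\tilde{R}_n$ by Fubini and changing variables under $P_n$ rewrites the left-hand side as $\int \phi_n \, d(Z_n^1,Z_n^2)$, where
\begin{equation*}
\phi_n(v_1,v_2) := \int g \, d\bigl(R^n(v_1,v_2,k) \circ P_n^{-1}\bigr).
\end{equation*}

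Next I would pass to the macroscopic picture. Define $\Phi_n : (\mathbb{R}^+)^{N-1} \times (\mathbb{R}^+)^{N-1} \to \mathbb{R}$ by setting $\Phi_n(u_1,u_2) := \phi_n(v_1^n,v_2^n)$, where $(v_1^n,v_2^n)$ is obtained by coordinatewise rounding $\sqrt{n}\,u_j$ to the nearest nonnegative integer, and let $\Phi(u_1,u_2) := \int g \, dR(u_1,u_2,k)$. Since $P_n(Z_n^1,Z_n^2)$ is supported on the lattice $P_n(\mathbb{N}^{N-1} \times \mathbb{N}^{N-1})$, on which $\Phi_n$ coincides with $\phi_n \circ P_n^{-1}$, one has
\begin{equation*}
\int \phi_n \, d(Z_n^1,Z_n^2) = \int \Phi_n \, d\bigl(P_n(Z_n^1,Z_n^2)\bigr),
\end{equation*}
while Fubini identifies the target right-hand side with $\int \Phi \, d(X^1,X^2)$.

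The central step is then to establish the continuous convergence $\Phi_n \to \Phi$, i.e.\ $\Phi_n(u_1^n,u_2^n) \to \Phi(u_1,u_2)$ whenever $(u_1^n,u_2^n) \to (u_1,u_2)$. For the rounded $(v_1^n,v_2^n)$ associated to $(u_1^n,u_2^n)$ the rounding error is $O(1/\sqrt{n})$, so $P_n(v_1^n,v_2^n) \to (u_1,u_2)$, and hypothesis (ii) of Section 2.4 delivers $R^n(v_1^n,v_2^n,k) \circ P_n^{-1} \implies R(u_1,u_2,k)$ in law on $\mathscr{M}((\mathbb{R}^+)^{N-1} \times (\mathbb{R}^+)^{N-1})$. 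Integrating the bounded continuous functional $g$ against this weakly convergent sequence of measures gives $\Phi_n(u_1^n,u_2^n) \to \Phi(u_1,u_2)$. Continuity of $\Phi$ is inherited from the assumed continuity of $R$ in the same way, and both $\Phi_n$ and $\Phi$ are uniformly bounded by $\|g\|_\infty$.

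To conclude, combine the hypothesised weak convergence $P_n(Z_n^1,Z_n^2) \implies (X^1,X^2)$ with the uniform bound and continuous convergence $\Phi_n \to \Phi$: an extended continuous mapping argument (most cleanly delivered by invoking the Skorokhod representation theorem to pass to almost sure convergence on a common probability space, followed by bounded convergence) yields $\int \Phi_n \, d(P_n(Z_n^1,Z_n^2)) \to \int \Phi \, d(X^1,X^2)$, which is what we needed. The main obstacle is this final step, since the naive continuous mapping theorem requires a \emph{fixed} functional rather than a varying sequence; one must either cite a version that allows the integrand to depend on $n$ or run the Skorokhod/dominated convergence argument by hand, using the continuous convergence of paragraph~3 as the key input.
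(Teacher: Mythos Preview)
Your proposal is correct and follows essentially the same approach as the paper: Skorohod representation to upgrade weak to almost-sure convergence, assumption~(ii) of Section~2.4 applied pointwise, then bounded/dominated convergence. The paper's version is slightly more streamlined --- it applies Skorohod at the outset and tests against continuity sets rather than bounded continuous functions, which lets it bypass your rounding construction for $\Phi_n$ entirely (after Skorohod, the realized $(Z_n^1,Z_n^2)$ already sit on the lattice, so $\phi_n$ applies directly without any extension).
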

\begin{proof}
We apply the Skorohod representation theorem to the weak convergence of $P_n((Z_n^1,Z_n^2))$ to $(X^1,X^2)$, so we assume that $P_n((Z_n^1,Z_n^2)) \rightarrow (X^1,X^2)$ almost surely on some probability space $(\Omega, \mathscr{F}, \mathbb{P})$. We then have that $$R_n(Z_n^1,Z_n^2,k)\circ P_n^{-1} \rightarrow R(X^1,X^2,k)$$ in $\mathscr{M}((\mathbb{R}^+)^{N-1}) \times (\mathbb{R}^+)^{N-1})$ $\mathbb{P}$- almost surely. Let $A \in \mathscr{B}((\mathbb{R}^+)^{N-1} \times (\mathbb{R}^+)^{N-1})$ be a continuity set for the measure $\tilde{R}(X^1,X^2,k)$. We then have that $A$ is a continuity set for $R(X^1, X^2, k)$ $\mathbb{P}$- almost surely, from which it follows that 
\begin{equation*}
R_n(Z_n^1,Z_n^2,k)\circ P_n^{-1}(A) \rightarrow R(X^1,X^2,k)(A)
\end{equation*}
$\mathbb{P}$ - almost surely. Taking expectations then gives the result.
\end{proof}

\begin{prop}\label{microstopping}
Suppose that $f_n: [0,\infty) \rightarrow \mathbb{R}_{ >0  }$ and $f:[0,\infty) \rightarrow \mathbb{R}_{>0}$ are such that $f$ is continuous, there exists a constant $c >0$ such that $f \geq c$, and $f_n \rightarrow f$ uniformly on compact sets. Suppose also that $x_n \in \mathbb{R}$, $x \in \mathbb{R}$ such that $x_n \rightarrow x$. Define $$\tau_n:= \inf \left\{ t \geq 0 \; \big| \; x_n \leq \int_0^t f_n(s) \textrm{ds} \right\}$$ and 
$$\tau:= \inf \left\{ t \geq 0 \; \big| \; x \leq \int_0^t f(s) \textrm{ds} \right\}.$$
Then $\tau_n  \rightarrow \tau $.
\end{prop}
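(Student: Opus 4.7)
The plan is to reformulate the hitting times in terms of the integrated rate functions and then run a sandwich argument based on the strict monotonicity of the limit. Define $F_n(t) := \int_0^t f_n(s)\,ds$ and $F(t) := \int_0^t f(s)\,ds$, so that $\tau_n = \inf\{t \geq 0 : F_n(t) \geq x_n\}$ and $\tau = \inf\{t \geq 0 : F(t) \geq x\}$. Since $f$ is continuous with $f \geq c > 0$, the function $F$ is continuous, strictly increasing, and $F(t) \to \infty$ as $t \to \infty$, hence a continuous bijection from $[0,\infty)$ onto $[0,\infty)$. The uniform convergence of $f_n$ to $f$ on compact sets upgrades immediately to uniform convergence of $F_n$ to $F$ on compact sets via the elementary bound $|F_n(t) - F(t)| \leq T \sup_{s \in [0,T]}|f_n(s) - f(s)|$ for $t \in [0,T]$.

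For the main case $x > 0$, write $\tau = F^{-1}(x) > 0$. Given $\epsilon \in (0, \tau)$, strict monotonicity yields $F(\tau-\epsilon) < x < F(\tau+\epsilon)$, and one defines the margin $\delta := \min\{x - F(\tau-\epsilon), F(\tau+\epsilon) - x\} > 0$. For $n$ sufficiently large both $|x_n - x| < \delta/2$ and $\sup_{t \in [0,\tau+\epsilon]}|F_n(t)-F(t)| < \delta/2$ hold. These two estimates sandwich $F_n(\tau-\epsilon) < x_n < F_n(\tau+\epsilon)$, which by the definition of the infimum translates to $\tau-\epsilon \leq \tau_n \leq \tau+\epsilon$. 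Since $\epsilon$ is arbitrary, $\tau_n \to \tau$.

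For the degenerate case $x \leq 0$ one has $\tau = 0$ by convention, since $x \leq 0 = F(0)$. If $x < 0$ then $x_n < 0$ eventually, forcing $\tau_n = 0$. If $x = 0$, then for any $\epsilon > 0$, $F_n(\epsilon) \to F(\epsilon) \geq c\epsilon > 0$ while $x_n \to 0$, so $F_n(\epsilon) > x_n$ for large $n$, giving $0 \leq \tau_n \leq \epsilon$. In either case $\tau_n \to 0 = \tau$.

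No step is genuinely hard, but the structural feature enabling the argument is the uniform lower bound $f \geq c$: this forces $F$ to be strictly increasing and thus makes $F^{-1}$ continuous at $x$. Without it, $F$ could have a plateau at level $x$ and the first-hitting-time map would be discontinuous there, so the uniform convergence $F_n \to F$ would not translate into convergence of the inverses.
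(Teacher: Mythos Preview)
Your proof is correct and follows essentially the same sandwich argument as the paper's own proof: both show that for large $n$ one has $F_n(\tau-\epsilon) < x_n < F_n(\tau+\epsilon)$ (the paper obtains the increment estimate directly from the quantitative bound $f_n \geq c/2$ on compacts, whereas you phrase it via strict monotonicity of $F$ and a margin $\delta$). Your treatment is in fact slightly more careful, as you explicitly handle the degenerate case $x \leq 0$, which the paper leaves implicit.
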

\begin{proof}
Fix $\delta >0$. We argue that, for $n$ large enough, $\tau_n \leq \tau + \delta.$ Note that 
$$\left| \left(x_n - \int_0^{\tau} f_n(s) \textrm{ds}\right) - \left( x- \int_0^{\tau} f(s) \textrm{ds}\right) \right| \leq |x_n-x| + \tau \times \sup\limits_{t \in [0,\tau]}|f_n(t)-f(t)|.$$ 
Since $f \geq c > 0$, we have that for $n$ large enough $\inf\limits_{t \in [0,\tau]} f_n(t) > \frac{c}{2}$. Also, for $n$ large enough, $$|x_n-x|+ \tau \times \sup\limits_{t \in [0,\tau]}|f_n(t)-f(t)| < \frac{\delta c}{2}.$$ Therefore, $$x_n - \int_0^{\tau} f_n(s) \textrm{ds} < x- \int_0^{\tau} f(s) \textrm{ds} + \frac{\delta c}{2} = \frac{\delta c}{2}$$ for large enough n. So $$x_n- \int_0^{\tau+ \delta} f_n(s) \textrm{ds} \leq x_n- \int_0^{\tau} f_n(s) \textrm{ds} - \frac{\delta c}{2} < 0.$$ Therefore $\tau_n \leq \tau + \delta$ for large enough n. Similarly, we have  $\tau_n \geq \tau - \delta$ for large enough n, concluding the proof.
\end{proof}

We can now prove the following result, whose proof constitutes the main step in proving Theorem \ref{microdynamic}.

\begin{prop}\label{microvectorconvergence}
Suppose that $$\left(\frac{Z_{n,1}^b(0)}{\sqrt{n}}, \frac{Z_{n,1}^a(0)}{\sqrt{n}} \right) \implies (X^b_1(0), X^a_1(0))$$ in law in $(\mathbb{R}^+)^{N-1} \times (\mathbb{R}^+)^{N-1}$. Let $(\hat{Z}_n^b(t),\hat{Z}^a_n(t),m_n(t))$ be dynamic microscopic models with initial data $(Z_{n,1}^b(0),Z_{n,1}^a(0),m^1)$, and let $(\hat{X}^b(t),\hat{X}^a(t),m(t))$ be the dynamic mesoscopic model with initial data $(X_1^b(0),X_1^a(0),m^1)$. Then 
\begin{multline*}
\left(\left(Z_{n,i}^b(nt)/\sqrt{n}\right)_{i=1}^{\infty}, \left(Z_{n,i}^a(nt)/\sqrt{n}\right)_{i=1}^{\infty}, ( \tau^i_{n}/n )_{i=1}^{\infty}, (m^i_n)_{i=1}^{\infty} \right) \\ \implies ((X_i^b)_{i=1}^{\infty}, (X_i^a)_{i=1}^{\infty}, (\tau^i)_{i=1}^{\infty}, (m^i)_{i=1}^{\infty} )
\end{multline*} in law in $\mathbb{D}([0,\infty), \mathbb{R}^{N-1})^{\mathbb{N}} \times \mathbb{D}([0, \infty), \mathbb{R}^{N-1})^{\mathbb{N}} \times [0,\infty)^{\mathbb{N}} \times \mathbb{R}^{\mathbb{N}}$. 
\end{prop}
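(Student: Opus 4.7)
The plan is to proceed by induction on $M$, establishing joint convergence of the first $M$ coordinates of each factor; since the relevant topologies on the products are all that of pointwise convergence, this suffices for the claimed weak convergence. The key building blocks are Theorem \ref{microstatic} for the static piece at each inductive step, Propositions \ref{microtheta} and \ref{microstopping} for converting path convergence of the profiles into convergence of the scaled stopping times $\tau^M_{n,u}/n$ and $\tau^M_{n,d}/n$, Proposition \ref{microregen} for passing to the law of the regenerated initial profile, and repeated Skorohod representation to iterate almost-sure couplings.

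The base case $M=1$ is a direct application of Theorem \ref{microstatic} to the assumed initial-data convergence, which Skorohod representation realises as a common probability space on which the convergence is almost sure; since the limit $(X_1^b,X_1^a)$ has continuous sample paths, Skorohod convergence is automatically uniform on compacts a.s. For the inductive step, assume we have realised $Z^b_{n,i}(n\cdot)/\sqrt{n}\to X_i^b$ and $Z^a_{n,i}(n\cdot)/\sqrt{n}\to X_i^a$ uniformly on compacts for $i\le M$, $\tau^i_n/n\to\tau^i$ for $i<M$, and $m_n^i\to m^i$ for $i\le M$, all a.s. A change of variables writes $\tau^M_{n,u}/n$ as the hitting time of $Y^M_u$ by the increasing process
\[
t\mapsto\int_0^t n\,\theta^n_{u,m_n^M}\bigl(Z^b_{n,M}(ns),Z^a_{n,M}(ns)\bigr)\,ds,
\]
after identifying the exponential variable $Y^M_{n,u}$ with a single copy $Y^M_u$ on the common space. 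By Proposition \ref{microtheta} this integrand converges uniformly on compacts to $\theta_{u,m^M}(X^b_M(\cdot),X^a_M(\cdot))$, which is bounded below by $c>0$; Proposition \ref{microstopping} then yields $\tau^M_{n,u}/n\to\tau^M_u$ a.s., and similarly $\tau^M_{n,d}/n\to\tau^M_d$ a.s. Almost-sure distinctness of these two waiting times (their joint law is absolutely continuous given the lower bound on the rates) fixes the direction of the $M$-th price change in the limit, so $m_n^{M+1}\to m^{M+1}$ and $\tau^M_n/n\to\tau^M$ a.s. Uniform convergence of paths combined with stopping-time convergence yields $P_n(Z^b_{n,M}(\tau^M_n),Z^a_{n,M}(\tau^M_n))\to(X^b_M(\tau^M),X^a_M(\tau^M))$ a.s., so the regeneration hypothesis of Section 2.4 together with Proposition \ref{microregen} implies that the conditional law of the new initial profile $P_n(Z^b_{n,M+1}(0),Z^a_{n,M+1}(0))$ converges weakly to that of $(X^b_{M+1}(0),X^a_{M+1}(0))$. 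A further application of the Skorohod representation on an enlarged probability space realises this as an a.s.\ coupling, and Theorem \ref{microstatic} applied to the new initial data with mid $m^{M+1}_n\to m^{M+1}$ produces the required convergence of the $(M+1)$-th rescaled static model, closing the induction.

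The main obstacle is to iterate the Skorohod representations without disturbing the almost-sure convergence of the previously-coupled coordinates. This is handled by exploiting the conditional independence structure built into the definition of the dynamic model: at step $M+1$ the new static process depends on the preceding history only through the new initial profile, the price-change direction, and the new mid, so enlarging the probability space at each inductive step preserves all earlier a.s.\ convergences. A secondary subtlety is the non-coincidence of the limiting waiting times $\tau^M_u,\tau^M_d$ and the continuous dependence of the rate functions on the mid as $m^M_n\to m^M$; both are controlled by the standing continuity and lower-bound assumptions on $\theta_{k,m}$. Once the induction produces a.s.\ convergence of every finite-$M$ marginal on a single probability space, the weak convergence in the pointwise-product topology follows immediately.
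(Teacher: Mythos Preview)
Your proposal is correct and follows essentially the same inductive strategy as the paper, invoking the same auxiliary results (Theorem \ref{microstatic} and Propositions \ref{microtheta}, \ref{microstopping}, \ref{microregen}) together with Skorohod representation at each step. The only notable difference is in how the joint convergence at step $M+1$ is finalised: rather than iterating Skorohod representations on successively enlarged spaces as you propose, the paper tests directly against products of continuity sets and uses the conditional-independence structure to factor the probabilities, which sidesteps the bookkeeping you flag as the main obstacle.
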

\begin{proof}
We begin the proof by introducing the following notation. 
\begin{enumerate}[(i)]
\item $\mathscr{U}_n^{b,M}(t):= (Z_{n,i}^b(nt)/\sqrt{n})_{i=1}^{M}, \; \; \; \; \; \; \mathscr{U}^{b,M}(t):= (X_i^b(t))_{i=1}^M.$
\item $\mathscr{U}_n^{a,M}(t):= (Z_{n,i}^a(nt)/\sqrt{n})_{i=1}^{M}, \; \; \; \; \; \; \mathscr{U}^{a,M}(t):= (X_i^a(t))_{i=1}^M.$
\item $\tilde{\tau}_{n,u}^M:= (\tau_{n,u}^i/n)_{i=1}^M, \; \; \; \; \; \; \tilde{\tau}_u^M:= (\tau_u^i)_{i=1}^M.$
\item $\tilde{\tau}_{n,d}^M:= (\tau_{n,d}^i/n)_{i=1}^M, \; \; \; \; \; \; \tilde{\tau}_d^M:= (\tau_d^i)_{i=1}^M.$
\item $\tilde{m}_n^M:= (m_n^i)_{i=1}^M, \; \; \; \; \; \; \tilde{m}^M:= (m^i)_{i=1}^M.$

\end{enumerate}
We will prove by induction that for every $M \geq 1$, \begin{equation} \label{microvectorconv} (\mathscr{U}_n^{a,M}, \mathscr{U}_n^{b,M},\tilde{\tau}_{n,a}^{M-1}, \tilde{\tau}_{n,b}^{M-1}, \tilde{m}_n^M) \implies (\mathscr{U}^{a,M}, \mathscr{U}^{b,M}, \tilde{\tau}_{a}^{M-1}, \tilde{\tau}_{b}^{M-1}, \tilde{m}^M) \end{equation} in law in $\mathbb{D}([0,\infty), \mathbb{R}^{N-1})^M \times \mathbb{D}([0,\infty), \mathbb{R}^{N-1})^M \times [0,\infty)^{M-1} \times [0,\infty)^{M-1} \times \mathbb{R}^M$, from which the result follows. Note that by applying Theorem \ref{microstatic}, we have
$$(\mathscr{U}_n^{a,1},\mathscr{U}_n^{b,1}, \tilde{m}_n^1) \implies (\mathscr{U}^{a,1},\mathscr{U}^{b,1},\tilde{m}^1)$$ in $\mathbb{D}([0,\infty),\mathbb{R}^{N-1}) \times \mathbb{D}([0,\infty),\mathbb{R}^{N-1}) \times \mathbb{R}$. Suppose we had for some $M \geq 1$ that (\ref{microvectorconv}) holds. Then we have that
\begin{equation*}(\mathscr{U}_n^{a,M}, \mathscr{U}_n^{b,M},\tilde{\tau}_{n,a}^{M-1}, \tilde{\tau}_{n,b}^{M-1}, \tilde{m}_n^M, Y_{n,a}^M, Y_{n,b}^M) \implies (\mathscr{U}^{a,M},\mathscr{U}^{b,M},\tilde{\tau}_{a}^{M-1}, \tilde{\tau}_{b}^{M-1}, \tilde{m}^M, Y_a^M, Y_b^M).
\end{equation*}
By the Skorohod convergence theorem, we can assume that this convergence holds almost surely. Recall that 
\begin{equation*}
\tau_{n,u}^{M}:= \inf \left\{t \geq 0 \; \bigg| \; \int_0^t \theta_{u,m^{M}_n}^n(Z_{n,M}^b(s),Z_{n,M}^a(s))  \; \textrm{d}s \geq Y_{n,u}^{M} \right\},
\end{equation*} 
\begin{equation*}
\tau_{n,d}^{M}:= \inf \left\{t \geq 0  \; \bigg| \; \int_0^t \theta_{d,m^{M}_n}^n(Z_{n,M}^b(s),Z_{n,M}^a(s))  \; \textrm{d}s \geq Y_{n,d}^{M} \right\}.
\end{equation*}  By a change of variables, 
$$\int_0^t \theta_{u,m_n^M}^n(Z_{n,M}^b(s),Z_{n,M}^a(s) ) \textrm{d}s= \int_0^{t/n}n \theta_{u,m_n^M}^n(Z_{n,M}^b(nr),Z_{n,M}^a(nr) ) \textrm{d}r.$$ Therefore,
\begin{equation*} \begin{split} 
\frac{1}{n}\tau_{n,u}^{M}&= \frac{1}{n} \inf \left\{ t \geq 0 \; \bigg| \; \int_0^{t/n}n   \theta_{u, m_n^M}^n(Z_{n,M}^b(nr),Z_{n,M}^a(nr) ) \textrm{d}r \geq Y_{n,u}^{M} \right\} \\ & = \inf \left\{ t \geq 0 \; \bigg| \; \int_0^{t}n   \theta_{u,m_n^M}^n(Z_{n,M}^b(nr),Z_{n,M}^a(nr) ) \textrm{d}r \geq Y_{n,u}^{M} \right\}. 
\end{split} 
\end{equation*}
As $m_n^M$ and $m^M$ both take values in $m^1 + \epsilon \mathbb{Z}$ and $m_n^M \rightarrow m^M$ almost surely, we have that $m_n^M=m^M$ for large enough $n$ almost surely. Note that since the limits $X_M^a, X_M^b$ are continuous, convergence of $Z_{n,M}^a(nt)/\sqrt{n}$ to $X_M^a(t)$ and $Z_{n,M}^b(nt)/\sqrt{n}$ to $X_M^b(t)$ in $\mathbb{D}([0,\infty),\mathbb{R}^{N-1})$ implies that, for every $T>0$,  $$\sup\limits_{t \in [0,T]} \left|\frac{1}{\sqrt{n}}(Z_{n,M}^a(nt),Z_{n,M}^b(nt))-(X_M^a(t),X^b_M(t))\right| \rightarrow 0.$$ Since $\theta_{u,m}$ was assumed to be continuous and $X_M^b, X_M^a$ are continuous, $\theta_{u,m}(X_M^b(t),X_M^a(t))$ is continuous almost surely. We can therefore apply Propositions \ref{microtheta} and \ref{microstopping} to deduce that 
$$\frac{1}{n}\tau_{n,u}^{M} \rightarrow \inf \left\{ t \geq 0 \; \bigg| \; \int_0^t \theta_{u,m}(X_M^b(s),X_M^a(s))\;  \textrm{d}s \geq Y_u^{M} \right\}= \tau_u^{M}$$ almost surely. Similarly, $\frac{1}{n} \tau_{n,d}^{M} \rightarrow \tau_d^{M}$ almost surely. Note that $\tau_u^M \neq \tau_d^M$ almost surely. This implies that $m_n^{M+1} \rightarrow m^{M+1}$ almost surely. Therefore, we have deduced that 
\begin{equation} \label{microdynind}(\mathscr{U}_n^{a,M}, \mathscr{U}_n^{b,M},\tilde{\tau}_{n,a}^{M}, \tilde{\tau}_{n,b}^{M}, \tilde{m}_n^{M+1}) \implies (\mathscr{U}^{a,M},\mathscr{U}^{b,M},\tilde{\tau}_{a}^{M}, \tilde{\tau}_{b}^{M}, \tilde{m}^{M+1}).
\end{equation}
For $i=1,...M+1$, let $A_i \in \mathscr{B}(\mathbb{D}([0,\infty); \mathbb{R}^{N-1}))$ be a continuity set for the $\mathbb{D}([0,\infty); \mathbb{R}^{N-1})$- valued random variable $X^a_i$. Similarly, let $B_i$ be continuity sets for the random variables $X^b_i$, and let $p_i \in (m^1+ \epsilon\mathbb{Z} )$. For $i=1,...M$, let $C_{i},D_{i}$ be continuity sets for the random variables $\tau_u^i$ and $\tau_d^i$ respectively. Define:
$$\tilde{A}_K:= \prod\limits_{i=1}^K A_i, \; \; \; \; \; \; \tilde{B}_K:= \prod_{i=1}^K B_i, \; \; \; \; \; \; \tilde{C}_K:=\prod\limits_{i=1}^K C_i, \; \; \; \; \; \; \tilde{D}_K:=\prod\limits_{i=1}^K D_i, \; \; \; \; \; \; P_K:=\prod\limits_{i=1}^K \left\{ p_i \right\}.$$ Suppose that 
$$\mathbb{P}\left[(\mathscr{U}^{b,M}, \mathscr{U}^{a,M},\tilde{\tau}_{a}^{M}, \tilde{\tau}_{b}^{M}, \tilde{m}^{M+1}) \in \tilde{A}_M \times \tilde{B}_M \times \tilde{C}_M \times \tilde{D}_M \times P_{M+1} \right]> 0.$$ We then have that, for large enough $n$, 
\begin{equation*}
\mathbb{P}\left[(\mathscr{U}_n^{b,M}, \mathscr{U}^{a,M}_n,\tilde{\tau}_{n,a}^M, \tilde{\tau}_{n,b}^M, \tilde{m}_n^{M+1}) \in \tilde{A}_M \times \tilde{B}_M \times \tilde{C}_M \times \tilde{D}_M \times P_{M+1} \right] >0,
\end{equation*}
from which it follows that
\begin{equation*} 
\begin{split}
\mathbb{P}& \left[(\mathscr{U}_n^{b,M+1}, \mathscr{U}^{a,M+1}_n, \tilde{\tau}_{n,a}^{M}, \tilde{\tau}_{n,b}^{M}, \tilde{m}_n^{M+1}) \in \tilde{A}_{M+1} \times \tilde{B}_{M+1} \times \tilde{C}_{M} \times \tilde{D}_M \times P_{M+1} \right]
\\ =& \mathbb{P}\left[(\mathscr{U}_n^{b,M}, \mathscr{U}^{a,M}_n,\tilde{\tau}_{n,a}^M, \tilde{\tau}_{n,b}^M, \tilde{m}_n^{M+1}) \in \tilde{A}_M \times \tilde{B}_M \times \tilde{C}_M \times \tilde{D}_M \times P_{M+1} \right] 
\\ & \times \mathbb{P}_n\left[\frac{1}{\sqrt{n}}(Z_{n,M+1}^b(nt),Z^a_{n,M+1}(nt)) \in A_{M+1} \times B_{M+1} \right],
\end{split} 
\end{equation*}
where $\mathbb{P}_n$ denotes the conditional probability law of $\mathbb{P}$ given the event 
\begin{equation*}
(\mathscr{U}_n^{b,M}, \mathscr{U}^{a,M}_n,\tilde{\tau}_{n,a}^M, \tilde{\tau}_{n,b}^M, \tilde{m}_n^{M+1}) \in \tilde{A}_M \times \tilde{B}_M \times \tilde{C}_M \times \tilde{D}_M \times P_{M+1}.
\end{equation*}
By (\ref{microdynind}), we know that 
\begin{multline*}
\mathbb{P}\left[(\mathscr{U}_n^{b,M},\mathscr{U}^{a,M}_n,\tilde{\tau}_{n,a}^M, \tilde{\tau}_{n,b}^M, \tilde{m}_n^{M+1}) \in \tilde{A}_M \times \tilde{B}_M \times \tilde{C}_M \times \tilde{D}_M \times P_{M+1} \right] \\ \rightarrow \mathbb{P}\left[(\mathscr{U}^{b,M},\mathscr{U}^{a,M},\tilde{\tau}_{a}^M, \tilde{\tau}_{b}^M, \tilde{m}^{M+1}) \in \tilde{A}_M \times \tilde{B}_M \times \tilde{C}_M \times \tilde{D}_M \times P_{M+1} \right].
\end{multline*}
Let $\mathbb{Q}_n$ be the law on $(\mathbb{R}^+)^{N-1} \times (\mathbb{R}^+)^{N-1}$ induced by $R^n(Z_{n,M}^a(\tau^M_n),Z_{n,M}^b(\tau^M_n), k)\circ P_n^{-1}$ and the probability measure $\mathbb{P}_n$, where $k$ is the direction of the last price change. That is
\begin{equation*}
\mathbb{Q}_n(A)= \mathbb{E}^{\mathbb{P}_n} \left[ (R^n(Z_{n,M}^a(\tau^M_n),Z_{n,M}^b(\tau^M_n), k)\circ P_n^{-1}) (A) \right]
\end{equation*}
Similarly, we define the measure $\mathbb{Q}$ on $(\mathbb{R}^+)^{N-1} \times (\mathbb{R}^+)^{N-1}$ by setting
\begin{equation*}
\mathbb{Q}(A)= \mathbb{E}^{\tilde{\mathbb{P}}} \left[ R(X_{M}^a(\tau^M),X_{M}^b(\tau^M), k) (A) \right],
\end{equation*}
where $\tilde{\mathbb{P}}$ here is the conditional probability law given the event that 
\begin{equation*}
(\mathscr{U}^{b,M}, \mathscr{U}^{a,M},\tilde{\tau}_{a}^M, \tilde{\tau}_{b}^M, \tilde{m}^{M+1}) \in \tilde{A}_M \times \tilde{B}_M \times \tilde{C}_M \times \tilde{D}_M \times P_{M+1}.
\end{equation*}  
It follows from (\ref{microdynind}) and the continuity of $(X_M^b, X_M^a)$ that the law of $\frac{1}{\sqrt{n}}(Z_{n,M}^b(\tau^M_n),Z_{n,M}^a(\tau^M_n), k)$ under the measure $\mathbb{P}_n$ converges to the law of $(X_{M}^b(\tau^M),X_{M}^a(\tau^M), k)$ under $\tilde{\mathbb{P}}$. It then follows by an application of Proposition \ref{microregen} that $\mathbb{Q}_n \implies \mathbb{Q}$. We can now apply our result for convergence in a static setting, Theorem \ref{microstatic}, to deduce that  
\begin{equation*} 
\begin{split}
\mathbb{P}_n & \left[\frac{1}{\sqrt{n}}(Z_{n,M+1}^b(n \cdot),Z_{n,M+1}^a(n \cdot)) \in A_{M+1} \times B_{M+1} \right]  \rightarrow \tilde{\mathbb{P}}\left[(X_{M+1}^b, X_{M+1}^a) \in A_{M+1} \times B_{M+1}\right]. \end{split} 
\end{equation*}
We have therefore deduced that 
\begin{multline*} 
\mathbb{P} \left[(\mathscr{U}_n^{b,M+1},\mathscr{U}_n^{a,M+1},\tilde{\tau}_{n,a}^{M}, \tilde{\tau}_{n,b}^{M}, \tilde{m}_n^{M+1}) \in \tilde{A}_{M+1} \times \tilde{B}_{M+1} \times \tilde{C}_{M} \times \tilde{D}_M \times P_{M+1} \right] \\ \rightarrow \mathbb{P} \left[(\mathscr{U}^{b,M+1},\mathscr{U}^{a,M+1}, \tilde{\tau}_{a}^{M}, \tilde{\tau}_{b}^{M}, \tilde{m}^{M+1}) \in \tilde{A}_{M+1} \times \tilde{B}_{M+1} \times \tilde{C}_{M} \times \tilde{D}_M \times P_{M+1} \right]. 
\end{multline*}
So we have that $(\mathscr{U}_n^{b,M+1},\mathscr{U}_u^{a,M+1},\tilde{\tau}_{n,a}^{M}, \tilde{\tau}_{n,b}^{M}, \tilde{m}_n^{M+1}) \implies (\mathscr{U}^{b,M+1}, \mathscr{U}^{a,M+1},\tilde{\tau}_{a}^{M}, \tilde{\tau}_{b}^{M}, \tilde{m}^{M+1})$ which concludes our inductive argument.
\end{proof}

\begin{prop}\label{microconcat}
Let $g : \mathbb{D}([0,\infty); \mathbb{R}^{N-1})^{\mathbb{N}} \times [0, \infty)^{\mathbb{N}} \rightarrow \mathbb{D}([0,\infty); \mathbb{R}^{N-1})$ such that 
$$g((u_i)_{i=1}^{\infty}, (t_i)_{i=1}^{\infty})(t):= \sum\limits_{j=1}^{\infty} u_j\left( t- \sum\limits_{i=1}^{j-1} t_i \right) \mathbbm{1}_{ \left\{\sum\limits_{i=1}^{j-1} t_i \leq t < \sum\limits_{i=1}^j t_i \right\}}.$$
Suppose that 
\begin{enumerate}[(i)]
\item $t_i>0$ for every $i$.
\item $\sum\limits_{i = 1}^{\infty} t_i = \infty.$
\item $u_i$ is continuous for every $i$
\end{enumerate}
Then $g$ is continuous at the point $((u_i)_{i=1}^{\infty}, (t_i)_{i=1}^{\infty})$.
\end{prop}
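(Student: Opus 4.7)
The plan is to verify Skorohod convergence on every compact time interval: given sequences with $u_i^{(n)} \to u_i$ in $\mathbb{D}([0,\infty);\mathbb{R}^{N-1})$ and $t_i^{(n)} \to t_i$ in $[0,\infty)$ for every $i$, I will construct, for each suitable $T$, continuous strictly increasing bijections $\lambda_n$ of $[0,T]$ satisfying $\sup_{[0,T]}|\lambda_n(t)-t|\to 0$ and $\sup_{[0,T]}|g_n(\lambda_n(t))-g(t)|\to 0$, where $g_n := g((u_i^{(n)}),(t_i^{(n)}))$ and $g := g((u_i),(t_i))$.

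First, fix $T>0$ with $T\neq S_j$ for every $j$, writing $S_j := \sum_{i=1}^j t_i$. Hypotheses (i) and (ii) allow me to pick a finite $J$ with $S_{J-1}<T<S_J$. Set $S_j^{(n)}:=\sum_{i=1}^j t_i^{(n)}$ with $S_0 = S_0^{(n)} = 0$; the pointwise convergence $t_i^{(n)}\to t_i$ gives $S_j^{(n)}\to S_j$ for every $j\leq J$, so for all sufficiently large $n$ one has $0<S_1^{(n)}<\cdots<S_{J-1}^{(n)}<T<S_J^{(n)}$. Define $\lambda_n:[0,T]\to[0,T]$ to be the piecewise linear continuous bijection with $\lambda_n(S_j)=S_j^{(n)}$ for $j=0,1,\ldots,J-1$ and $\lambda_n(T)=T$; convergence of the node locations gives $\sup_{[0,T]}|\lambda_n(t)-t|\to 0$ at once.

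Next I would establish the uniform bound $\sup_{[0,T]}|g_n(\lambda_n(t))-g(t)|\to 0$. For $j\leq J$ and $t\in(S_{j-1},S_j\wedge T]$, the construction of $\lambda_n$ places $\lambda_n(t)$ in the $j$-th piece of $g_n$, so $g_n(\lambda_n(t))=u_j^{(n)}(\lambda_n(t)-S_{j-1}^{(n)})$ while $g(t)=u_j(t-S_{j-1})$. Writing $s_n:=\lambda_n(t)-S_{j-1}^{(n)}$, the triangle inequality gives
\begin{equation*}
|g_n(\lambda_n(t))-g(t)| \leq |u_j^{(n)}(s_n)-u_j(s_n)| + |u_j(s_n)-u_j(t-S_{j-1})|.
\end{equation*}
Because the limit $u_j$ is \emph{continuous}, the Skorohod convergence $u_j^{(n)}\to u_j$ upgrades to uniform convergence on compact subsets of $[0,\infty)$, and since $s_n$ stays in a bounded interval the first term tends to zero uniformly in $t$. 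The second term tends to zero uniformly because $s_n-(t-S_{j-1})\to 0$ uniformly in $t$ (both $\lambda_n(t)\to t$ and $S_{j-1}^{(n)}\to S_{j-1}$), and $u_j$ is uniformly continuous on compacts. Taking the maximum over the finitely many indices $j\leq J$ yields the desired bound.

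Combining these two facts gives Skorohod convergence of $g_n$ to $g$ in $\mathbb{D}([0,T];\mathbb{R}^{N-1})$ for all but countably many $T$, which is the standard characterisation of convergence in $\mathbb{D}([0,\infty);\mathbb{R}^{N-1})$. The main delicate point is the construction of the time change $\lambda_n$: it must precisely align the potential jumps of $g$ at the partition times $S_j$ with those of $g_n$ at the approximate partition times $S_j^{(n)}$, and it is the truncation to finitely many pieces, licensed by the assumption $\sum t_i = \infty$ together with positivity of each $t_i$, that makes this possible with a single piecewise linear homeomorphism.
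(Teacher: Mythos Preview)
Your proposal is correct and follows essentially the same route as the paper: restrict to continuity points $T$ of $g$, use $\sum t_i=\infty$ and $t_i>0$ to reduce to finitely many pieces on $[0,T]$, build the piecewise-linear time change $\lambda_n$ sending the partial sums $S_j$ to $S_j^{(n)}$, and then exploit continuity of each $u_j$ to upgrade Skorohod convergence of $u_j^{(n)}$ to local uniform convergence. Your triangle-inequality split of $|u_j^{(n)}(s_n)-u_j(t-S_{j-1})|$ into a uniform-convergence term and a uniform-continuity term is in fact a bit more explicit than the paper's final bound, which jumps directly to $\sup_{j\leq K}\sup_{t}|u_j^n(t)-u_j(t)|$; your version makes transparent why the mismatch between $\lambda_n(t)-S_{j-1}^{(n)}$ and $t-S_{j-1}$ causes no trouble.
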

\begin{proof}
Suppose that the sequence $((u_i^n)_{i=1}^{\infty}, (t_i^n)_{i=1}^{\infty})$ converges to $((u_i)_{i=1}^{\infty}, (t_i)_{i=1}^{\infty})$ in $\mathbb{D}([0,\infty) ; \mathbb{R}^{N-1})^{\mathbb{N}} \times [0,\infty)^{\mathbb{N}}$. It is sufficient to prove that 
\begin{equation*}
g((u_i^n)_{i=1}^{\infty}, (t_i^n)_{i=1}^{\infty}) \rightarrow g((u_i)_{i=1}^{\infty}, (t_i)_{i=1}^{\infty})
\end{equation*}
in $\mathbb{D}([0,T] ; \mathbb{R}^{N-1})$ for every $T \geq 0$ such that $g((u_i)_{i=1}^{\infty}, (t_i)_{i=1}^{\infty})$ is continuous at $T$. That is, it is enough to prove convergence in $\mathbb{D}([0,T] ; \mathbb{R}^{N-1})$ for every $T \geq 0$ such that 
\begin{equation*}
\sum\limits_{i = 1}^L t_i \neq T
\end{equation*}
for every $L$. Let $T \geq 0$ be such a point. By condition (ii), there exists $K \geq 1$ such that
\begin{equation}\label{K_sum}
\sum\limits_{i = 1}^K t_i > T.
\end{equation}
The functions $u_i$ are continuous, so we have by convergence of $u_i^n$ to $u_i$ in $\mathbb{D}([0,\infty) ; \mathbb{R}^{N-1})$ that $$\sup\limits_{t \in [0,T]} |u_i^n(t) -u_i(t)| \rightarrow 0$$ for each $i$. Recall that $f_n \rightarrow f$ in $\mathbb{D}( [0,T] ; M)$ for a metric space $M$ iff there exist continuous increasing bijections $\lambda_n: [0,T] \rightarrow [0,T]$ such that 
\begin{enumerate}[(a)]
\item $\sup\limits_{t \in [0,T]} |\lambda_n(t)-t| \rightarrow 0$ and 
\item $\sup\limits_{t \in [0,T]}|f_n(\lambda_n(t))-f(t)| \rightarrow 0$.
\end{enumerate}  
For our case, we define the continuous bijections $\lambda_n: [0,T] \rightarrow [0,T]$ such that for $m \geq 1$ we set $\lambda_n\left(\sum\limits_{i=1}^m t_i \right):= \sum\limits_{i=1}^m t_i^n$, and linearly interpolate between these points, together with the endpoints $\lambda_n(0)=0$ and $\lambda_n(T)=T$ (this may define the function for values greater than $T$ as well but we only care about the restriction). Then, for large enough $n$ so that the $t_i^n$ are all strictly positive, $\lambda_n$ is a strictly increasing continuous bijection from $[0,T]$ to $[0,T]$, and the $\lambda_n$ satisfy (a) above. Note that this can only be done if $T$ is a continuity point for $g$. We have that, for $t \in [0,T]$,
\begin{equation*}
\begin{split}
g((u_i^n)_{i=1}^{\infty}, (t_i^n)_{i=1}^{\infty})(\lambda_n(t))= & \sum\limits_{j=1}^{\infty} u_{j}^n\left(\lambda_n(t)-\sum\limits_{i=1}^{j-1}t_i^n \right) \mathbf{1}_{\left\{\sum\limits_{i=0}^{j-1}t_i^n \leq \lambda_n(t) < \sum\limits_{i=1}^j t_i^n \right\}} \\ = & \sum\limits_{j=1}^{\infty} u_{j-1}^n\left(\lambda_n(t)-\sum\limits_{i=0}^{j-1}t_i^n \right) \mathbf{1}_{\left\{\sum\limits_{i=0}^{j-1}t_i \leq t < \sum\limits_{i=0}^j t_i \right\}}.
\end{split}
\end{equation*}
It follows from (\ref{K_sum}) that 
\begin{equation*}
\begin{split}
& \sup\limits_{t \in [0,T]} | g((u_i^n)_{i=1}^{\infty}, (t_i^n)_{i=1}^{\infty})(\lambda_n(t)) - g((u_i)_{i=1}^{\infty}, (t_i)_{i=1}^{\infty})(t)|
\\ &   \leq \sup\limits_{j \leq K} \sup\limits_{t \in [0,T]} |u_{j-1}^n(t) - u_{j-1}(t)| \rightarrow 0.
\end{split}
\end{equation*}
Therefore (b) holds and we have the result.
\end{proof}
\begin{prop}\label{microprice}
Let $h: [0, \infty)^{\mathbb{N}} \times \mathbb{R}^{\mathbb{N}} \rightarrow \mathbb{D}([0,\infty); \mathbb{R})$ such that $$h((t_i)_{i=1}^{\infty}, (m_i)_{i=1}^{\infty}):= \sum\limits_{i=1}^{\infty} m_i \mathbbm{1}_{\left\{\sum\limits_{j=1}^{i-1} t_j \leq t < \sum\limits_{j=1}^i t_j \right\}}.$$ Suppose that 
\begin{enumerate}[(i)]
\item $t_i>0$ for every $i$.
\item $\sum\limits_{i = 1}^{\infty} t_i = \infty.$
\end{enumerate}
Then $h$ is continuous at the point $((t_i)_{i=1}^{\infty}, (m_i)_{i=1}^{\infty})$.
\begin{proof}
This is essentially the same as the proof of Proposition \ref{microconcat}.
\end{proof}
\end{prop}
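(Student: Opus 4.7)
The plan is to imitate the argument for Proposition \ref{microconcat}, but the situation here is strictly simpler: since $m_i$ is a real number (constant in time, not a càdlàg path), the only work is to line up the partial sums $\sum_{j\leq i} t_j^n$ with $\sum_{j\leq i} t_j$ via a suitable time change, after which the comparison of values reduces to the pointwise convergence $m_i^n \to m_i$.

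First I would take an arbitrary sequence $((t_i^n)_{i \geq 1}, (m_i^n)_{i \geq 1})$ converging to $((t_i)_{i \geq 1}, (m_i)_{i \geq 1})$ in the product (pointwise) topology on $[0,\infty)^{\mathbb{N}} \times \mathbb{R}^{\mathbb{N}}$. Write $h_n := h((t_i^n), (m_i^n))$ and $h := h((t_i), (m_i))$. By the definition of the Skorohod topology on $\mathbb{D}([0,\infty);\mathbb{R})$, it suffices to show $h_n \to h$ in $\mathbb{D}([0,T];\mathbb{R})$ for every $T$ that is a continuity point of $h$. Since $h$ is piecewise constant with possible jumps only at the partial sums $S_i := \sum_{j=1}^i t_j$, the continuity set of $h$ is $[0,\infty) \setminus \{S_i : i \geq 1\}$; I will work with any $T > 0$ not equal to any $S_i$.

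Next, using assumption (ii), fix $K \geq 1$ with $S_K = \sum_{j=1}^K t_j > T$. Exactly as in the proof of Proposition \ref{microconcat}, I define $\lambda_n : [0,T] \to [0,T]$ by $\lambda_n(S_i) := S_i^n := \sum_{j=1}^i t_j^n$ for each $i = 1, \ldots, K-1$ with $S_i < T$, together with the endpoints $\lambda_n(0) = 0$ and $\lambda_n(T) = T$, and linearly interpolate. Assumption (i), together with $t_i^n \to t_i$, ensures that for all sufficiently large $n$ the values $0 < S_1^n < \cdots < S_{K-1}^n < T$ are strictly ordered, so $\lambda_n$ is a continuous strictly increasing bijection of $[0,T]$. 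Pointwise convergence of $t_i^n$ to $t_i$ yields $\sup_{t \in [0,T]} |\lambda_n(t) - t| \to 0$, which gives condition (a) of the Skorohod criterion.

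For condition (b), observe that by construction $\lambda_n$ maps the interval $[S_{i-1}, S_i)$ bijectively onto $[S_{i-1}^n, S_i^n)$ for each $i \leq K-1$, and maps the remainder of $[0,T]$ into $[S_{K-1}^n, T]\subseteq[S_{K-1}^n, S_K^n)$ for large $n$. Hence $h_n(\lambda_n(t)) = m_i^n$ whenever $h(t) = m_i$, so
\begin{equation*}
\sup_{t \in [0,T]} |h_n(\lambda_n(t)) - h(t)| \leq \max_{1 \leq i \leq K} |m_i^n - m_i| \longrightarrow 0
\end{equation*}
by pointwise convergence of $m_i^n$ to $m_i$. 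Together with (a), this establishes $h_n \to h$ in $\mathbb{D}([0,T];\mathbb{R})$, and hence in $\mathbb{D}([0,\infty);\mathbb{R})$. The only mildly delicate point is verifying that $\lambda_n$ is eventually a strict bijection of $[0,T]$; this is why assumption (i) (strict positivity of each $t_i$) and the choice of $T$ outside $\{S_i\}$ are both needed, but beyond this the argument contains no real obstacle.
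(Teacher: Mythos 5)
Your proposal is correct and follows essentially the same route as the paper, which proves this proposition by pointing to the argument for Proposition \ref{microconcat}: reduce to $\mathbb{D}([0,T];\mathbb{R})$ for $T$ avoiding the partial sums, build the piecewise-linear time changes $\lambda_n$ matching partial sums to partial sums, and verify the two Skorohod conditions. Your observation that condition (b) collapses to $\max_{i\leq K}|m_i^n-m_i|\to 0$ is exactly the simplification the paper has in mind when it calls the proof ``essentially the same.''
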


We are now in a position to conclude the proof of Theorem \ref{microdynamic}. 

\begin{proof}[Proof of Theorem \ref{microdynamic}]
We begin by applying the Skorohod representation theorem to the weak convergence statement in Proposition \ref{microvectorconvergence}. This gives that 
\begin{multline*}
\left(\left(Z_{n,i}^b(nt)/\sqrt{n}\right)_{i=1}^{\infty}, \left(Z_{n,i}^a(nt)/\sqrt{n}\right)_{i=1}^{\infty}, (\tau^i_{n}/n)_{i=1}^{\infty}, (m^i_n)_{i=1}^{\infty} \right) \\ \rightarrow ((X_i^b)_{i=1}^{\infty}, (X_i^a)_{i=1}^{\infty}, (\tau^i)_{i=1}^{\infty}, (m^i)_{i=1}^{\infty} )
\end{multline*} 
in $\mathbb{D}([0,\infty); \mathbb{R}^{N-1})^{\mathbb{N}} \times \mathbb{D}([0,\infty); \mathbb{R}^{N-1})^{\mathbb{N}} \times [0,\infty)^{\mathbb{N}} \times \mathbb{R}^{\mathbb{N}}$ almost surely, where the vectors represent the key objects for the dynamic microscopic and mesoscopic models respectively. By Proposition \ref{microconcat}, we clearly have that $((X_i^a)_{i=1}^{\infty}, (\tau^i)_{i=1}^{\infty})$ and $((X_i^b)_{i=1}^{\infty}, (\tau^i)_{i=1}^{\infty})$ are continuity points of the map $g$ almost surely. Similarly, by Proposition \ref{microprice}, $((\tau^i)_{i=1}^{\infty}, (m^i)_{i=1}^{\infty})$ is a continuity point for the map $h$ almost surely. The result follows.   
\end{proof}

\subsection{Proof of Theorem \ref{mesostatic}}

As in the case of the proof of Theorem \ref{microstatic} we notice that the static mesoscopic systems decouple into two independent problems on either side of the mid. We therefore focus once more on proving the convergence on one side of the mid and, without loss of generality, choose to prove convergence of the bid side. The proof relies on an adaptation of Theorem 2.1 in T.Zhang \cite{Z}.

\begin{thm}[Theorem 2.1, \cite{Z}]\label{refconverge}
Let $(u, \eta)$ be a solution of the reflected stochastic heat equation (\ref{refSPDE}) with respect to a given white noise $W$, with initial data $u_0 \in C_0((0,1))^+$. Suppose that $\sigma$ and $h$ are Lipschitz in both variables and have linear growth in the second variable. For $n \geq 1$, let $(W^{n,k})_{k=1}^{n-1}$ be the independent family of Brownian motions given by 
\begin{equation*}
W^{n,k}_t := \sqrt{n}\left[W\left(t,\frac{k+1}{n}\right)-W\left(t, \frac{k}{n}\right) \right],
\end{equation*}
and let $u^{n}$ be the solution of the system of reflected SDEs 
\begin{equation*}\label{reflectedSDE2}
\textrm{d}u_t^{n,k}= \alpha n^2 \left( u_t^{n,k+1} +u_t^{n,k-1} -2u_t^{n,k} \right) \textrm{dt}+ h\left(k/n,u_t^{n,k}\right) \textrm{dt} + \sqrt{n}\sigma\left(k/n,u_t^{k,n}\right) \textrm{d}W_t^{n,k}+ \textrm{d}\eta_t^{n,k},
\end{equation*}
for $k=1,...,n-1$ with $u^{n,0}=u^{n,n}=0$ and initial data $u_0^{n} \in (\mathbb{R}^+)^{n-1}$. For each $t \geq 0$, define the function $u^n(t,x)$ by setting $u^n(t, \frac{k}{n}):=u^{n,k}_t$ and linearly interpolating between these points. Suppose that $$\sup\limits_{x \in [0,1]}|u^n(0,x)-u_0(x)| \rightarrow 0.$$ Then for every $p \geq 1$ and every $t \in[0,T]$
\begin{equation*}
\lim\limits_{n \rightarrow \infty} \mathbb{E}\left[\sup\limits_{t \in [0,T]} \sup\limits_{x \in[0,1]} |u^n(t,x)-u(t,x)|^p \right]= 0.
\end{equation*}
\end{thm}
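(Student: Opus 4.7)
The plan is to reduce Theorem \ref{mesostatic} to a direct invocation of Theorem \ref{refconverge} after the appropriate space--time rescaling. As in the proof of Theorem \ref{microstatic}, the driving Brownian motions for the bid and ask sides of the mesoscopic system are independent, so $X_N^b$ and $X_N^a$ evolve as two independent reflected SDE systems; it therefore suffices to prove marginal convergence of $Q_N(X_N^b(N^2\cdot))$ in $C([0,\infty);C_0((0,1)))$, with the ask side handled identically and joint convergence following from independence in both the prelimit and the limit.

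The heart of the argument is a scaling identification. Define $v_N^{b,i}(t):=X_N^{b,i}(N^2 t)/\sqrt{N}$ for $i=0,1,\dots,N$, so that the piecewise-linear interpolant of $\{v_N^{b,i}(t)\}_{i=0}^N$ on $[0,1]$ is, by construction, $Q_N(X_N^b(N^2 t))$. Substituting the Section~3.1 relations $h_{b,m}^N(i,u)=N^{-3/2}h_{b,m}(i/N,u/\sqrt{N})$ and $\sigma_{b,m}^N(i,u)=\sigma_{b,m}(i/N,u/\sqrt{N})$ into the reflected SDE system for the mesoscopic model provided by Theorem \ref{microstatic}, performing the time change $t\mapsto N^2 t$ (which multiplies drifts by $N^2$ and introduces standard Brownian motions $\tilde W^{b,i,N}_t:=W^{b,i}_{N^2 t}/N$ carrying an extra factor of $N$ in the diffusion), and finally dividing by $\sqrt{N}$, an elementary calculation gives
\begin{equation*}
dv_N^{b,i}(t)=\alpha_b N^2\bigl(v_N^{b,i+1}+v_N^{b,i-1}-2v_N^{b,i}\bigr)\,dt+h_{b,m}(i/N,v_N^{b,i})\,dt+\sqrt{N}\,\sigma_{b,m}(i/N,v_N^{b,i})\,d\tilde W_t^{b,i,N}+d\tilde\eta_t^{b,i,N},
\end{equation*}
with $v_N^{b,0}=v_N^{b,N}=0$ and reflection measures $\tilde\eta^{b,i,N}$ forcing $v_N^{b,i}\geq 0$. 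This is precisely the discrete reflected SDE scheme appearing in Theorem \ref{refconverge} with $\alpha=\alpha_b$, $h=h_{b,m}$, $\sigma=\sigma_{b,m}$, $n=N$, and the Lipschitz hypothesis imposed on $h_{b,m},\sigma_{b,m}$ in Section~3.1 matches Zhang's.

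To close the argument I would couple the prelimit systems to a single space--time white noise. Choose $W^b$ once and for all on $[0,\infty)\times[0,1]$ and realize $\tilde W_t^{b,i,N}=\sqrt{N}\bigl[W^b(t,(i+1)/N)-W^b(t,i/N)\bigr]$; since only the law of $v_N^b$ matters, this replacement is harmless. Applying the Skorohod representation theorem to $Q_N(X_N^b(0))\implies v_0^b$ realizes $Q_N(X_N^b(0))\to v_0^b$ almost surely in $C_0((0,1))^+$ on a probability space one may take independent of $W^b$. Let $v^b$ be the unique strong solution (existence from \cite{DP}, uniqueness from \cite{XZ}) of the limiting reflected SPDE driven by $W^b$ with initial data $v_0^b$. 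Conditioning on the initial data and applying Theorem \ref{refconverge} pathwise-in-$\omega$ then yields, for every $T>0$ and $p\geq 1$,
\begin{equation*}
\mathbb{E}\Bigl[\sup_{t\in[0,T]}\sup_{x\in[0,1]}|Q_N(X_N^b(N^2 t))(x)-v^b(t,x)|^p\Bigr]\to 0,
\end{equation*}
hence convergence in probability, and hence in law, in $C([0,T];C_0((0,1)))$ for every $T>0$. Pairing bid and ask gives the joint statement. The substantive mathematical content is imported from Zhang; the main work is the notation-heavy rescaling identification above, together with the standard but delicate bookkeeping to upgrade Zhang's deterministic-initial-data theorem to our random initial data via Skorohod and a single coupling white noise, which is the principal technical obstacle.
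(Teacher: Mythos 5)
You have not proved the statement in question. Theorem \ref{refconverge} is Zhang's lattice-approximation theorem (Theorem 2.1 of \cite{Z}), which the paper imports by citation and does not prove; your write-up instead takes this theorem as a black box and uses it to deduce Theorem \ref{mesostatic}. A genuine proof of Theorem \ref{refconverge} would have to reproduce Zhang's argument: construct the discrete reflected system (e.g.\ by penalization or a discrete obstacle/Skorohod problem), establish uniform moment and modulus-of-continuity estimates for $u^n$ and for the discrete reflection terms $\eta^{n,k}$, pass to the limit in the weak formulation of Definition \ref{weaksolnref}, and identify the limit with the unique solution of (\ref{refSPDE}) via the minimality characterization of the reflection measure. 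None of this appears in your proposal, so as an answer to the stated problem it is a misdirected target rather than a fixable gap.

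Read instead as a proof of Theorem \ref{mesostatic}, your argument is essentially the paper's: the decoupling of bid and ask, the identification $v^{N,b,i}_t = X_N^{b,i}(N^2t)/\sqrt{N}$ together with $\tilde W^{N,b,i}_t = W^{N,b,i}_{N^2 t}/N$, and the reduction to Zhang's scheme are exactly equation (\ref{reflectedSDEs}) in Section A.3. One caveat: your claimed conclusion $\mathbb{E}[\sup_{t,x}|Q_N(X_N^b(N^2t)) - v^b|^p]\to 0$ for \emph{random} initial data does not follow from applying Theorem \ref{refconverge} ``pathwise-in-$\omega$'' --- the theorem gives convergence of a conditional expectation for each realization of the initial data, and integrating this out requires uniform integrability, which you have not established. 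The paper avoids this in Corollary \ref{ref_converge} by testing against bounded continuous functionals and invoking dominated convergence, which yields only convergence in law; since that is all Theorem \ref{mesostatic} asserts, you should weaken your intermediate claim accordingly.
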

The following corollary states that, as a consequence of Theorem \ref{refconverge}, we have convergence in law of the SDE system to the reflected SPDE in $C([0,\infty) ; C_0((0,1)))$ in the case where we have random initial data converging in law. The space $C([0,\infty) ; C_0((0,1)))$ is equipped with the topology of uniform convergence on compact sets in the time variable, so $f_n \rightarrow f$ in $C([0,\infty) ; C_0((0,1)))$ if and only if $f_n \rightarrow f$ in $C([0,T] ; C_0((0,1)))$ for every $T>0$. 
\begin{cor}\label{ref_converge}
Let $u$ be the solution of the reflected stochastic heat equation (\ref{refSPDE}) with initial data given by the law $\mu \in \mathscr{M}(C_0((0,1))^+)$. Suppose that $\sigma$ and $h$ are Lipschitz in both variables and have linear growth in the second variable. For $n \geq 1$, let $(W^{n,k})_{k=1}^{n-1}$ be an independent family of Brownian motions and let $u^{n}$ be the solution of the system of reflected SDEs 
\begin{equation*}\label{reflectedSDE2a}
\textrm{d}u_t^{n,k}= \alpha n^2 \left( u_t^{n,k+1} +u_t^{n,k-1} -2u_t^{n,k} \right) \textrm{dt}+ h\left(k/n,u_t^{n,k}\right) \textrm{dt} + \sqrt{n}\sigma\left(k/n,u_t^{k,n}\right) \textrm{d}W_t^{n,k}+ \textrm{d}\eta_t^{n,k},
\end{equation*}
for $k=1,...,n-1$ with $u^{n,0}=u^{n,n}=0$ and initial data given by the law $\nu_n \in \mathscr{M}((\mathbb{R}^+)^{n-1})$. For each $t \geq 0$, define the function $u^n(t,x)$ by setting $u^n(t,0) = 0$, $u^n(t,1) = 0$ and $u^n(t, \frac{k}{n}):=u^{n,k}_t$ for $k = 1,...,n-1$, and linearly interpolating between these points. Suppose that $(\nu_n \circ (\sqrt{n}Q_n)^{-1}) \implies \mu$ in law in $\mathscr{M}(C_0((0,1)))$. Then $u^n \implies u$ in law in $\mathscr{M}(C([0,\infty); C_0((0,1))))$.
\end{cor}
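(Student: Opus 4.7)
The plan is to reduce Corollary \ref{ref_converge} to the deterministic-initial-data setting of Theorem \ref{refconverge} via a Skorokhod coupling. By the Skorokhod representation theorem applied to the hypothesis $\nu_n \circ (\sqrt{n}Q_n)^{-1} \implies \mu$, I realise the initial profiles on a single probability space $(\tilde\Omega, \tilde{\mathscr F}, \tilde{\mathbb P})$ carrying $C_0((0,1))^+$-valued random variables $\tilde u_0^n$ with law $\nu_n \circ (\sqrt{n}Q_n)^{-1}$ and $\tilde u_0$ with law $\mu$ such that $\tilde u_0^n \to \tilde u_0$ almost surely in supremum norm. On a further independent factor of this space I place a space-time white noise $W$ and define Brownian motions $W^{n,k}$ by the increment formula from Theorem \ref{refconverge}. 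On the enlarged space I construct the SDE solutions $\tilde u^n$ driven by $W^{n,k}$ with initial data obtained by restricting $\tilde u_0^n$ to the grid, together with the reflected SPDE solution $\tilde u$ driven by $W$ with initial datum $\tilde u_0$. By construction $\tilde u^n$ has the law of $u^n$ from the corollary and $\tilde u$ has the intended limit law.

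The second step is to apply Theorem \ref{refconverge} conditionally on the $\sigma$-algebra $\mathscr{G}$ generated by the initial data, which is independent of $W$. On the $\tilde{\mathbb P}$-almost sure event $\Omega_0 = \{\tilde u_0^n \to \tilde u_0 \text{ in } C_0((0,1))\}$, for each fixed $\omega \in \Omega_0$ the realised sequence of initial profiles satisfies the hypothesis of Theorem \ref{refconverge}, so
\begin{equation*}
\tilde{\mathbb E}\!\left[ \sup_{t \in [0,T]} \sup_{x \in [0,1]} |\tilde u^n(t,x) - \tilde u(t,x)|^p \,\bigg|\, \mathscr{G}\right]\!(\omega) \longrightarrow 0
\end{equation*}
for every $T > 0$ and $p \geq 1$. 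Conditional Markov then gives $\tilde{\mathbb P}(\sup_{t \in [0,T]} \|\tilde u^n(t,\cdot) - \tilde u(t,\cdot)\|_\infty > \epsilon \mid \mathscr{G})(\omega) \to 0$ almost surely, and since these conditional probabilities are bounded by $1$, bounded convergence yields $\tilde u^n \to \tilde u$ in probability in $C([0,T]; C_0((0,1)))$ for each $T > 0$. Because the topology on $C([0,\infty); C_0((0,1)))$ is uniform convergence on compact time intervals, this implies the required weak convergence on the full path space.

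The main technical point is the conditional application of Theorem \ref{refconverge}: the theorem is stated for a deterministic initial profile, so the conditioning requires that the $\tilde{\mathbb P}$-null set produced by the theorem can be chosen independently of the realisation of the initial data. This is handled by noting that $\tilde u^n$ and $\tilde u$ can be constructed as jointly measurable functions of (initial data, driving noise) on the product space, so the $L^p$-convergence in \cite{Z} holds on a universal full-measure event parametrised by the noise. Should a reader prefer to avoid this conditioning argument entirely, an alternative is to prove tightness of $\{\tilde u^n\}$ in $C([0,T]; C_0((0,1)))$ directly from the uniform moment bounds of \cite{Z} (which depend only on an a.s. bounded function of the initial data) and then identify any subsequential limit using the pathwise uniqueness of the reflected SPDE proved by Xu and Zhang \cite{XZ}; this is the step I expect to require the most care.
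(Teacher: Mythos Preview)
Your proposal is correct and follows essentially the same approach as the paper: Skorokhod-represent the convergence of initial laws, invoke Theorem~\ref{refconverge} for each realised (deterministic) initial profile, and pass to the limit by dominated/bounded convergence. The only cosmetic difference is that the paper disintegrates $\mathbb{E}^{\mu_n}[f(u^n)]$ over the initial condition via the functions $g_n(v):=\mathbb{E}[f(u^n)\mid u^n(0,\cdot)=F_n(v)]$ and $g(v):=\mathbb{E}[f(u)\mid u(0,\cdot)=v]$ and then applies DCT to $g_n(u_0^n)\to g(u_0)$, which sidesteps the conditional-null-set issue you flag without needing to couple the driving noises on a common space.
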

\begin{proof}
In the case where the initial data are deterministic, convergence in $\mathscr{M}(C([0,\infty) ; C_0((0,1)) )$ follows from convergence in $L^p(\Omega ; C([0,T] \times [0,1]))$ for every $T>0$. Turning to the case of random initial data, let $f \in C_b(C([0,\infty) ; C_0((0,1)))$. Define $\mu_n = \nu_n \circ (\sqrt{n} Q_n)^{-1}$. We want to prove that 
\begin{equation*}
\mathbb{E}^{\mu^n} \left[ f(u^n) \right] \rightarrow \mathbb{E}^{\mu} \left[ f(u) \right].
\end{equation*}
Let $F_n :C_0((0,1)) \rightarrow C_0((0,1))$ such that $F_n(u)(0) = F_n(u)(1) = 0$, $F_n(u)(i/n) = u(i/n)$ for $i = 1,...,n-1$, and $F_n(u)$ is linear in the intervals $[i/n, (i+1)/n]$ for $i = 0,1,...,n-1$. For $n \geq 1$ and $v \in C_0((0,1))$, define 
\begin{equation*}
g_n(v) := \mathbb{E}\left[ f(u^n) \; | \; u^n(0,\cdot) = F_n(v) \right].
\end{equation*}
Similarly, for $v \in C_0((0,1))$, define
\begin{equation*}
g(v) := \mathbb{E}\left[ f(u) \; | \; u(0,\cdot) = v \right].
\end{equation*}
Then we have that
\begin{equation*}
\mathbb{E}^{\mu_n} \left[ f(u^n) \right] = \int_{C_0((0,1))} g_n(v) \; \mu_n(\textrm{d} v), 
\end{equation*}
and 
\begin{equation*}
\mathbb{E}^{\mu} \left[ f(u) \right] = \int_{C_0((0,1))} g(v) \; \mu(\textrm{d}v), 
\end{equation*}
By the Skorohod representation theorem, we can realise the convergence of $\mu_n$ to $\mu$ with the random variables $u_0^n$ and $u_0$ on a common probability space $(\Omega, \mathscr{F}, \mathbb{P})$, so that $u_0^n \rightarrow u_0$ in $C_0((0,1))$ $\mathbb{P}$-almost surely. By the deterministic initial data case, we then have 
\begin{equation*}
g_n(u_0^n) \rightarrow g(u_0)
\end{equation*}
$\mathbb{P}$- almost surely. The result then follows by the DCT.
\end{proof}

\begin{proof}[Proof of Theorem \ref{mesostatic}]
Recall that the dynamics for the bid side of the $N^{\textrm{th}}$ static mesoscopic model are given by 
$$\textrm{d}X_N^{b,i}(t)= \alpha_b(X_N^{b, i+1}(t)+X_N^{b, i-1}(t)-2X_N^{b, i}(t)) \textrm{d}t+h_{b,m}^N(i,X_N^{b, i}(t)) \textrm{d}t + \sigma_{b,m}^N(i,X_N^{b, i}(t)) \textrm{d}W_t^{N,b,i}+ \textrm{d}\eta^{N,b,i}_t$$ for $i=1,...,N-1$ with $X_N^{b,0}=X_N^{b,N}=0$. It follows that 
\begin{multline}\label{reflectedSDEs}
\textrm{d}v_t^{N,b,i}= \alpha_b N^2 \left( v_t^{N,i+1} +v_t^{N,i-1} -2v_t^{N,i} \right) \textrm{d}t+ h_{b,m}\left(k/N,v_t^{N,i}\right) \textrm{d}t \\ + \sqrt{N}\sigma_{b,m}\left(i/N,v_t^{i,N}\right) \textrm{d}\tilde{W}_t^{N,b,i}+ \textrm{d}\eta_t^{N,b,i},
\end{multline} where we define $v_t^{N,b,i}:=\frac{1}{\sqrt{N}}X_N^{b,i}(N^2t)$ and $\tilde{W}_t^{N,b,i} = \frac{1}{N} W_{N^2 t}^{N,b,i}.$ The result then follows by an application of Corollary \ref{ref_converge}.
\end{proof}

\subsection{Proof of Theorem \ref{mesodynamic} }

As in the proof of Theorem \ref{microdynamic}, we build towards a proof by first presenting a series of smaller results. The arguments here reflect those made when upgrading the proof of Theorem \ref{microstatic} to a proof of Theorem \ref{microdynamic}. We therefore refer to the proof of Theorem \ref{microdynamic} to illustrate how to prove the key results in this section. In the same way as the proof of Theorem \ref{microdynamic}, the main part of the proof is showing that 
\begin{multline*}
((Q_N((X_{N,i}^b(N^2t, \cdot)))_{i=1}^{\infty},(Q_N((X_{N,i}^a(N^2t, \cdot)))_{i=1}^{\infty}, (\tau_{N,u}^i/N^2)_{i=1}^{\infty},(\tau_{N,d}^i/N^2)_{i=1}^{\infty}, (m_N^i)_{i=1}^{\infty}) \\ \implies ((u_i^b(t,\cdot))_{i=1}^{\infty},(u_i^a(t,\cdot))_{i=1}^{\infty}, (\tau_u^i)_{i=1}^{\infty}, (\tau_{d}^i)_{i=1}^{\infty}, (m^i)_{i=1}^{\infty})
\end{multline*}
in law in $\mathbb{D}([0,\infty) ;C_0((0,1)))^{\mathbb{N}} \times \mathbb{D}([0,\infty) ;C_0((0,1)))^{\mathbb{N}} \times [0,\infty)^{\mathbb{N}} \times [0,\infty)^{\mathbb{N}} \times \mathbb{R}^{\mathbb{N}}$. Once again, for a metric space $M$ we equip $M^{\mathbb{N}}$ with the topology of pointwise convergence, which is metrizable.

\begin{prop}\label{mesotheta}
Fix some $m \in \mathbb{R}$. Let $w_N:[0, \infty)  \rightarrow \mathbb{R}^{N-1}$ and $w: [0, \infty) \rightarrow C_0((0,1))$ be such that, for every $T>0$ $$\sup\limits_{t \in [0,T]} |Q_N(w_N(t))-w(t)| \rightarrow 0.$$ Then we have that, for every $T>0$,  $$\sup\limits_{t \in [0,T]} \left|N^2 \theta^N_{b,Nm}(w_N(t))- \theta_{b,m}(w(t)) \right|\rightarrow 0,$$ and $$\sup\limits_{t \in [0,T]} \left|N^2 \theta^N_{a,Nm}(w_N(t))- \theta_{a,m}(w(t)) \right| \rightarrow 0.$$ 
\end{prop}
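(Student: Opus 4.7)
The plan is to recognise that Proposition~\ref{mesotheta} is the mesoscopic-to-macroscopic analogue of Proposition~\ref{microtheta} and that its proof is essentially an immediate pointwise-in-$t$ application of assumption (i) of Section~3.2 (the quantitative closeness of $N^{2}\theta^{N}_{k,m}$ to $\theta_{k,m}$ in terms of the function $r$), followed by taking a supremum in time.

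Concretely, I would first invoke assumption (i), which gives, for every $t\ge 0$ and every $k\in\{b,a\}$ (or equivalently $\{u,d\}$, depending on the intended indexing),
\begin{equation*}
\bigl|\,N^{2}\theta^{N}_{k,m}(w_{N}(t)) - \theta_{k,m}(w(t))\,\bigr|
\;\le\; r\!\bigl(\|Q_{N}(w_{N}(t))-w(t)\|\bigr),
\end{equation*}
where $r$ is some function with $r(x)\to 0$ as $x\to 0$. This bound holds pointwise in $t$, without using any regularity of $w_{N}$ or $w$.

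Next, to pass to the uniform bound on $[0,T]$, I would replace $r$ by its non-decreasing envelope $\tilde r(x):=\sup_{y\le x}r(y)$. One checks that $\tilde r$ is non-decreasing, dominates $r$, and still satisfies $\tilde r(x)\to 0$ as $x\to 0$ (this last point uses that $r\ge 0$ and the sup is taken over a shrinking set). Monotonicity then lets me pull the time-supremum inside:
\begin{equation*}
\sup_{t\in[0,T]}\bigl|\,N^{2}\theta^{N}_{k,m}(w_{N}(t)) - \theta_{k,m}(w(t))\,\bigr|
\;\le\; \tilde r\!\left(\sup_{t\in[0,T]}\|Q_{N}(w_{N}(t))-w(t)\|\right).
\end{equation*}

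Finally, the hypothesis of the proposition is precisely that the argument on the right-hand side converges to $0$ as $N\to\infty$. Continuity of $\tilde r$ at $0$ (with value $0$) then yields the conclusion for both the upward and downward rates, completing the proof. There is no serious obstacle here; the only mildly non-trivial observation is the reduction to a monotone modulus $\tilde r$ so that the supremum can be exchanged with $r$, and this is a routine device.
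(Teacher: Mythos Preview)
Your proposal is correct and follows exactly the paper's approach: the paper's own proof is the single sentence ``This is a direct application of assumption (i) in Section 3.6,'' and you have simply unpacked that application. Your extra step of passing to the monotone envelope $\tilde r$ to justify exchanging the time supremum with $r$ is a clean detail the paper leaves implicit, but it does not change the route taken.
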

\begin{proof}
This is a direct application of assumption (i) in Section 3.6.
\end{proof}

\begin{prop}\label{mesoregen}
Suppose $(X_N^1,X_N^2)$ is a sequence in $\mathscr{M}(\mathbb{R}^{N-1} \times \mathbb{R}^{N-1})$ such that $$(Q_N(X_N^1),Q_N(X_N^2)) \implies (u^1,u^2)$$ in $\mathscr{M}(C_0((0,1)) \times C_0((0,1)))$. Define $\tilde{R}^N: \mathscr{M}(\mathbb{R}^{N-1} \times \mathbb{R}^{N-1}) \times \{u,d\} \rightarrow \mathscr{M}(\mathbb{R}^{N-1} \times \mathbb{R}^{N-1})$ such that, for $(\mu, k) \in \mathscr{M}(\mathbb{R}^{N-1} \times \mathbb{R}^{N-1}) \times \{u, d \}$,
\begin{equation*}
\tilde{R}^N(\mu, k)(A):= \int_{\mathbb{R}^{N-1} \times \mathbb{R}^{N-1}}  R^N(x_1, x_2, k)(A) \; \mu(\textrm{d}x_1, \textrm{d}x_2).
\end{equation*}
Similarly, define $\tilde{R}: \mathscr{M}(C_0((0,1)) \times C_0((0,1)) \times \{u,d\} \rightarrow \mathscr{M}(C_0((0,1)) \times C_0((0,1)) )$ such that for $(\nu, k) \in \mathscr{M}(C_0((0,1)) \times C_0((0,1))) \times \{u, d \}$,
\begin{equation*}
\tilde{R}(\nu, k)(B):= \int_{C_0((0,1)) \times C_0((0,1))} R(u^1,u^2,k)(B) \; \nu(\textrm{d}x_1, \textrm{d}x_2).
\end{equation*}
Then for $k \in \{ u, d \}$, $$\tilde{R}^N((X^1,X^2), k)\circ Q_N^{-1} \implies \tilde{R}((u^1,u^2), k)$$ in $\mathscr{M}(C_0((0,1)) \times C_0((0,1)))$.
\end{prop}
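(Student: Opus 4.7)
The plan is to mirror the proof of Proposition~\ref{microregen} in the current infinite-dimensional setting, exploiting the fact that $C_0((0,1)) \times C_0((0,1))$ is Polish so that Skorohod representation, the Portmanteau theorem, and dominated convergence all remain available. First I would apply the Skorohod representation theorem to the weak convergence $(Q_N(X_N^1), Q_N(X_N^2)) \implies (u^1, u^2)$, obtaining a common probability space $(\Omega, \mathscr{F}, \mathbb{P})$ and random elements $\xi_N, \xi$ of $C_0((0,1)) \times C_0((0,1))$ with the prescribed laws such that $\xi_N \to \xi$ almost surely. Since $\xi_N$ is almost surely piecewise linear on the uniform grid $\{i/N\}_{i=0}^N$ with value zero at the endpoints, grid-point evaluation yields $\mathbb{R}^{N-1}$-valued random variables $(\tilde X_N^1, \tilde X_N^2)$ with joint law $(X_N^1, X_N^2)$ such that $\tilde Q_N((\tilde X_N^1, \tilde X_N^2)) = \xi_N \to \xi$ almost surely.

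Applying the macroscopic regeneration hypothesis (assumption~(ii) of the preceding subsection) pointwise in $\omega$ on the full-measure set where $\xi_N \to \xi$, we obtain
\begin{equation*}
R^N(\tilde X_N^1, \tilde X_N^2, k) \circ \tilde Q_N^{-1} \implies R(\xi^1, \xi^2, k)
\end{equation*}
weakly in $\mathscr{M}(C_0((0,1)) \times C_0((0,1)))$, $\mathbb{P}$-almost surely. Now fix any Borel continuity set $A$ of the measure $\tilde R((u^1, u^2), k)$. By definition of $\tilde R$,
\begin{equation*}
\mathbb{E}\bigl[R(\xi^1, \xi^2, k)(\partial A)\bigr] = \tilde R((u^1, u^2), k)(\partial A) = 0,
\end{equation*}
so $A$ is $\mathbb{P}$-almost surely a continuity set for the random measure $R(\xi^1, \xi^2, k)$. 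The Portmanteau theorem then yields $R^N(\tilde X_N^1, \tilde X_N^2, k) \circ \tilde Q_N^{-1}(A) \to R(\xi^1, \xi^2, k)(A)$ almost surely, and dominated convergence (both sides lie in $[0,1]$) lets us take expectations to obtain $\tilde R^N((X_N^1, X_N^2), k) \circ \tilde Q_N^{-1}(A) \to \tilde R((u^1, u^2), k)(A)$, which is the desired weak convergence in $\mathscr{M}(C_0((0,1)) \times C_0((0,1)))$.

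The only genuinely new issue compared with Proposition~\ref{microregen} is the realisation of $\xi_N$ as the image under $\tilde Q_N$ of an $\mathbb{R}^{N-1}$-valued random variable, so that $R^N$ can be evaluated; this is resolved by noting that $\tilde Q_N$ is a linear bijection onto the subspace of piecewise-linear functions in $C_0((0,1)) \times C_0((0,1))$ on the uniform grid, on which grid-point evaluation provides an explicit measurable inverse. With this in place, the remainder is a direct transcription of the argument for Proposition~\ref{microregen}, and I foresee no substantive obstacle.
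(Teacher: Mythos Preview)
Your proposal is correct and follows exactly the approach the paper intends: the paper's proof of Proposition~\ref{mesoregen} simply reads ``This is essentially the same as the proof of Proposition~\ref{microregen},'' and you have spelled out precisely that transcription, including the one extra point (inverting $\tilde Q_N$ via grid evaluation on the piecewise-linear range) that the infinite-dimensional setting requires.
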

\begin{proof}
This is essentially the same as the proof of Proposition \ref{microregen}.
\end{proof}

\begin{thm}\label{mesodynind}
Suppose that $$(Q_N(X_{N,1}^b(0)),Q_N(X_{N,1}^a(0))) \implies (u^b(0),u^a(0))$$ in law in $C_0((0,1))^+ \times C_0((0,1))^+$. Let $(X_N^b(t), X_N^a(t),m_N(t))$ be dynamic microscopic models with initial data $(X_{N,1}^b(0),X_{N,1}^a(0),m^1)$, and let $(u^b(t),u^a(t),m(t))$ be the dynamic macroscopic model with initial data $(u^b(0),u^a(0),m^1)$. Then 
\begin{multline}
((Q_N(X_{N,i}^b(N^2t)))_{i=1}^{\infty},(Q_N(X_{N,i}^a(N^2t)))_{i=1}^{\infty}, (\tau^i_{N,b}/N^2)_{i=1}^{\infty}, (\tau^i_{N,a}/N^2)_{i=1}^{\infty}, (m^i_N)_{i=1}^{\infty}) \\ \implies ((u_i^b)_{i=1}^{\infty},(u_i^a)_{i=1}^{\infty}, (\tau^i_{b})_{i=1}^{\infty}, (\tau^i_{a})_{i=1}^{\infty}, (m^i)_{i=1}^{\infty}
\end{multline} in law in $\mathbb{D}([0,\infty); C_0((0,1)))^{\mathbb{N}} \times \mathbb{D}([0,\infty) ; C_0((0,1)))^{\mathbb{N}} \times [0,\infty)^{\mathbb{N}} \times [0,\infty)^{\mathbb{N}} \times \mathbb{R}^{\mathbb{N}}$. 
\end{thm}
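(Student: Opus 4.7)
The plan is to mirror the inductive strategy used in the proof of Proposition \ref{microvectorconvergence}, with Theorem \ref{mesostatic} (and its random-initial-data version Corollary \ref{ref_converge}) playing the role that Theorem \ref{microstatic} played there, and with Propositions \ref{mesotheta} and \ref{mesoregen} replacing Propositions \ref{microtheta} and \ref{microregen}. For $M \geq 1$, the inductive claim is joint weak convergence of the truncated profile vectors
\[
(Q_N(X_{N,i}^b(N^2\cdot)))_{i=1}^M, \quad (Q_N(X_{N,i}^a(N^2 \cdot)))_{i=1}^M,
\]
together with $(\tau_{N,u}^i/N^2)_{i=1}^{M-1}$, $(\tau_{N,d}^i/N^2)_{i=1}^{M-1}$, $(m_N^i)_{i=1}^M$, to their macroscopic counterparts in the product space $\mathbb{D}([0,\infty); C_0((0,1)))^M \times \mathbb{D}([0,\infty); C_0((0,1)))^M \times [0, \infty)^{M-1} \times [0, \infty)^{M-1} \times \mathbb{R}^M$, from which the theorem follows upon passing to the pointwise-convergence limit in $M$.

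The base case $M=1$ is Corollary \ref{ref_converge} applied to the rescaled static mesoscopic profiles with initial data $(Q_N(X_{N,1}^b(0)), Q_N(X_{N,1}^a(0)))$ converging weakly to $(u^b(0), u^a(0))$. For the inductive step, assume the convergence for $M$, append the independent exponential random variables $Y_{N,u}^M, Y_{N,d}^M$, and invoke the Skorohod representation theorem on the resulting Polish product space so that the convergence becomes almost sure on a common probability space. A change of variables in the defining integral gives
\[
\tau_{N,u}^M/N^2 \;=\; \inf \left\{ t \geq 0 \;\bigg|\; \int_0^t N^2 \theta_{u,m^M_N}^N(X_{N,M}^b(N^2 r), X_{N,M}^a(N^2 r)) \, \mathrm{d}r \;\geq\; Y_{N,u}^M \right\},
\]
and Proposition \ref{mesotheta}, combined with the uniform lower bound $\theta_{u,m}, \theta_{d,m} \geq c$ and a stopping-time continuity argument in the style of Proposition \ref{microstopping}, yields $\tau_{N,u}^M/N^2 \to \tau_u^M$ and $\tau_{N,d}^M/N^2 \to \tau_d^M$ almost surely. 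Since $\tau_u^M \neq \tau_d^M$ almost surely and the mids live in the discrete grid $m^1 + \epsilon \mathbb{Z}$, it follows that $m_N^{M+1} \to m^{M+1}$ almost surely. One now conditions on continuity sets of the limiting vector, exactly as in the conditional-probability bookkeeping of Proposition \ref{microvectorconvergence}: Proposition \ref{mesoregen} together with continuity of the limit profiles $u_M^b, u_M^a$ shows that the conditional law of the initial profile of the $(M+1)$th static mesoscopic model, namely $R^N(X_{N,M}^b(\tau_N^M), X_{N,M}^a(\tau_N^M),k) \circ \tilde{Q}_N^{-1}$, converges weakly to $R(u_M^b(\tau^M), u_M^a(\tau^M), k)$. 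A second application of Corollary \ref{ref_converge} with this random initial data then delivers joint convergence of $(Q_N(X_{N,M+1}^b(N^2 \cdot)), Q_N(X_{N,M+1}^a(N^2 \cdot)))$ to $(u_{M+1}^b, u_{M+1}^a)$.

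The main obstacle is the conditioning step in the function-valued setting. In particular, the initial profiles produced by the regenerative function $R$, which as noted after (\ref{reinitialise}) need not satisfy the Dirichlet boundary conditions imposed in Theorem \ref{mesostatic}, must nonetheless yield well-defined reflected SPDE solutions so that Corollary \ref{ref_converge} can legitimately be invoked at each step. This is handled by appealing to the extended existence theory mentioned in the paper, which provides unique solutions for general positive continuous initial data and shows that these solutions lie in $C_0((0,1))^+$ at all strictly positive times, making the iteration well defined. A secondary technical point, inherited from the microscopic case, is that the conditional-probability decomposition must be phrased in terms of continuity sets of the limit in the Skorohod topology on $\mathbb{D}([0,\infty); C_0((0,1)))$; since this space is Polish, the same argument as in Proposition \ref{microvectorconvergence} applies essentially verbatim, with $\mathbb{R}^{N-1}$-valued profiles replaced by $C_0((0,1))$-valued ones.
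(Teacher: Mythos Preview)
Your proposal is correct and follows essentially the same approach as the paper's own proof, which explicitly mirrors Proposition \ref{microvectorconvergence} by induction on $M$, invoking Skorohod representation together with Propositions \ref{mesotheta}, \ref{microstopping} and \ref{mesoregen}, and the static convergence Theorem \ref{mesostatic}. Your discussion of the Dirichlet boundary condition issue for regenerated profiles is a legitimate technical caveat that the paper's proof sketch does not flag at this point (it is noted only informally after \eqref{reinitialise}), but it does not alter the strategy.
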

\begin{proof}
This follows the proof of Theorem \ref{microvectorconvergence} and we refer to this for the details. We once again use an inductive argument, and show that for every $M \geq 1$, 
\begin{multline}
((Q_N(X_{N,i}^b(N^2t)))_{i=1}^{M},(Q_N(X_{N,i}^a(N^2t)))_{i=1}^{M}, (\tau^i_{N,b}/N^2)_{i=1}^{M-1}, (\tau^i_{N,a}/N^2)_{i=1}^{M-1}, (m^i_N)_{i=1}^{M}) \\ \implies ((u_i^b)_{i=1}^{M},(u_i^a)_{i=1}^{M}, (\tau^i_{b})_{i=1}^{M-1}, (\tau^i_{a})_{i=1}^{M-1}, (m^i)_{i=1}^{M})
\end{multline}
in law in $\mathbb{D}([0,\infty); C_0((0,1)))^{\mathbb{N}} \times \mathbb{D}([0,\infty); C_0((0,1)))^{\mathbb{N}} \times [0,\infty)^{\mathbb{N}} \times [0,\infty)^{\mathbb{N}} \times \mathbb{R}^{\mathbb{N}}$, from which the result follows. Given the inductive hypothesis, we use the Skorohod representation theorem and Propositions \ref{mesotheta} and \ref{microstopping} to obtain convergence of the next rescaled stopping times in the sequence. It then follows that the next boundary position in the sequence also converges. By conditioning as in the proof of Theorem \ref{microdynind} and making use of Proposition \ref{mesoregen}, we can obtain convergence of the process up to the next stopping time, concluding the inductive argument.
\end{proof}
\begin{prop}\label{mesoconcat}
Let $g : \mathbb{D}([0,\infty) ; C_0((0,1)))^{\mathbb{N}} \times [0, \infty)^{\mathbb{N}} \rightarrow \mathbb{D}([0,\infty) ; C_0((0,1)))$ such that 
$$g((u_i)_{i=1}^{\infty}, (t_i)_{i=1}^{\infty})(t):= \sum\limits_{j=1}^{\infty} u_j\left( t- \sum\limits_{i=1}^{j-1} t_i \right) \mathbbm{1}_{ \left\{\sum\limits_{i=1}^{j-1} t_i \leq t < \sum\limits_{i=1}^j t_i \right\}}.$$
Suppose that 
\begin{enumerate}[(i)]
\item $t_i>0$ for every $i$.
\item $\sum\limits_{i = 1}^{\infty} t_i = \infty.$
\item $u_i$ is continuous for every $i$
\end{enumerate}
Then $g$ is continuous at the point $((u_i)_{i=1}^{\infty}, (t_i)_{i=1}^{\infty})$.
\end{prop}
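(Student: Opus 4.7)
The plan is to follow exactly the proof of Proposition \ref{microconcat}, substituting the sup-norm on $C_0((0,1))$ for the Euclidean norm on $\mathbb{R}^{N-1}$. The argument proceeds by reduction: restrict attention to a compact time interval; truncate the concatenation at a finite depth; and construct time-change functions that realise Skorokhod convergence of the truncated concatenation.

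Suppose $((u_i^n)_{i=1}^\infty, (t_i^n)_{i=1}^\infty) \to ((u_i)_{i=1}^\infty, (t_i)_{i=1}^\infty)$ in $\mathbb{D}([0,\infty); C_0((0,1)))^{\mathbb{N}} \times [0,\infty)^{\mathbb{N}}$. It suffices to prove convergence in $\mathbb{D}([0,T]; C_0((0,1)))$ for every $T$ at which the limit $g((u_i),(t_i))$ is continuous, equivalently $\sum_{i=1}^L t_i \neq T$ for all $L \geq 1$. Fix such a $T$; by hypothesis (ii) pick $K$ with $\sum_{i=1}^K t_i > T$, so that on $[0,T]$ only the first $K$ pairs contribute to the limit.

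I would then define continuous increasing bijections $\lambda_n : [0,T] \to [0,T]$ by setting $\lambda_n(0)=0$, $\lambda_n(T)=T$, $\lambda_n(\sum_{i=1}^m t_i) = \sum_{i=1}^m t_i^n$ for each $m < K$ with $\sum_{i=1}^m t_i < T$, and interpolating linearly between these nodes. For $n$ large, strict positivity of $t_i^n$ makes $\lambda_n$ a strictly increasing bijection, and $t_i^n \to t_i$ gives $\sup_{t \in [0,T]} |\lambda_n(t)-t| \to 0$, which is Skorokhod condition (a). For condition (b), the construction ensures that on the $j$th sub-interval ($j \leq K$)
\begin{equation*}
\left\| g((u_i^n),(t_i^n))(\lambda_n(t)) - g((u_i),(t_i))(t) \right\|_\infty = \left\| u_j^n\left( \lambda_n(t) - \sum_{i=1}^{j-1} t_i^n \right) - u_j\left( t - \sum_{i=1}^{j-1} t_i \right) \right\|_\infty.
\end{equation*}
Writing $s = t - \sum_{i=1}^{j-1} t_i$ and $s_n = \lambda_n(t) - \sum_{i=1}^{j-1} t_i^n$, one has $|s_n - s| \to 0$ uniformly in $t$, so by the triangle inequality the task reduces to controlling $\sup_{s \in [0,T]} \| u_j^n(s) - u_j(s) \|_\infty$ combined with the uniform continuity of $u_j$ on $[0,T]$, for each of the finitely many $j \leq K$.

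The main (essentially only) obstacle is to invoke the standard fact that Skorokhod convergence $u_j^n \to u_j$ in $\mathbb{D}([0,\infty); C_0((0,1)))$ with a continuous limit upgrades to locally uniform convergence, i.e.\ $\sup_{s \in [0,T]} \| u_j^n(s) - u_j(s) \|_\infty \to 0$. Combined with the uniform bound on $|s_n - s|$ above, this closes condition (b) and hence gives the desired continuity of $g$ at $((u_i), (t_i))$.
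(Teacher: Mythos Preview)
Your proposal is correct and follows essentially the same approach as the paper, which simply refers back to the proof of Proposition~\ref{microconcat} with the target space replaced. Your handling of condition (b) is in fact slightly more careful than the paper's version: you explicitly separate the time-shift error $|s_n - s|$ from the uniform error $\sup_s \|u_j^n(s)-u_j(s)\|_\infty$ and invoke uniform continuity of $u_j$, whereas the paper's bound in the proof of Proposition~\ref{microconcat} glosses over this point.
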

\begin{proof}
This is essentially the same as the proof of Theorem \ref{microconcat}.
\end{proof}

\begin{proof}[Proof of Theorem \ref{mesodynamic}]
We can now conclude the proof of Theorem \ref{mesodynamic}. We begin by Skorohod representing the convergence in Theorem \ref{mesodynind}. So we have that 
\begin{multline}((Q_N(X_{N,i}^a(N^2t)))_{i=1}^{\infty},(Q_N(X_{N,i}^b(N^2t)))_{i=1}^{\infty}, (\tau^i_{N,a}/N^2)_{i=1}^{\infty}, (\tau^i_{N,b}/N^2)_{i=1}^{\infty}, (m^i_N)_{i=1}^{\infty}) \\ \rightarrow ((u_i^a)_{i=1}^{\infty},(u_i^b)_{i=1}^{\infty}, (\tau^i_{a})_{i=1}^{\infty}, (\tau^i_{b})_{i=1}^{\infty}, (m^i)_{i=1}^{\infty}) 
\end{multline}  
in $\mathbb{D}([0,\infty); C_0((0,1)))^{\mathbb{N}} \times \mathbb{D}([0,\infty); C_0((0,1)))^{\mathbb{N}} \times [0,\infty)^{\mathbb{N}} \times [0, \infty)^{\mathbb{N}} \times \mathbb{R}^{\mathbb{N}}$ almost surely. By Proposition \ref{mesoconcat}, we clearly have that $((u_i^a)_{i=1}^{\infty}, (\tau^i)_{i=1}^{\infty})$ and $((u_i^b)_{i=1}^{\infty}, (\tau^i)_{i=1}^{\infty})$  are continuity points of the map $g$ almost surely. Similarly, by Proposition \ref{microprice}, $((\tau^i)_{i=1}^{\infty}, (m^i)_{i=1}^{\infty})$ is a continuity point for the map $h$ almost surely. The result follows.  
\end{proof}

\end{document}